\documentclass[11pt,a4paper]{article}
\pdfoutput=1 
\usepackage{fullpage}

\usepackage[linesnumbered,ruled,vlined]{algorithm2e}

\SetAlFnt{\small}
\SetAlCapFnt{\small}
\SetAlCapNameFnt{\small}
\SetAlCapHSkip{0pt}
\IncMargin{-\parindent}

\SetCommentSty{mycommfont}
\SetKwInput{KwData}{Input}
\SetKwInput{KwResult}{Output}
\SetKwFunction{Leaf}{isLeaf}
\SetKwFunction{Branch}{isBranching}
\SetKwFunction{AncestorBranch}{NearestBranchingAncestor}
\SetKwFunction{PartialMatching}{FindPartialMatching}
\SetKwFunction{AChildren}{ActiveChildren}
\SetKwFunction{ExtractPath}{ExtractPath}
\SetKwFunction{PruneAtBranch}{BranchPruning}
\SetKwFunction{BuildPath}{BuildPath}
\SetKwFunction{DFS}{TraverseAndPrune}
\SetKwFunction{AllChildrenArePathsToLeaves}{AllChildrenArePathsToLeaves}
\SetKwFunction{QuerySW}{QuerySW}
\SetKwProg{Fn}{}{:}{end}

\usepackage{amsmath}
\usepackage{amsfonts}
\usepackage{amssymb}
\usepackage{amsthm}
\allowdisplaybreaks
\usepackage{mathtools}
\usepackage{tikz}
\usetikzlibrary{positioning}
\usetikzlibrary{shadows,patterns}
\usetikzlibrary{fit,matrix}
\usetikzlibrary{decorations.pathreplacing}
\usepackage{enumitem}
\usepackage{float}

\usepackage{booktabs}
\usepackage{subcaption}

\usepackage[T1]{fontenc}
\usepackage[tt=false]{libertine}
\usepackage[libertine]{newtxmath}
\usepackage{hyperref}
\usepackage[svgnames]{xcolor}
\hypersetup{colorlinks={true},urlcolor={blue},linkcolor={DarkBlue},citecolor=[named]{DarkGreen}}
\usepackage[authoryear,square]{natbib}

\usepackage{doi}
\usepackage{bm}
\usepackage{bbm}
\usepackage[capitalise,nameinlink,noabbrev]{cleveref}
\usepackage{xspace}

\crefname{algocf}{Algorithm}{Algorithms}
\Crefname{algocf}{Subroutine}{Subroutines}

\usepackage{booktabs} 
\usepackage{authblk} 
\usepackage{graphicx} 

\usepackage{tcolorbox}

\newtheorem{theorem}{Theorem}[section]

\newtheorem{lemma}{Lemma}[section]
\newtheorem{conjecture}{Conjecture}

\newtheorem{observation}{Observation}[section]
\newtheorem{question}{Question}
\newtheorem{inftheorem}{Informal Theorem}

\theoremstyle{definition}
\newtheorem*{comment*}{Comment}
\newtheorem{definition}{Definition}[section]
\newtheorem{remark}{Remark}

\newcommand{\SW}{\text{\normalfont SW}}

\newcommand{\bv}{\mathbf{v}}
\newcommand{\tr}{\mathtt{tr}}
\newcommand{\Mod}[1]{\ (\mathrm{mod}\ #1)}

\usepackage{xcolor}
\usepackage{xspace}

\newcommand{\RMOWO}{\textnormal{\textsc{Rand-MoWo}}\xspace}

\newlist{condition}{enumerate}{2}
\setlist*[condition,1]{label=\arabic*,ref=\arabic*}
\crefname{conditioni}{Property}{Properties}

\author[1]{Aris Filos-Ratsikas}
\author[1]{Georgios Kalantzis}

\affil[1]{School of Informatics, University of Edinburgh}

\date{}

\title{The Distortion of Stable Matching\thanks{Aris Filos-Ratsikas was supported by the UK Engineering and Physical Sciences Research Council (EPSRC) grant EP/Y003624/1.}}

\begin{document}
\maketitle

\begin{abstract}
We initiate the study of \emph{distortion} in stable matching. Concretely, we aim to design algorithms that have limited access to the agents' cardinal preferences and compute stable matchings of high quality with respect to some aggregate objective, e.g., the social welfare. Our first result is a strong impossibility: the classic Deferred Acceptance (DA) algorithm of \citet{gale1962college}, as well as any deterministic algorithm that relies solely on ordinal information about the agents' preferences, has unbounded distortion. 

To circumvent this impossibility, we consider algorithms that either (a) use \emph{randomization} or (b) perform a small number of \emph{value queries} to the agents' cardinal preferences. In the former case, we prove that a simple randomized version of the DA algorithm achieves a distortion of $2$, and that this is optimal among all randomized stable matching algorithms. For the latter case, we prove that the same bound of $2$ can be achieved with only $1$ query per agent, and improving upon this bound requires $\Omega(\log n)$ queries per agent. We further show that this query bound is asymptotically optimal for any constant approximation: for any $\varepsilon >0$, there exists an algorithm which uses $O(\log n /\varepsilon^2)$ queries, and achieves a distortion of $1+\varepsilon$. Moreover, under natural structural restrictions on the instances of the problem, we provide improved upper bounds on the number of queries required for a $(1+\varepsilon)$-approximation.

We complement our main findings above with theoretical and empirical results on the average-case performance of stable matching algorithms, when the preferences of the agents are drawn i.i.d. from a given distribution. 
\end{abstract}


\section{Introduction}

The \emph{distortion} in social choice theory \citep{procaccia2006distortion} measures the deterioration of some aggregate social objective due to limited information about the agents' preferences, which are expressed via cardinal \emph{utilities}. Much like standard notions in algorithm design such as the approximation ratio or the competitive ratio, the distortion quantifies the extent to which an algorithm that operates under some natural restrictions can approximate the best possible outcome, which could be achieved without those restrictions. In the case of the distortion, these restrictions are not due to computational considerations or uncertainty about the future, but rather due to the cognitive limitations of the participants to accurately express their preferences on a cardinal scale.

The first variants of this setting (e.g., see \citep{boutilier2015optimal,caragiannis2017subset}) considered \emph{ordinal} algorithms, i.e., algorithms that only input preference rankings consistent with these utilities, a rather less cognitively demanding preference elicitation device. The objective of those algorithms is to achieve good approximations to the \emph{social welfare}, i.e., the sum of the agents' utilities. More recent works consider algorithms that use a mix of ordinal and cardinal information and study the tradeoffs between their efficiency and the amount (and type) of information they elicit \citep{amanatidis2021peeking,amanatidis2022few,amanatidis2024dont,ebadian2025every,filos2025utilitarian}. Over the past two decades since its inception, the distortion has been studied in a plethora of settings in the epicenter of research in theoretical computer science and artificial intelligence \citep{anshelevich2021distortion}.

Yet, in the midst of this very rich literature, there is still a fundamental setting that has evidently not received enough attention in the context of distortion, namely the \emph{stable matching problem}. Indeed, this problem is one of the most well-studied at the interface of economics and computer science, e.g., see \citep{gale1962college,gusfield1987three,gusfield1989stable,roth1992two,roth1993stable,manlove2013algorithmics}, and captures applications such as college admissions \citep{gale1962college}, placement of medical residents \citep{roth1984evolution,roth1999redesign}, and assignments of students to schools \citep{abdulkadirouglu2005new}, among many others. In this setting, two sets of agents (typically referred to as ``men'' and ``women'') have preferences over the members of the other set, and the goal is to come up with a matching without any \emph{blocking pairs}, i.e., any pairs whose members would rather be matched to each other over their assigned partners. 

The existence of a stable matching on any instance with strict preferences was established by \citet{gale1962college} via the famous \emph{Deferred Acceptance} algorithm (also known as the ``Gale-Shapley algorithm''). This algorithm is ordinal, and is known to produce a matching which is the best possible among all stable matchings for the \emph{proposing side}, and the worst possible for the other side. The question of finding the \emph{optimal stable matching}, i.e., the stable matching that maximizes the sum of the utilities that underlie the ordinal preferences of the agents, was studied in classic works, e.g., see \citep{irving1987efficient,feder1989new,roth1993stable} and \citep[Section 3.6]{gusfield1989stable}. These works proposed algorithms that compute an optimal stable matching in polynomial time, but, crucially, require full access to the agents' cardinal utilities.

In the context of distortion, and motivated by the cognitive burden of utility elicitation discussed above, it is very natural to consider the quality of stable matchings that are obtained by either purely ordinal algorithms, or algorithms that use only limited cardinal information. Concretely, we would like to answer the following general question: 

\begin{question}
What is the distortion of the Deferred Acceptance algorithm for the social welfare objective? What about other ordinal algorithms for stable matching, or algorithms that use a limited amount of information about the utilities of the agents?
\end{question}

\subsection{Our Contributions}

We consider algorithms for stable matching that operate under limited information. 
We quantify their distortion in terms of the social welfare objective, against the best possible \emph{stable} matching.\footnote{Note that our benchmark is the social welfare of the best possible stable matching rather than that of the best possible matching. It can already be inferred by the work of \citet{gale1962college} (see also \citep{anshelevich2013anarchy}) that in some cases, the best possible stable matching (even one obtained with full cardinal information) cannot provide any meaningful approximation to the social welfare of the optimal (non-stable) matching, see \cref{thm:opt-matching-benchmark-infinity}. Since we are interested in the effect of limited information (and not stability) on the social welfare objective, our benchmark is the appropriate one. \label{footnote1}}

\paragraph{Ordinal Algorithms.} We first investigate the distortion of ordinal algorithms in \cref{sec:ordinal}. Our first result of the section is a rather negative one, namely that the Deferred Acceptance algorithm, and, in fact, any ordinal algorithm for the stable matching problem, has unbounded distortion. To show that, we construct an instance that only has two stable matchings, the \emph{man-optimal} matching and the \emph{woman-optimal} matching. Any ordinal algorithm must select one of the two without knowledge of the underlying utilities; we can then manipulate the utilities to make the other one arbitrarily larger in terms of the social welfare. 

Driven by this strong impossibility, we then consider \emph{randomized} algorithms. It is well documented in the literature that randomization can help significantly in achieving better distortion bounds, e.g., see \citep{boutilier2015optimal,ebadian2023explainable,ebadian2024optimized,filos2014truthful}. It is not hard to see that a randomized version of the Deferred Acceptance algorithm, which selects the proposing side equiprobably, has a distortion of $2$. Our main technical result of this section is a matching lower bound, which shows that no ordinal randomized algorithm for stable matching can achieve a better distortion. This bound is in fact quite robust, as it also applies to \emph{ex-ante} stable (or stable \emph{fractional}) matchings, a weaker stability notion investigated by \citet{caragiannis2021stable}. In contrast, the aforementioned randomized Deferred Acceptance algorithm is \emph{ex-post} stable, see \cref{def:randomized-stable-matching}. We summarize the results of \cref{sec:ordinal} in the following informal theorem.

\begin{inftheorem}
The Deferred Acceptance algorithm, as well as any ordinal algorithm for stable matching, has unbounded distortion. However, a randomized version of the Deferred Acceptance algorithm has distortion $2$; this is the best possible distortion that can be achieved by any randomized algorithm for the problem.
\end{inftheorem}

\paragraph{Query-Enhanced Algorithms.} We then turn our attention to algorithms that employ a (limited) amount of cardinal information on top of the preference rankings in \cref{sec:queries}. Here, we adopt the model of \emph{query enhanced} algorithms, introduced by \citet{amanatidis2021peeking}. In this model the algorithm has access to the ordinal information, but can also perform a number of \emph{queries} to the cardinal utilities of the agents; these queries can depend on the ordinal preferences, but also possibly on the information about the utilities obtained by previous queries. The performance of an algorithm in this regime is measured by the tradeoff between the distortion and the number of queries per agent. Query-enhanced algorithms have been shown to achieve much improved distortion bounds with a relatively small number of queries for a number of different settings, including single-winner voting, one-sided and two-sided (non-stable) matching, resource allocation etc. \citep{amanatidis2021peeking,amanatidis2022few,amanatidis2024dont,ebadian2025every}. 

Similarly to the case of the randomized Deferred Acceptance algorithm, it can be easily observed that a (deterministic) query-enhanced version of the algorithm achieves a distortion of $2$, by using only a single query per agent. Intuitively, this one query allows the algorithm to identify the best matching between the man-optimal and the woman-optimal matchings, which, in turn, provides a $2$-approximation to the social welfare of the overall optimal matching. Hence, the more interesting question here is how many queries an algorithm would have to perform in order to beat the distortion $2$ barrier. To this end, we provide crisp positive and negative results. 

First, we show that to achieve any improvement over $2$, the algorithm would need to perform at least $\Omega(\log n)$ queries per agent, and this lower bound becomes stronger (namely, $\Omega(n)$), when these queries are \emph{non-adaptive}, i.e., they are fixed in advance and thus do not rely on the answers to previous queries. We state the corresponding informal theorem below. 

\begin{inftheorem}\label{infthm:query-enhanced-dist-2}
A query-enhanced version of the Deferred Acceptance algorithm achieves a distortion of $2$ using only one query per agent. Furthermore, any query-enhanced algorithm that achieves a distortion better than $2$ requires $\Omega(\log n)$ adaptive queries or $\Omega(n)$ non-adaptive queries per agent. 
\end{inftheorem}

\noindent The impossibility part of \cref{infthm:query-enhanced-dist-2} above motivates the following question:
\emph{
``Given an $\varepsilon >0$, how many queries are sufficient to achieve a distortion of $1+\varepsilon$?''} \smallskip 

\noindent Towards this question, we present a query-enhanced algorithm (\cref{alg:stsf}), which achieves the desired $1+\varepsilon$ distortion with $O(\log n /\varepsilon^2)$ queries. For any constant $\varepsilon$, this algorithm is thus asymptotically best possible, since, as our lower bound in \cref{infthm:query-enhanced-dist-2} suggests, any smaller number of queries would result in a distortion of at least $2$. We have the following informal theorem.

\begin{inftheorem}\label{infthm:query-enhanced-dist-1pluseps}
There is a query-enhanced algorithm which achieves distortion of $1+\varepsilon$ using $O(\log n / \varepsilon^2)$ queries per agent.
\end{inftheorem}

\noindent We manage to improve the reliance on $\varepsilon$ over the bound of \cref{infthm:query-enhanced-dist-1pluseps} for instances of the problem that display a specific structure in the agents' preferences. More precisely, we consider restrictions on the type of directed graph associated with the \emph{rotation poset} of the instance. 
Informally, the rotation poset is a compact way to describe all stable matchings on a given instance, and has been studied extensively in the classic literature of stable matching, e.g., see \citep{irving1986complexity,irving1987efficient,knuth1997stable} and \citep[Chapter 2.5.1]{gusfield1989stable}. We show that when the rotation poset is a \emph{path}, $O(\log n / \varepsilon)$ queries are sufficient to achieve a distortion of $1+\varepsilon$. Rotation poset paths are known to be induced by natural classes of agents' preferences, e.g., see \citep{bhatnagar2008sampling,chebolu2012complexity,cheng2023stable}. 

\paragraph{Average-case Distortion.} 
While our main impossibility result for ordinal deterministic algorithms renders the Deferred Acceptance algorithm impractical in terms of its worst-case distortion, the algorithm may still exhibit good levels of social welfare on typical instances. To this end, we consider its \emph{average-case distortion}, calculated on valuation profiles that are drawn i.i.d. from a common distribution; a notion which was recently studied by \citet{caragiannis2024beyond}. In this case, it is not very hard to observe that the man-optimal and the woman-optimal matchings have the same expected welfare, which translates naturally to an average-case distortion of at most $2$; we present this result in \cref{sec:average}. While proving a tight bound for this setting is beyond the scope of our work, in the same section we present a set of experiments which indicate that the average-case distortion of the algorithm may be significantly better.  

\subsection{Further Related Work and Discussion}

The study of the distortion in social choice dates back to the work of \citet{procaccia2006distortion}, and then later \citep{caragiannis2011voting,boutilier2015optimal,caragiannis2017subset}; we refer the reader to the survey of \citet{anshelevich2021distortion} for a more detailed exposition of the first line of works in the area. These early works considered solely (deterministic or randomized) ordinal algorithms for single- or multi-winner voting and proved upper and lower bounds on their distortion. Over the past half-decade, following the work of \citet{amanatidis2021peeking}, the focus has partially shifted to the study of query-enhanced algorithms, with the goal of achieving good tradeoffs between the distortion and the amount of elicited information via these queries, see also \citep{amanatidis2022few,amanatidis2024dont,caragiannis2024beyond,ebadian2025every}. The performance of such algorithms has been investigated in matching settings, but without the stability condition, which makes the problem markedly different. 

In the context of stable matchings, cardinal queries could be further motivated by the concept of \emph{interviews}; an interview might reveal important information that will help employers and employees uncover the intensity of their preferences, but, given their costly logistics, the number of interviews should be kept to a minimum. The stable matching problem with interviews has been considered in the literature e.g., see \citep{ashlagi2025stable,rastegari2016preference,rastegari2013two,drummond2014preference}, but only for settings \emph{without cardinal values}, where the interview is meant to reveal part of the \emph{ordinal} preference rankings of the agents. Similar settings with oracle access to the agents' ordinal rankings have also been considered in the context of query and communication complexity, see \citep{ng1990lower,segal2007communication,gonczarowski2019stable}. 

As we explained earlier, the natural benchmark for studying the distortion of stable matching algorithms is the social welfare of the best stable matching, rather than the social welfare of the best matching, which might not be stable. The ratio of the latter two quantities\footnote{Somewhat ambiguously, these works refer to the ratio of the social welfare of the optimal matching over that of any (resp. the best) stable matching as the \emph{Price of Anarchy} (resp. the \emph{Price of Stability}), two notions that are typically used in the context of equilibrium performance in strategic games \citep{koutsoupias1999worst,anshelevich2013anarchy}.} was studied first by \citet{anshelevich2013anarchy}, who also observed what we also remark in \cref{footnote1} above (and, formally, in \cref{thm:opt-matching-benchmark-infinity}), namely that this ratio can be unbounded in the worst case. The authors then proceeded to consider special cases such as symmetric values, for which constant approximation guarantees are possible. In a similar vein, \citet{emek2015price} considered stable matchings with metric costs, and proved tight bounds on the aforementioned ratio. We remark that in the case of symmetric preferences like those studied in \citep{emek2015price}, when the preferences are strict, the stable matching is unique (e.g., see \citep{arkin2009geometric}); hence distortion investigations (with respect to our benchmark) in those settings are not meaningful.  

Finally, we note that the stable matching literature in computer science is very extensive, with most works being concerned with obtaining efficient algorithms or computational hardness results for finding stable matchings in several variants of the problem. We refer the interested reader to the classic textbooks of \citet{gusfield1989stable} and \citet{manlove2013algorithmics} for more details. 
\section{Preliminaries}\label{sec:prelims}
We consider the \emph{stable matching problem}\footnote{The problem is often referred to as the \emph{stable marriage problem}, to differentiate from other variants where the matching is not necessarily on a bipartite graph. We do not consider any of those variants in our work, so we use the term stable matching instead.}, in which there are two disjoint sets of {\em agents}; we refer to these sets as {\em men}, denoted by $M$, and {\em women}, denoted by $W$, with $|M| = |W| = n$. Each man $m_i \in M$ has a valuation function $v_{m_i}:W \rightarrow \mathbb{R}$ assigning non-negative values to different women. Respectively, each woman $w_i \in W$ has a valuation function $v_{w_i}:M \rightarrow \mathbb{R}$ assigning non-negative values to different men. Each valuation function $v_{m_i}$ can also be interpreted as a \emph{valuation vector} $\mathbf{v}_{m_i} = (v_{m_i}(w_1), \ldots, v_{m_i}(w_n))$, (and similarly for $v_{w_i}$). We will let $\bv_M = (\bv_{m_1}, \ldots, \bv_{m_n})$ be a \emph{men's valuation profile}, $\bv_W = (\bv_{w_1}, \ldots, \bv_{w_n})$ be a \emph{women's valuation profile}, and $\bv = (\bv_M,\bv_W)$ be a valuation profile, consisting of the valuations of both men and women. 

These valuations induce \emph{(ordinal) preference rankings}, or simply \emph{preferences}, of the agents on one side for the agents on the other. We will let $\succ_{m_i}$ (respectively $\succ_{w_i}$) denote the preference ranking of a man $m_i \in M$ (respectively a woman $w_i \in W$) for the women in $W$ (respectively the men in $M$). Intuitively, one can think of $\succ_{m_i}$ as a permutation of the elements of the set $W$. Let $\succ_M = (\succ_{m_1}, \ldots, \succ_{m_n})$ be a \emph{men's preference profile} and  $\succ_W = (\succ_{w_1}, \ldots, \succ_{w_n})$ be a \emph{women's preference profile}. Also, let $\succ = (\succ_M,\succ_W)$ be a preference profile, consisting of the preferences of both men and women, and let $(\succ)^n$ be the set of all preference profiles. To emphasize that a given preference profile is induced by a valuation profile $\bv$, we will write $\succ_\bv$ (and $\succ_{M,\bv}$ and $\succ_{W,\bv}$ for the men's and women's preference profiles, respectively). 

\begin{remark}
Following the standard convention in the literature of distortion, we will not require the valuation functions to be bijective, i.e., a man (resp. a woman) can have the same value for multiple women (resp. men). Still, again following the literature, we will assume that the preferences induced by those functions are strict, by applying an arbitrary tie-breaking rule when constructing preferences consistent with the valuation functions. Our positive results will hold for any such tie-breaking rule. The negative results will be presented with ties in the valuations. This is merely for presentational convenience; all the negative results can be modified to use instances that have distinct values.\footnote{The literature on stable matching sometimes considers preference rankings with ties, and the corresponding notions of stability, namely superstability, strong stability, and weak stability, e.g., see \citep[Chapter 1.4.3]{gusfield1989stable}. In the context of our work, the former two are arguably less meaningful, as they are not guaranteed to exist. All of our results extend to the case of preference rankings with ties and weakly stable matchings.}
\end{remark}

\paragraph{Terminology and Notation.} Sometimes it will be more convenient to refer to some agent without specifying if that agent is a man or a woman. In that case, we will use $a_i$ to denote the agent and $A \in \{M,W\}$ to denote the set that the agent belongs to. We will be referring to the other set as $\bar{A} = M\cup W \setminus A$. Also, when referring to agents generically, we will use ``it'', rather than the pronouns ``he'' and ``she'' that we will use for men and women, respectively. Additionally, given the preference rankings of the agents, it will make sense to give names to some designated agents.

\begin{definition}[Favorites and Suitors]
Consider any agent $a_i \in A$ with preference ranking $\succ_{a_i}$. We will call an agent $a_j \in \bar{A}$ the \emph{favorite} of agent $a_i$, and we will denote it by $f(a_i)$, if $a_j$ appears first in agent $a_i$'s ranking, i.e., $a_j \succ_{a_i} a_{j'}$, for all $a_{j'} \in \bar{A}\setminus\{a_{j}\}$. We will call an agent $a_j \in \bar{A}$ the \emph{suitor} of agent $a_i$, and we will denote it by $s(a_i)$, if $a_i$ appears first in agent $a_j$'s ranking, i.e., $a_i \succ_{a_j} a_{i'}$, for all $a_{i'} \in A\setminus\{a_{i}\}$. 
\end{definition}

\paragraph{Matchings and Stable Matchings.} An outcome is a (one-to-one) {\em matching} $\mu$ between the men and the women; we denote by $\mu(m_i)$ the woman that is assigned to man $m_i$, and by $\mu(w_i)$ the man that is assigned to woman $w_i$. Let $\mathcal{M}(\bv)$ be the set of all possible matchings with $n$ men and $n$ women, when the valuation profile is $\bv$; when $\bv$ is clear from the context, we will simply write $\mathcal{M}$ for the set of matchings. The social welfare of a matching $\mu$ for profile $\bv$ is 
\begin{align*}
    \SW(\mu\mid\bv) = \SW_M(\mu\mid\bv) + \SW_W(\mu\mid\bv)
\end{align*}
where $\SW_M(\mu\mid\bv) = \sum_{m_i \in M} u_{m_i}(\mu \mid \bv)$ and $\SW_W(\mu\mid\bv) = \sum_{w_i \in W} u_{w_i}(\mu \mid \bv)$.
We will be interested in matchings that are \emph{stable}, i.e., matchings that are robust to unilateral deviations by pairs of men and women. To define a matching formally, we first define the notion of a blocking pair.
\begin{definition}[Blocking Pair]
Let $\mu \in \mathcal{M}$ be a matching. We will say that a (man,woman) pair $(m_i,w_j)$ is a \emph{blocking pair} for $\mu$, if both of the following are true:
\begin{itemize}
    \item[-] Man $m_i$ prefers $w_j$ to the woman he is matched with in $\mu$, i.e., $w_j \succ_{m_i} \mu(m_i)$. 
    \item[-] Woman $w_i$ prefers $m_i$ to the man she is matched with in $\mu$, i.e., $m_i \succ_{w_j} \mu(w_j)$ 
\end{itemize}
\end{definition}

\begin{definition}[Stable Matching]\label{def:stable-matching}
A matching $\mu \in \mathcal{M}$ is a \emph{stable matching}, if there does not exist any blocking pair for $\mu$. 
\end{definition}
\noindent We will let $\mathcal{M}_s(\succ_\bv) \subseteq \mathcal{M}(\bv)$ denote the set of stable matchings on the ordinal preference profile induced by the valuation profile $\bv$. Notice that since stability is uniquely defined by the preference rankings of the agents, we denote the set by $\mathcal{M}_s(\succ_\bv)$ (or simply $\mathcal{M}_s(\succ)$, when $\bv$ is clear from context) rather than $\mathcal{M}_s(\bv)$; indeed for any valuation profile $\bv$ consistent with the same preference profile $\succ$, the set of stable matchings is the same.  Like before, we will simply write $\mathcal{M}_s$ when $\succ$ is clear from context.\medskip

\noindent We next present the definition of an \emph{optimal stable matching}. 

\begin{definition}[Optimal Stable Matching]\label{def:optimal-stable-matching}
A matching $\mu^*$ is an \emph{optimal stable matching} if it is stable, and it has the highest social welfare among all stable matchings, i.e., $\mu^* \in \mathcal{M}_s(\succ_\bv)$ and $\SW(\mu^*\mid\bv) = \max_{\mu \in \mathcal{M}_s(\succ_\bv)}\SW(\mu\mid\bv)$.
\end{definition} 

\noindent Given a valuation profile $\bv$, an optimal stable matching can be computed in polynomial time \citep{irving1987efficient}; we refer to the corresponding algorithm that achieves that as $\mathcal{A}_\text{ilg}$. \medskip 

\noindent It is not true that every pair $(m_i,w_j)$ can be part of a stable matching; see for example the preference profile of \cref{fig: gale-shapley}. It will be useful to consider, for an agent $a_i \in A$, only those agents $a_j \in \bar{A}$ that $a_i$ can be matched with in some stable matching. Formally, given a preference profile $\succ$, an agent $a_i \in A$ and an agent $a_j \in \bar{A}$, the pair $(a_i,a_j)$ is a \emph{stable pair} if and only if $(a_i,a_j)$ is part of some stable matching $\mu \in \mathcal{M}_s(\succ)$. In that case, we will say that $a_i$ and $a_j$ are \emph{stable partners}. The following well-known theorem establishes that the set of all stable pairs, and hence the set of all stable partners, can be found in time $O(n^2)$.

\begin{theorem}\citep{gusfield1987three}\label{lem:all-stable-pairs}
Given a preference profile $\succ$, there is a polynomial-time algorithm $\mathcal{A}_{\text{gus}}$ that finds the set of all stable pairs in time $O(n^2)$.
\end{theorem}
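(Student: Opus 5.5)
We prove Gusfield's theorem with the standard structural characterization: a pair $(m,w)$ is stable if and only if $w$ lies between $\mu_W(m)$ and $\mu_M(m)$ in $m$'s ranking and the pair is ``reachable'' through rotations. Here $\mu_M$ and $\mu_W$ denote the man-optimal and woman-optimal stable matchings. The algorithm is as follows.

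First, we run Deferred Acceptance with men proposing to obtain $\mu_M$. With an array-based implementation that keeps, for each man, a pointer into his list and, for each woman, an inverse ranking table, this takes $O(n^2)$ time. By Gale--Shapley, every man gets his best stable partner and every woman her worst.

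Second, we enumerate all rotations by the minimal-differences, or ``breakmarriage'', procedure. Given a stable matching $\mu$, we let $s_\mu(m)$ be the first woman after $\mu(m)$ on $m$'s list who prefers $m$ to her current partner, and we let $\mathrm{next}_\mu(m)=\mu(s_\mu(m))$. Every cycle of the $\mathrm{next}$ function is a rotation exposed in $\mu$. Eliminating it, so that each man $m$ on the cycle moves to $s_\mu(m)$, yields another stable matching. Starting from $\mu_M$, we repeatedly find and eliminate exposed rotations until we reach $\mu_W$, at which point no man has a valid $s$. Each man's pointer only moves forward in his list, skipping women who would reject him (once $w$ prefers her current partner to $m$, this stays true, since women only improve). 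Hence the total pointer movement is $O(n^2)$. Finding cycles is done by walking the $\mathrm{next}$ pointers with a stack and marking visited men, amortized against the pointer advances.

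Third, we invoke the key lemma: $(m,w)$ is a stable pair if and only if $(m,w)\in\mu_M$, or $(m,w)$ is created by the elimination of some rotation in the sequence, i.e.\ $w=s_\mu(m)$ at the moment some rotation containing $m$ is eliminated. One direction is immediate, because every matching traversed is stable. For the converse, we use the lattice structure: any stable matching $\mu$ is obtained from $\mu_M$ by eliminating a down-closed set of rotations, and every maximal chain from $\mu_M$ to $\mu_W$ eliminates every rotation exactly once. So any stable pair $(m,w)$ not in $\mu_M$ arises as $m$'s new pair in some rotation, and that rotation appears in our chain. We then output all pairs recorded along the way, at most $n^2$ in total, which completes the $O(n^2)$ bound.

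The main obstacle is the third step, in particular the claim that every rotation is eliminated along any maximal chain. This is what makes the output complete regardless of which exposed rotation we choose at each step. We would prove it by showing that for any stable $\mu\neq\mu_W$ some rotation is exposed, and that a rotation $\rho=(m_i,w_i)$ is characterized intrinsically by its pairs, so that it stays exposed until eliminated. We would use the fact that the rotation containing a given pair $(m,\mu(m))$ with $m$'s move to $s_\mu(m)$ is the same in every stable matching where $m$ holds that partner. The timing argument is routine once the pointer monotonicity is established.
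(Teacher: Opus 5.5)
The paper gives no proof of this statement; it only cites Gusfield (1987). Your outline is essentially Gusfield's own argument: men-proposing DA for $\mu_M$, enumerating the rotations along one maximal elimination chain by following $\mathrm{next}$ pointers, and outputting $\mu_M$ plus the pairs created by eliminated rotations. Two steps, however, do not hold as written.

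First, your stopping rule is wrong: it is not true that no man has a valid $s$ at $\mu_W$. Take $m_1: w_1\succ w_2$, $m_2: w_2\succ w_1$, $w_1: m_1\succ m_2$, $w_2: m_1\succ m_2$. The unique stable matching is $\{(m_1,w_1),(m_2,w_2)\}=\mu_M=\mu_W$, yet $s(m_1)=w_2$ is valid, while $\mathrm{next}(m_1)=m_2$ has no valid successor. So $\mathrm{next}$ is neither empty at $\mu_W$ nor total before it. Your stack walk can therefore dead-end, or pass through men who will never move again, and neither your termination test nor your amortization covers this.

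The standard repair is as follows. Also compute $\mu_W$ (women-proposing DA, $O(n^2)$), and start walks only at men with $\mu(m)\neq\mu_W(m)$. Then prove that for such $m$, $s_\mu(m)$ exists, because $\mu_W(m)$ itself qualifies. Prove also that $m'=\mathrm{next}_\mu(m)$ again satisfies $\mu(m')\neq\mu_W(m')$. Otherwise $(m,s_\mu(m))$ blocks $\mu_W$:
\begin{itemize}
\item[-] $m$ prefers $s_\mu(m)$ to $\mu_W(m)$, since $s_\mu(m)$ is the first qualifying woman, and the two differ because $\mu_W(s_\mu(m))=m'\neq m$;
\item[-] $s_\mu(m)$ prefers $m$ to $m'=\mu_W(s_\mu(m))$.
\end{itemize}
This lemma makes every walk close into a cycle and gives you an exposed rotation whenever $\mu\neq\mu_W$. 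It also yields the correct stopping rule: every man is at his $\mu_W$-partner.

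Second, in the completeness step, ``stays exposed until eliminated'' only helps after $\rho$ has become exposed on your particular chain, and that is exactly what must be shown. The missing link is the gap lemma: no stable partner of $m_i$ lies strictly between $w_i$ and $w_{i+1}$. It implies that every elimination moves each affected man to his next stable partner, so along any chain from $\mu_M$ to $\mu_W$ every man visits all of his stable partners. Hence $m_0$ holds $w_0$ at some point. He later leaves it, since $w_0\neq\mu_W(m_0)$, via a rotation exposed at that moment that contains $(m_0,w_0)$. By the lemma that a pair lies in at most one rotation, that rotation is $\rho$. Alternatively, invoke the theorem that every elimination sequence from $\mu_M$ to $\mu$ eliminates exactly $R_\mu$, together with $R_{\mu_W}=R(\succ)$. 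With these two repairs your argument goes through.
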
 

\noindent For an agent $a_i \in A$, will let $S_a \subseteq \bar{A}$ denote the set of its stable partners, and we will use $\succ_{a_i}^s$ to denote the preference ranking of the agent over only the elements of $S_a$. We will also use $f^s(a_i)$ to denote the \emph{stable favorite} of agent $a_i$, i.e., an agent $a_j \in \bar{A}$ such that $a_j \succ_{a_i} a_{j'}$, for all $a_j' \in S_a \setminus \{a_j\}$.\\

\noindent It will also be useful to consider matchings that are optimal for one of the two sides only.
\begin{definition}[Man-Optimal and Woman-Optimal Stable Matching]\label{def:man-optimal-woman-optimal}
Consider a matching $\mu$.
\begin{itemize}
    \item[-] We will say that $\mu$ is a \emph{man-optimal} matching (and we will denote it by $\mu_M^*$) if it is stable, and, for every man $m_i$, it holds that $\mu_M^*(m_i) \succ_{m_i} w$, for all $w \in S_{m_i}$.
    \item[-] We will say that $\mu$ is a \emph{woman-optimal} matching (and we will denote it by $\mu_W^*$) if it is stable, and, for every woman $w_i$, it holds that $\mu_W^*(w_i) \succ_{w_i} m$, for all $m \in S_{w_i}$.
\end{itemize}
It follows by \cref{def:man-optimal-woman-optimal} that a man-optimal (resp. woman-optimal) stable matching maximizes the social welfare of the men (resp. women). Man-optimal and woman-optimal stable matchings are in fact stronger, as they are best among all stable matchings for all agents of the corresponding side \emph{simultaneously}. Such matchings exist, and can be computed by the Deferred Acceptance algorithm of \citet{gale1962college}.
\end{definition}

\paragraph{Randomized Stable Matchings.} We will also be interested in randomized stable matchings, which match each man $m_i$ with each woman $w_i$ with probability $p_{m_i,w_i} \in [0,1]$; a deterministic matching is then simply a randomized matching where $p_{m_i,w_i} \in \{0,1\}$. Given a randomized matching $\mu$, the \emph{expected utility} of an agent $a_i \in A$ is given by $u_{a_i}(\mu \mid \bv) = \sum_{a_j \in \bar{A}}p_{a_i,a_j} v_{a_i}(a_j)$. The social welfare of a randomized matching is defined analogously to before, namely 
\begin{align*}
    \SW(\mu\mid\bv) = \SW_M(\mu\mid\bv) + \SW_W(\mu\mid\bv)
\end{align*}
where $\SW_M(\mu\mid\bv) = \sum_{m_i \in M} u_{m_i}(\mu \mid \bv)$ and $\SW_W(\mu\mid\bv) = \sum_{w_i \in W} u_{w_i}(\mu \mid \bv)$.
Note that, by the Birkhoff-von Neumann decomposition \citep{birkhoff1946tres}, a randomized matching $\mu$ can alternatively be interpreted as a probability distribution over (at most $n^2$) deterministic matchings $\mu_d$. The set of such matchings that are outputted with positive probability is called the \emph{support} of $\mu$.

For a randomized (or \emph{fractional}) matching $\mu$, the literature has considered several different notions of stability. In this work we focus on perhaps the most natural of those, namely \emph{ex-post stability} \citep{roth1993stable}, which, just like our notion of stability in \cref{def:stable-matching}, can be defined solely based on the preference rankings $\succ$. We refer the reader to \citep{aziz2019random}, as well as \cite[Appendix A]{caragiannis2021stable} for a discussion of other notions of stability. 

\begin{definition}[(Ex-post) randomized stable matching]\label{def:randomized-stable-matching}
A randomized matching $\mu$ is an \emph{(ex-post) randomized stable matching}, if every matching $\mu_d$ in its support is a stable matching. 
\end{definition}
\noindent From now on, we will omit the ``ex-post'' part and refer to such matchings simply as ``randomized stable matchings''. Notice that, by definition, the social welfare of a randomized matching is upper bounded by the maximum social welfare of any matching in its support; this implies that within the set of randomized stable matchings as defined in \cref{def:randomized-stable-matching}, the optimal stable matching is still deterministic, just like in \cref{def:optimal-stable-matching}. We will use the notation $\Delta(\mathcal{M}_s(\succ))$ (or simply $\Delta(\mathcal{M}_s)$) to denote the set of all randomized stable matchings on preference profile $\succ$.

\subsection{Ordinal Algorithms and Distortion} In the first part of the paper, we will be interested in algorithms that output (randomized) stable matchings given as input the ordinal preference rankings of the agents; we will consider algorithms that also use \emph{cardinal} information about the valuation functions $v_{m_i}, v_{w_i}$ later, in \cref{sec:queries}. Formally, a randomized ordinal stable algorithm $\mathcal{A}$ is a function $\mathcal{A}:(\succ)^n \rightarrow \Delta(\mathcal{M}_s)$, where $(\succ)^n$ is the set of all possible preference orderings with $n$ men and $n$ women. When the randomized matching outputted by $\mathcal{A}$ contains only one matching in its support, we will say that $\mathcal{A}$ is a \emph{deterministic} algorithm. An example of a well-known deterministic ordinal algorithm is the Deferred Acceptance algorithm of \citet{gale1962college}, which, depending on the proposing side, computes either the man-optimal matching $\mu_M^*$ or the woman-optimal matching $\mu_W^*$. We present the men-proposing version of the algorithm in \cref{alg:da}.

\begin{algorithm}[t]
\caption{\textsc{Deferred Acceptance} \citep{gale1962college} (\emph{men-proposing version})}
\label{alg:da}
\DontPrintSemicolon

\KwIn{A preference profile $\succ = (\succ_M, \succ_W)$.}
\KwOut{A stable matching $\mu \in \mathcal{M}_s(\succ)$.}

Let $e: A \rightarrow \bar{A}$ \tcp*{Engagement function}
Initialize $E_M = \emptyset$ and $E_W=\emptyset$ \tcp*{Sets of engaged agents} 
\tcp{Assign each man and woman to be initially not engaged to anyone}
Let $P_{m}$ be the set of women that man $m$ has \emph{proposed} to\\

\While{$\exists m \in M\setminus E_M$ such that $P_{m}\neq W$}{
    $P_m \gets P_m \cup \{w\}$, where $w \succ_{m_i} w'$ for all $w' \in W\setminus P_m$\\
    \tcp{While there is a man that is not engaged, and who has not proposed to every woman, the man proposes to his favorite woman among those he has not proposed to yet}
    \BlankLine
    \If{$w \notin E_W$\tcp*{If woman $w$ is not engaged}}
    {
    Let $e(m)=w$ and $e(w)=m$ \tcp*{assign $m$ and $w$ to be engaged to each other} 
    $E_M \gets E_M \cup \{m\}$\\
    $E_W \gets E_W \cup \{w\}$
    
    }
    \Else{
    \tcp{Woman $w$ is engaged to some man $m'$}
    Let $m'=e(w)$ \tcp*{$m'$ is the man that woman $w$ is currently engaged to}
    \If{$m\succ_{w} m'$}{
        Let $e(m)=w$ and $e(w)=m$ \tcp*{assign $m$ and $w$ to be engaged to each other}
        $E_M \gets E_M \cup \{m\} \setminus \{m'\}$
    }
    \Else
    {
    \tcp{Do Nothing}
    \tcp{Woman $w$ rejects the proposal of man $m$, and man $m$ remains not engaged}
    }
    }
}
\ForEach{$m \in M$}{
Let $\mu(m) = e(m)$
}
\Return $\mu$
\end{algorithm}

The distortion of an ordinal algorithm is defined as the worst-case ratio (over all valuation profiles $\bv$) of the maximum social welfare on $\bv$, over the (expected) social welfare of the matching $\mathcal{A}(\succ_\bv)$ outputted by the algorithm on input the preference profile $\succ_\bv$ induced by $\bv$. As we mentioned in the Introduction, we are concerned with the best possible distortion \emph{within the set of stable matchings}, and therefore the numerator in the definition of the distortion will be the maximum social welfare of any stable matching. Formally, 
we have:
\begin{equation*}
    \text{dist}(\mathcal{A}) = \sup_{\bv,n} \frac{\max_{\mu \in \mathcal{M}_s(\succ_\bv)}\SW(\mu\mid\bv)}{\SW(\mathcal{A}(\succ_{\bv})\mid\bv)}
\end{equation*}

\noindent One might wonder what would happen if we considered the social welfare of the best (not necessarily stable) matching as our optimality benchmark, i.e., if the numerator in the definition of the distortion was instead $\max_{\mu \in \mathcal{M}(\bv)}\SW(\mu\mid\bv)$. In this case, it is fairly easy to show that no stable matching algorithm (even one that employs randomization and has full access to the cardinal valuation functions of the agents) can achieve any meaningful distortion, see also \citep{anshelevich2013anarchy}. For completeness, we present the following observation, which establishes this bound via proving that there are instances in which the best stable matching has a social welfare of $0$, whereas the best (non-stable) matching has a positive social welfare, therefore resulting in an infinite distortion.

\begin{observation}\label{thm:opt-matching-benchmark-infinity}
There exists a valuation profile $\bv$ such that 
$$\max_{\mu \in \mathcal{M}(\bv)}\SW(\mu\mid\bv)> 0 \text{ and } \max_{\mu \in \mathcal{M}_s(\succ_\bv)}\SW(\mu\mid\bv)=0.$$ 
\end{observation}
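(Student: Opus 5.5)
We will prove the observation with an explicit instance with $n=2$, chosen so that the unique stable matching gives every agent value $0$ while the other matching gives some agent positive value. Ties in valuations are allowed, and the paper's remark says preferences come from an arbitrary tie-breaking rule. We will therefore fix the induced strict preferences explicitly, checking that they are consistent with the valuations.

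The construction is as follows. Let $M=\{m_1,m_2\}$ and $W=\{w_1,w_2\}$. We aim for the unique stable matching $\mu_s=\{(m_1,w_1),(m_2,w_2)\}$ with $\SW(\mu_s\mid\bv)=0$, and the alternative $\mu'=\{(m_1,w_2),(m_2,w_1)\}$ with positive welfare. To force $\mu_s$, we make $m_1$ and $w_1$ rank each other first. Set $v_{m_1}(w_1)=1$ and $v_{m_1}(w_2)=0$, and $v_{w_1}(m_1)=1$ and $v_{w_1}(m_2)=0$. Then $(m_1,w_1)$ blocks $\mu'$, since each strictly prefers the other to their partner there. However, $(m_1,w_1)\in\mu_s$ contributes welfare $2$, which is not what we want.

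To fix this, we want value in $\mu'$ to come from the other agents and the pairs in $\mu_s$ to be worth $0$. The blocking pair of $\mu'$ must consist of a man and a woman who strictly prefer each other, so at least one of them has positive value for the other. That pair is then matched in $\mu_s$, which gives $\mu_s$ positive welfare. With $n=2$ and strict valuations this cannot work, so we rely on ties together with tie-breaking.

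The repair is to make every value $0$ except one. Take all valuations $0$ except $v_{m_2}(w_1)=1$. Break ties so that $m_1$ prefers $w_1$, $w_1$ prefers $m_1$, and $w_2$ prefers $m_1$ (any order works for her). The preference of $m_2$ is forced: $w_1\succ_{m_2} w_2$. These preferences are consistent with the valuations, since all of $m_1$'s and $w_1$'s values are tied.

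We then verify stability. In $\mu'$, the pair $(m_1,w_1)$ is blocking, so $\mu'$ is not stable. In $\mu_s$, the agents $m_1$ and $w_1$ both hold their first choices, so no blocking pair involves either of them, and $\mu_s$ is stable. Hence $\mathcal{M}_s=\{\mu_s\}$ and $\SW(\mu_s\mid\bv)=v_{m_2}(w_2)=0$, since every other value in $\mu_s$ is also $0$. Meanwhile $\SW(\mu'\mid\bv)\ge v_{m_2}(w_1)=1>0$.

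The only delicate step is the reliance on ties and tie-breaking. It is legitimate by the remark in \cref{sec:prelims}, which explicitly allows tied valuations with arbitrary strict tie-breaking in negative results, so no further obstacle remains.
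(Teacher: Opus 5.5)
Your proof is correct. It follows the same strategy as the paper, exhibiting an explicit instance whose unique stable matching has zero welfare, but with a different instance, and the choice matters.

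The paper takes the $4\times 4$ example of Gale and Shapley, in which no agent gets its first choice in the unique stable matching, and gives each agent value $1$ for its favorite and $0$ otherwise. You use $n=2$, make $m_1$ and $w_1$ mutual first choices by tie-breaking among identically zero valuations, and put the only positive value on the pair $(m_2,w_1)$, which stability excludes.

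Your route buys full self-containment: uniqueness of the stable matching is immediate from the mutual-first-choice pair, whereas the paper leaves uniqueness in its instance to inspection. The paper's route buys a less degenerate instance. Every agent has a strict favorite and unit-sum valuations, the standard normalization in the distortion literature. The zero welfare is therefore caused by stability keeping everyone away from their favorite, not by how ties of fully indifferent agents are broken. Your construction cannot be normalized this way, and neither can any instance with $n=2$. Under unit-sum valuations, zero welfare forces every agent to value its stable partner $0$ and the other agent $1$, and then any man and woman not matched to each other form a blocking pair.

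Your dependence on a chosen tie-breaking is no weakness relative to the paper, whose proof likewise fixes one strict profile among tied zero values. Some such choice is unavoidable. The exact equality $\max_{\mu\in\mathcal{M}_s(\succ_\bv)}\SW(\mu\mid\bv)=0$ is impossible with distinct values, since it would require every agent to be matched to its last choice, which creates a blocking pair whenever $n\ge 2$.
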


\begin{proof}
Consider the preference profile in \cref{fig: gale-shapley}; this is the preference profile in \citep[Example 2]{gale1962college}. It is not hard to verify that the only stable matching is the one indicated by the circled entries, in which no agent is matched with their top choice among the agents of the opposite side. Now, for every $i \in \{1,2,3,4\}$, let $v_{m_i}(f(m_i))=1$ and $v_{m_i}(w_j)=0$ for all $w_j \neq f(m_i)$, and likewise $v_{w_i}(f(w_i))=1$ and $v_{w_i}(m_j)=0$ for all $m_j \neq f(w_i)$. Clearly, for the single matching $\mu \in \mathcal{M}_s(\succ_\bv)$, we have that $\SW(\mu\mid\bv)=0$. At the same time, any matching $\mu' \in \mathcal{M}(\bv)$ that matches some agent to their top choice has $\SW(\mu'\mid\bv)>0$, which proves the claim. 
\end{proof}

\begin{figure}[t]
\centering
\begin{tikzpicture}[
    scale=0.7, transform shape,
    cell/.style={
        minimum width=1.8cm,
        minimum height=1.2cm,
        align=center,
        font=\fontsize{17}{20}\selectfont
    },
    rowlbl/.style={anchor=east, font=\fontsize{15}{20}\selectfont},
    collbl/.style={anchor=south, font=\fontsize{15}{20}\selectfont},
]

\draw[
    line width=1.5pt,
    rounded corners,
    fill=white,
    drop shadow
]
  (-1.0,-1.0) rectangle (10.0,5.5);

\node[collbl] at (0,5.7) {$w_1$};
\node[collbl] at (3,5.7) {$w_2$};
\node[collbl] at (6,5.7) {$w_3$};
\node[collbl] at (9,5.7) {$w_4$};

\node[rowlbl] at (-1.5,4.5) {$m_1$};
\node[rowlbl] at (-1.5,3.0) {$m_2$};
\node[rowlbl] at (-1.5,1.5) {$m_3$};
\node[rowlbl] at (-1.5,0.0) {$m_4$};

\node[cell] (a00) at (0,4.5) {$1,3$};
\node[cell] (a01) at (3,4.5) {$2,3$};
\node[cell] (a02) at (6,4.5) {$3,2$};
\node[cell] (a03) at (9,4.5) {$4,3$};

\node[cell] (a10) at (0,3.0) {$1,4$};
\node[cell] (a11) at (3,3.0) {$4,1$};
\node[cell] (a12) at (6,3.0) {$3,3$};
\node[cell] (a13) at (9,3.0) {$2,2$};

\node[cell] (a20) at (0,1.5) {$2,2$};
\node[cell] (a21) at (3,1.5) {$1,4$};
\node[cell] (a22) at (6,1.5) {$3,4$};
\node[cell] (a23) at (9,1.5) {$4,1$};

\node[cell] (a30) at (0,0.0) {$4,1$};
\node[cell] (a31) at (3,0.0) {$2,2$};
\node[cell] (a32) at (6,0.0) {$3,1$};
\node[cell] (a33) at (9,0.0) {$1,4$};

\draw[blue!60, line width=2.5pt] (a02) circle (0.7);
\draw[blue!60, line width=2.5pt] (a13) circle (0.7);
\draw[blue!60, line width=2.5pt] (a20) circle (0.7);
\draw[blue!60, line width=2.5pt] (a31) circle (0.7);

\end{tikzpicture}

\caption{An example of a preference profile in which no agent is matched with their top choice, taken from \citep[Example 2]{gale1962college}. The entry $(m_i, w_j)$ consists of a pair of numbers; the first indicates the rank of $w_j$ for $m_i$ and the second indicates the rank of $m_i$ for $w_j$. For example, the entry $(m_2,w_3)$ is $(3,3)$ which indicates that $m_2$ ranks $w_3$ third among the women, and $w_3$ ranks $m_2$ third among the men. The unique stable matching is indicated by the entries circled in blue.}
\label{fig: gale-shapley}
\end{figure}
\section{The Distortion of Ordinal Algorithms}\label{sec:ordinal}
With the social welfare of the best possible stable matching as a benchmark, we now consider what kind of distortion guarantees we can achieve with ordinal algorithms. Unfortunately, it turns out that even with this more reasonable benchmark, the distortion of any deterministic ordinal stable algorithm can be unbounded. This is captured by the following theorem.

\begin{theorem}\label{thm:any-ordinal-distortion-unbounded}
    Let $\mathcal{A}$ be any deterministic ordinal stable algorithm. Then the distortion of $\mathcal{A}$ is unbounded. 
\end{theorem}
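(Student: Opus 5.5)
The plan is to construct a single preference profile with exactly two stable matchings and then choose valuations adversarially against whichever one $\mathcal{A}$ returns. Take $n=2$ with men $m_1,m_2$ and women $w_1,w_2$. Set $w_1 \succ_{m_1} w_2$ and $w_2 \succ_{m_2} w_1$, and set $m_2 \succ_{w_1} m_1$ and $m_1 \succ_{w_2} m_2$. Each man's favorite woman ranks him last.

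The first step is to verify that both perfect matchings are stable. In $\mu_1=\{(m_1,w_1),(m_2,w_2)\}$ every man gets his favorite, so no man participates in a blocking pair; this is the man-optimal matching $\mu_M^*$. In $\mu_2=\{(m_1,w_2),(m_2,w_1)\}$ every woman gets her favorite, so no woman participates in a blocking pair; this is $\mu_W^*$. Hence $\mathcal{M}_s(\succ)=\{\mu_1,\mu_2\}$. If one prefers a construction valid for every $n$, this $2\times 2$ gadget can be padded with extra pairs $(m_i,w_i)$, $i\ge 3$, who rank each other first and give value $0$. These pairs are fixed in every stable matching, so they do not change the argument.

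Since $\mathcal{A}$ is deterministic and ordinal, on this profile $\succ$ it outputs one fixed matching, either $\mu_1$ or $\mu_2$, independently of the underlying values.

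Suppose $\mathcal{A}(\succ)=\mu_1$. Fix a parameter $t>1$ and define the valuations as follows:
\begin{itemize}
\item $v_{m_1}(w_1)=v_{m_1}(w_2)=0$ and $v_{m_2}(w_1)=v_{m_2}(w_2)=0$;
\item $v_{w_1}(m_2)=t$ and $v_{w_1}(m_1)=0$;
\item $v_{w_2}(m_1)=t$ and $v_{w_2}(m_2)=0$.
\end{itemize}
These values are consistent with $\succ$, using the paper's convention that ties are broken arbitrarily to induce the strict rankings. Under them, $\SW(\mu_1\mid\bv)=0$ while $\SW(\mu_2\mid\bv)=2t>0$, so the ratio is infinite. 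The case $\mathcal{A}(\succ)=\mu_2$ is symmetric: give the men value $t$ for their favorites and set every other value to $0$.

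If the convention requires a finite ratio or distinct values, I will instead give the values that are $0$ above the value $1$, slightly perturbed to stay consistent with the rankings. The ratio is then at least about $t$, which grows without bound as $t\to\infty$. This also handles the remark about distinct values.

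The only step needing care is checking that no third stable matching exists and that the valuations respect the rankings. With $n=2$ there are only two perfect matchings, so this is immediate; the padded version needs just the observation that the mutually-first pairs must be matched together in any stable matching.
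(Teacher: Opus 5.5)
Your proof is correct. Its skeleton is the paper's: a profile whose only stable matchings are $\mu_M^*$ and $\mu_W^*$, followed by values consistent with $\succ$ that give the output of $\mathcal{A}$ welfare $0$ and the other matching positive welfare. You differ in the instance and in the kind of values you use.

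\begin{itemize}
\item \emph{Your construction.} Your $2\times 2$ profile makes ``exactly two stable matchings'' trivial, since there are only two perfect matchings. You then make the output's welfare zero by giving the entire favored side the all-zero valuation.
\item \emph{The paper's construction.} Its $4\times 4$ profile is built so that in $\mu_M^*$ every man gets only his second choice, and in $\mu_W^*$ every woman gets only her second choice. This is why the paper needs the rotation argument of \cref{app:early-lemma-proof-with-rotations} to rule out a third stable matching. In return, values concentrated on each agent's top one or two choices already drive the output's welfare to $0$. So the lower bound holds even for unit-sum valuations, as claimed in \cref{rem:all-randomized-lower}.
\end{itemize}

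Your gadget cannot give the normalized version. With unit-sum values, each man in $\mu_M^*$ receives his favorite, worth at least $1/2$, so $\SW(\mu_M^*\mid\bv)\ge 1$ while $\SW(\mu_W^*\mid\bv)\le 3$. The symmetric bound holds when the output is $\mu_W^*$, so the ratio on your profile is at most $3$. Since the theorem as stated imposes no normalization, your more elementary argument suffices. The paper's harder instance buys the stronger normalized statement.

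One small slip: in your perturbed variant the ratio is roughly $(t+1)/2$ rather than about $t$. It is still unbounded, so the conclusion is unaffected.
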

\begin{figure}[t]
\centering
\begin{tikzpicture}[
    scale=0.7, transform shape,
    cell/.style={
        minimum width=1.8cm,
        minimum height=1.2cm,
        align=center,
        font=\fontsize{17}{20}\selectfont
    },
    rowlbl/.style={anchor=east, font=\fontsize{15}{20}\selectfont},
    collbl/.style={anchor=south, font=\fontsize{15}{20}\selectfont},
]

\draw[
    line width=1.5pt,
    rounded corners,
    fill=white,
    drop shadow
]
  (-1.3,-0.9) rectangle (10.3,5.4);

\node[collbl] at (0,5.5) {$w_1$};
\node[collbl] at (3,5.5) {$w_2$};
\node[collbl] at (6,5.5) {$w_3$};
\node[collbl] at (9,5.5) {$w_4$};

\node[rowlbl] at (-1.8,4.5) {$m_1$};
\node[rowlbl] at (-1.8,3.0) {$m_2$};
\node[rowlbl] at (-1.8,1.5) {$m_3$};
\node[rowlbl] at (-1.8,0.0) {$m_4$};

\node[cell] (a00) at (0,4.5) {\Large $2,2$};
\node[cell] (a01) at (3,4.5) {\Large $1,4$};
\node[cell] (a02) at (6,4.5) {\Large $4,1$};
\node[cell] (a03) at (9,4.5) {\Large $3,1$};

\node[cell] (a10) at (0,3.0) {\Large $4,1$};
\node[cell] (a11) at (3,3.0) {\Large $3,2$};
\node[cell] (a12) at (6,3.0) {\Large $1,3$};
\node[cell] (a13) at (9,3.0) {\Large $2,3$};

\node[cell] (a20) at (0,1.5) {\Large $1,3$};
\node[cell] (a21) at (3,1.5) {\Large $4,1$};
\node[cell] (a22) at (6,1.5) {\Large $2,2$};
\node[cell] (a23) at (9,1.5) {\Large $3,4$};

\node[cell] (a30) at (0,0.0) {\Large $3,4$};
\node[cell] (a31) at (3,0.0) {\Large $2,3$};
\node[cell] (a32) at (6,0.0) {\Large $1,4$};
\node[cell] (a33) at (9,0.0) {\Large $4,2$};

\draw[blue!60, line width=2pt] (a00) circle (0.7);
\draw[blue!60, line width=2pt] (a13) circle (0.7);
\draw[blue!60, line width=2pt] (a22) circle (0.7);
\draw[blue!60, line width=2pt] (a31) circle (0.7);

\draw[red!60, line width=2pt, dashed] (a00) circle (0.6);
\draw[red!60, line width=2pt, dashed] (a11) circle (0.6);
\draw[red!60, line width=2pt, dashed] (a22) circle (0.6);
\draw[red!60, line width=2pt, dashed] (a33) circle (0.6);

\end{tikzpicture}
        \caption{An example on which any ordinal stable algorithm has unbounded distortion. The entry $(m_i, w_j)$ consists of a pair of numbers; the first indicates the rank of $w_j$ for $m_i$ and the second indicates the rank of $m_i$ for $w_j$. For example, the entry $(m_2,w_3)$ is $(1,3)$ which indicates that $m_2$ ranks $w_3$ first among the women, and $w_3$ ranks $m_2$ third among the men. In this preference profile there are only two stable matchings, the man-optimal matching, consisting of entries circled in blue with solid lines, and the woman-optimal matching, consisting of entries circled in red with dashed lines.}
        \label{fig: lower-bound}
    \end{figure}
\begin{proof}
Consider the preference profile $\succ$ shown in \cref{fig: lower-bound}, in which in the entry $(m_i,w_j)$, the first number indicates the rank of $w_j$ for $m_i$ and the second number indicates the rank of $m_i$ for $w_j$. First, notice that on this profile, there are only two stable matchings, indicated by the circled entries. The entries circled by blue solid lines in fact correspond to the man-optimal matching $\mu_M^*$, and the entries circled by red dashed lines correspond to the woman-optimal matching $\mu_W^*$. It can be verified by inspection that, for any of these matchings, there are no blocking pairs, and hence they are stable. It can also be verified by inspection that these are the only stable matchings on $\succ$.\footnote{In \cref{app:early-lemma-proof-with-rotations} we provide a complete argument that employs the concept of \emph{rotations}, which we define and discuss in \cref{app:background-stable-matchings}.} \medskip 

\noindent Now consider two valuation profiles, both consistent with the preference ranking $\succ$.
\begin{itemize}[leftmargin=*]
    \item[] \textbf{Profile $\bv_1$:} For each man $m_i \in M$, we have $v_{m_i}(f(m_i)) = 1$, and $v_{m_i}(w_j) = 0$, for any woman $w_j \neq f(m_i)$. For the women, for $i \in \{1,3\}$, we have $v_{w_i}(f(w_i)) = 1$, and $v_{w_i}(m_j) = 0$, for any man $m_j \neq f(w_i)$; we also have $v_{w_2}(m_3)=v_{w_2}(m_2)=1/2$, and $v_{w_2}(m_1)=v_{w_2}(m_4)=0$, and $v_{w_4}(m_1)=v_{w_4}(m_4)=1/2$, and $v_{w_4}(m_2)=v_{w_4}(m_3)=0$. 

    In other words, every man has value $1$ for his top choice and $0$ for all other women. Women $w_1$ and $w_3$ have value $1$ for their top choice and $0$ for all other men, whereas women $w_2$ and $w_4$ have value $1/2$ for the two top choices and value $0$ for the remaining two men. 
    
    \item[] \textbf{Profile $\bv_2$:} For each woman $w_i \in W$, we have $v_{w_i}(f(w_i)) = 1$, and $v_{w_i}(m_j) = 0$, for any man $m_j \neq f(w_i)$. For the men, for $i \in \{1,3\}$, we have $v_{m_i}(f(m_i)) = 1$, and $v_{m_i}(w_j) = 0$, for any woman $w_j \neq f(m_i)$; we also have $v_{m_2}(w_1)=v_{m_2}(w_4)=1/2$, and $v_{m_2}(w_2)=v_{m_2}(w_3)=0$, and $v_{m_4}(w_4)=v_{m_4}(w_2)=1/2$, and $v_{m_4}(w_1)=v_{m_4}(w_3)=0$. 

    In other words, every woman has value $1$ for her top choice and $0$ for all other men. Men $m_1$ and $m_3$ have value $1$ for their top choice and $0$ for all other women, whereas men $m_2$ and $m_4$ have value $1/2$ for the two top choices and value $0$ for the remaining two women. 
\end{itemize}
Let $\mu$ be the stable matching outputted by $\mathcal{A}$ on $\succ$. If $\mu=\mu_M^*$, then consider the valuation profile $\bv_1$, and observe that $\SW(\mu\mid\bv_1)=0$, whereas $\SW(\mu_W^*\mid\bv_1)=1$; this implies that the distortion is unbounded. Similarly, if $\mu = \mu_W^*$, then consider the valuation profile $\bv_2$, and observe that $\SW(\mu\mid\bv_2)=0$, whereas $\SW(\mu_M^*\mid\bv_2)=1$. Again, this implies that the distortion is unbounded. Since there are no other stable matchings in $\succ$, this proves the claim. 
\end{proof}

\begin{remark}\label{rem:all-randomized-lower}
While we do not impose any normalization assumption on the valuation functions as part of our model, we remark that the valuation functions that we use in all of the impossibility results of this section are in fact \emph{unit-sum} normalized (i.e., for any agent $a_i \in A$, $\sum_{a_j \in \bar{A}}v_{a_i}(a_j)=1$), the standard assumption in the study of ordinal algorithms in the context of distortion (e.g., see \cite{anshelevich2021distortion,aziz2020justifications}). However, this does not apply to our lower bounds for query-enhanced algorithms in \cref{sec:queries}, which, as is typical in the literature (e.g., see \citep{amanatidis2021peeking,amanatidis2022few,amanatidis2024dont,ebadian2025every}), are studied without normalization assumptions.  
\end{remark}

\noindent Driven by the impossibility of \cref{thm:any-ordinal-distortion-unbounded}, we now turn to randomized algorithms, in search for better distortion guarantees. It is fairly easy to see that the simple algorithm that outputs the man-optimal and the woman-optimal matching equiprobably (by running the men-proposing and women-proposing versions of the Deferred Acceptance algorithm with probability $1/2$ each) achieves a distortion of $2$; furthermore, the algorithm is clearly randomized stable, as it randomizes between two stable matchings. We present the algorithm and the brief proof of its distortion next, see \cref{alg:rand-mowo}. \medskip

\begin{algorithm}[t]
\caption{\textsc{Random Man-Optimal or Woman-Optimal} (\RMOWO)}
\label{alg:rand-mowo}

\KwIn{A preference profile $\succ = (\succ_M, \succ_W)$.}
\KwOut{An (ex-post) randomized stable matching $\mu \in \mathcal{M}_s(\succ)$.}
\BlankLine

$\mu_M^* = \textsc{Deferred Acceptance}(\succ)$ (men proposing version)\\
$\mu_W^* = \textsc{Deferred Acceptance}(\succ)$ (women proposing version) 
\BlankLine

\Return $\mu^*_M$ with probability $1/2$ and $\mu^*_W$ with probability $1/2$

\end{algorithm}

\begin{theorem}
The distortion of the \RMOWO algorithm is at most $2$.
\end{theorem}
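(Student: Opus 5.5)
The plan is to compare the optimal stable matching $\mu^*$ against the two extreme stable matchings side by side, using the optimality of each for its own side, established in \cref{def:man-optimal-woman-optimal}. Fix any valuation profile $\bv$ and let $\mu^* \in \mathcal{M}_s(\succ_\bv)$ be an optimal stable matching as in \cref{def:optimal-stable-matching}. Since $\mu^*$ is stable, every man $m_i$ is matched in $\mu^*$ to a stable partner $\mu^*(m_i) \in S_{m_i}$. The man-optimal matching gives each man his most preferred stable partner, so $\mu_M^*(m_i) \succeq_{m_i} \mu^*(m_i)$.

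The first step is to turn this ordinal comparison into a cardinal one. Because the preferences are consistent with the valuations (ties broken arbitrarily but consistently), we get $v_{m_i}(\mu_M^*(m_i)) \ge v_{m_i}(\mu^*(m_i))$. This is the one point that needs care. If $v_{m_i}(a) < v_{m_i}(b)$ held, then $b \succ_{m_i} a$ would be forced. Hence $a \succeq_{m_i} b$ implies $v_{m_i}(a)\ge v_{m_i}(b)$, regardless of how ties are broken. Summing over all men gives $\SW_M(\mu_M^*\mid\bv) \ge \SW_M(\mu^*\mid\bv)$. The symmetric argument for the women gives $\SW_W(\mu_W^*\mid\bv) \ge \SW_W(\mu^*\mid\bv)$.

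The second step is to combine the two bounds using non-negativity of the valuations. \RMOWO outputs each extreme matching with probability $1/2$, so its expected welfare is
\[
\tfrac12\SW(\mu_M^*\mid\bv)+\tfrac12\SW(\mu_W^*\mid\bv) \;\ge\; \tfrac12\SW_M(\mu_M^*\mid\bv)+\tfrac12\SW_W(\mu_W^*\mid\bv) \;\ge\; \tfrac12\bigl(\SW_M(\mu^*\mid\bv)+\SW_W(\mu^*\mid\bv)\bigr)=\tfrac12\SW(\mu^*\mid\bv).
\]
Here the first inequality drops the women's welfare in $\mu_M^*$ and the men's welfare in $\mu_W^*$, which is valid because both are non-negative. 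This gives distortion at most $2$ for every $\bv$.

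One degenerate case needs handling: if $\SW(\mu^*\mid\bv)=0$, the ratio is taken as $1$ by convention. I expect no real obstacle here; the only subtle point is the tie-breaking remark in the first step.
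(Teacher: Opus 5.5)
Your proposal is correct and follows essentially the same argument as the paper: the men's welfare in $\mu^*$ is bounded by their welfare in $\mu_M^*$, the women's by theirs in $\mu_W^*$, and the two bounds are averaged. Your extra care with consistency under tie-breaking, dropping the non-negative cross terms, and the $0/0$ case only makes explicit what the paper's one-line justification leaves implicit.
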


\begin{proof}
Let $\bv$ be any valuation profile and let $\mu^*(\bv)$ be the optimal stable matching on $\bv$, i.e., $\mu^*(\bv) \in \arg\max_{\mu \in \mathcal{M}_s}\SW(\mu\mid\bv)$. The expected welfare of the algorithm is
\[
\frac12 \cdot \SW(\mu^*_M\mid\bv) + \frac12 \cdot \SW(\mu^*_W\mid\bv) = \frac12 \cdot \left(\SW(\mu^*_M\mid\bv)+\SW(\mu^*_W\mid\bv)\right) \geq \frac12  \SW(\mu^*(\bv)\mid \bv),
\]
where the last inequality follows from the fact that in any stable matching, any man (resp. any woman) is matched with a woman (resp. a man) that is not better than his (resp. her) partner in the man-optimal (resp. woman-optimal) matching.
\end{proof}

\noindent The rather straightforward nature of the \RMOWO algorithm motivates the question of whether a more involved algorithm could achieve a better distortion bound. In the following theorem, we prove that this is not the case. The theorem establishes that in fact, the simple \RMOWO algorithm is the best possible among all randomized stable algorithms for the problem.

\begin{theorem}\label{thm:lower-bound-ordinal-randomized}
Let $\mathcal{A}$ be an ordinal randomized stable algorithm. Then $\text{dist}(\mathcal{A}) \geq 2$.
\end{theorem}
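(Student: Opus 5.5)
We reuse the preference profile $\succ$ of \cref{fig: lower-bound}, which has exactly two stable matchings, $\mu_M^*$ and $\mu_W^*$. Any ordinal randomized stable algorithm $\mathcal{A}$ maps $\succ$ to a lottery over these two matchings. Say it outputs $\mu_M^*$ with probability $p$ and $\mu_W^*$ with probability $1-p$. Since $\mathcal{A}$ is ordinal, this $p$ is the same for every valuation profile consistent with $\succ$. The plan is to exhibit two consistent valuation profiles that pull in opposite directions, so that $\max(p,1-p)\ge 1/2$ forces distortion at least $2$.

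\textbf{Step 1: the two profiles.} I want profiles in which the welfare of one of the two stable matchings is essentially zero while the other is positive. The profiles $\bv_1$ and $\bv_2$ from the proof of \cref{thm:any-ordinal-distortion-unbounded} already do this.
\begin{itemize}
\item Under $\bv_1$ we have $\SW(\mu_M^*\mid\bv_1)=0$ and $\SW(\mu_W^*\mid\bv_1)=1$.
\item Under $\bv_2$ we have $\SW(\mu_W^*\mid\bv_2)=0$ and $\SW(\mu_M^*\mid\bv_2)=1$.
\end{itemize}
I will recheck these values against the table. Every man's top choice and every woman's top choice lies outside both stable matchings. The only positive contributions come from the $1/2$ values, namely $w_2,w_4$ valuing their second choices (under $\bv_1$) and $m_2,m_4$ valuing theirs (under $\bv_2$).

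\textbf{Step 2: compute the bound.} The algorithm's expected welfare is $1-p$ under $\bv_1$ and $p$ under $\bv_2$, while the optimum is $1$ in both cases. The distortion is therefore at least $\max\{1/(1-p),\,1/p\}\ge 2$, with equality at $p=1/2$. This gives the theorem and shows that \RMOWO is optimal.

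\textbf{Expected obstacle.} The main risk is that the welfare values in Step 1 do not check out exactly, for example if some stable partner receives a value I have not accounted for. If that happens, I will instead build $\bv_1$ as a unit-sum profile in which each woman puts all her value on her partner in $\mu_W^*$ (and each man likewise on his partner in $\mu_M^*$ for $\bv_2$). I will then adjust the values so that they remain consistent with $\succ$, setting every other value to $0$ or splitting it evenly among higher-ranked agents in a tie, as permitted by the tie convention in the Remark. The ratio calculation is the same under this construction.
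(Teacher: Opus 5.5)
Your argument is correct, and it takes a genuinely different and much shorter route than the paper. The paper does not reuse the instance of \cref{fig: lower-bound}. Instead it works on the $n\times n$ reverse cyclic shift profile of \cref{fig:reverse-cyclic-shift-profile}. In each man--suitor pair it makes exactly one agent ``selective'' (value $1$ on its favorite) and the other ``indifferent'' (value $1/n$ on everyone), choosing which is which according to the algorithm's marginals $p_{m_i,f(m_i)}$. It then exhibits a stable matching of welfare $n+1$ against an expected welfare of at most $n/2+2$, so the ratio only tends to $2$ as $n\to\infty$.

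Your two-profile argument gives $\max\{1/p,1/(1-p)\}\ge 2$ exactly, already at $n=4$, and your profiles are also unit-sum. What the paper's heavier construction buys is that it never uses the stability of $\mathcal{A}$'s output. That is what supports the remark after the theorem: the bound of $2$ holds even for randomized algorithms that may output unstable matchings, and hence for weaker randomized stability notions. Your proof depends essentially on the lottery being supported on $\{\mu_M^*,\mu_W^*\}$. For example, the unstable matching $\{(m_1,w_2),(m_2,w_3),(m_3,w_1),(m_4,w_4)\}$ has welfare above $1$ under both $\bv_1$ and $\bv_2$. So your instance gives nothing beyond ex-post stable algorithms, which is, of course, all the theorem asks for.

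Two small remarks. First, when you recheck Step~1, use the verbal description of $\bv_2$, in which $m_2$ and $m_4$ put $1/2$ on their top two choices: $v_{m_2}(w_3)=v_{m_2}(w_4)=1/2$ and $v_{m_4}(w_3)=v_{m_4}(w_2)=1/2$. This is what your ``second choices'' description amounts to, and it does give $\SW(\mu_M^*\mid\bv_2)=1$ and $\SW(\mu_W^*\mid\bv_2)=0$. The explicitly listed values in the proof of \cref{thm:any-ordinal-distortion-unbounded} ($v_{m_2}(w_1)=1/2$, $v_{m_4}(w_4)=1/2$) are a typo and are inconsistent with $\succ$.

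Second, your fallback is unnecessary, and as stated it would not preserve the ratio. The agents $m_1,m_3,w_1,w_3$ have the same partner in both stable matchings, so any value they place on that partner is credited to both matchings. The construction works precisely because these four agents put all their weight on top choices that lie outside both stable matchings.
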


\begin{proof}
For any agent $a_i \in M \cup W$, recall the definitions of its \emph{favorite} and its \emph{suitor} from \cref{sec:prelims}. We will consider any preference profile $\succ$ with the following properties:
\begin{itemize}
    \item[-] Every agent has a different favorite.
    \item[-] Every agent has a different suitor.
    \item[-] Every man ranks his suitor second.
    \item[-] Every woman ranks her suitor last.
\end{itemize}
Notice that the second property is in fact implied by the first. A concrete preference profile that achieves the properties above is the following \emph{reverse cycle shift profile}, shown in \cref{fig:reverse-cyclic-shift-profile}: 
\begin{itemize}
    \item[-] $m_i$ has preference ranking $w_i \succ_{m_i} w_{i+1} \succ_{m_i} w_{i+2} \succ_{m_i} \ldots \succ_{m_i} w_n \succ_{m_i} w_{1} \succ_{m_i }\ldots \succ_{m_i} w_{i-1}$
    \item[-] $w_i$ has preference ranking $m_{i-1} \succ_{w_i} m_{i-2} \succ_{w_i} m_{i-3} \succ_{w_i} \ldots \succ_{w_i} m_1 \succ_{w_i} m_{n} \succ_{w_i }\ldots \succ_{w_i} m_i$
\end{itemize}

\usetikzlibrary{matrix, shadows}

\begin{figure}[t]
\centering
\begin{tikzpicture}[
    scale=0.8, transform shape,
    cell/.style={
        minimum width=1.8cm, 
        minimum height=1.0cm,
        align=center,
        font=\fontsize{13}{16}\selectfont
    },
    rowlbl/.style={anchor=east, font=\fontsize{13}{16}\selectfont},
    collbl/.style={anchor=south, font=\fontsize{13}{16}\selectfont},
]


\draw[line width=1.2pt, rounded corners]
  (-1.2, -1.8) rectangle (11.5, 4.2);

\node[collbl] at (0,    4.4) {$w_1$};
\node[collbl] at (2.2,  4.4) {$w_2$};
\node[collbl] at (4.4,  4.4) {$\dots$};
\node[collbl] at (7.0,  4.4) {$w_{n-1}$};
\node[collbl] at (10.0, 4.4) {$w_n$}; 

\node[rowlbl] at (-1.6, 3.6) {$m_1$};
\node[rowlbl] at (-1.6, 2.4) {$m_2$};
\node[rowlbl] at (-1.6, 1.2) {$\vdots$};
\node[rowlbl] at (-1.6, 0.0) {$m_{n-1}$};
\node[rowlbl] at (-1.6, -1.2) {$m_n$};

\node[cell] at (0,    3.6) {$1,n$};
\node[cell] at (2.2,  3.6) {$2,1$};
\node[cell] at (4.4,  3.6) {$\dots$};
\node[cell] at (7.0,  3.6) {$n-1,n-2$};
\node[cell] at (10.0, 3.6) {$n,n-1$};

\node[cell] at (0,    2.4) {$n,n-1$};
\node[cell] at (2.2,  2.4) {$1,n$};
\node[cell] at (4.4,  2.4) {$\dots$};
\node[cell] at (7.0,  2.4) {$n-2,n-3$};
\node[cell] at (10.0, 2.4) {$n-1,n-2$};

\node[cell] at (0,    1.2) {$\vdots$};
\node[cell] at (2.2,  1.2) {$\vdots$};
\node[cell] at (4.4,  1.2) {$\ddots$};
\node[cell] at (7.0,  1.2) {$\vdots$};
\node[cell] at (10.0, 1.2) {$\vdots$};

\node[cell] at (0,    0.0) {$3,2$};
\node[cell] at (2.2,  0.0) {$4,3$};
\node[cell] at (4.4,  0.0) {$\dots$};
\node[cell] at (7.0,  0.0) {$1,n$};
\node[cell] at (10.0, 0.0) {$2,1$};

\node[cell] at (0,   -1.2) {$2,1$};
\node[cell] at (2.2, -1.2) {$3,2$};
\node[cell] at (4.4, -1.2) {$\dots$};
\node[cell] at (7.0, -1.2) {$n,n-1$};
\node[cell] at (10.0,-1.2) {$1,n$};

\end{tikzpicture}
    \caption{The \emph{reverse cyclic shift profile} used in the proof of \cref{thm:lower-bound-ordinal-randomized}. The entry $(m_i, w_j)$ consists of a pair of numbers; the first indicates the rank of $w_j$ for $m_i$ and the second indicates the rank of $m_i$ for $w_j$. For example, the entry $(m_2,w_1)$ is $(n,n-1)$ which indicates that $m_2$ ranks $w_1$ last among the women, and $w_1$ ranks $m_2$ second to last among the men. Notice that the preference of each man $m_i$ (in the rows of the table) is formed by placing $w_i$ first, $w_{i+1}$ second, and so on, with the preference \emph{cycling around} once the index reaches $n$. The preference of each woman is formed similarly, but in a shifted and reverse way: woman $w_i$ places $m_{i-1}$ first, $m_{i-2}$ second and so on, with preference cycling around once the index reaches $1$.} 
    \label{fig:reverse-cyclic-shift-profile}
\end{figure}

\noindent We will construct a valuation profile $\bv$ consistent with $\succ$, which will feature only two types of agents, which we refer to as either \emph{selective} or \emph{indifferent}. Specifically,

\begin{itemize}
\item[-] A selective agent $a_i \in A$ has value $1$ for its favorite, and $0$ for all the other agents, i.e., $v_{a_i}(f(a_i))=1$, and $v_{a_i}(a_j)=0$ for all $a_j \in \bar{A} \setminus \{f(a_i)\}$.
\item[-] An indifferent agent $a_i \in A$ has value $1/n$ for all of the agents on the other side, i.e., $v_{a_i}(a_j)=1/n$ for all $a_j \in \bar{A}$.
\end{itemize}
\noindent Whether an agent $a_i$ will be selective or indifferent will depend on the probabilities assigned by $\mu$ to certain pairs in which $a_i$ is part of; this is possible, since $\mathcal{A}$ is ordinal, and hence these probabilities are unchanged for any consistent underlying valuation profile. 

More specifically, consider any man $m_i \in M$; $m_i$ will be selective if and only if $m_i$ is matched with $f(m_i)$ with probability at most $1/2$, i.e., if $p_{m_i,f(m_i)} \leq 1/2$. Otherwise, the man will be indifferent. If man $m_i$ is selective, then his suitor $s(m_i)$ will be indifferent, and if he is indifferent, then $s(m_i)$ will be selective. Notice that, by definition of $\bv$ and the preference profile $\succ$, the following properties hold:
\begin{condition}
    \item[-] A selective man is matched with his favorite with probability at most $1/2$. \label{enum:property1}
    \item[-] A selective woman is matched with her favorite with probability at most $1/2$.\label{enum:property2}
    \item[-] Any selective agent $a_i$ is matched with its favorite $f(a_i)$ with probability at most $1/2$.\label{enum:property3}
    \item[-] For any $i$, either $m_i$ or $s(m_i)$ will be selective, but not both. \label{enum:property4}
\end{condition}
To be more specific, \cref{enum:property1} follows by the way in which selective men are constructed. \cref{enum:property2} follows from the following chain of facts:
\begin{quote}
   $w_i$ is selective $\Rightarrow$ $f(w_i)$ is not selective $\Rightarrow$ $p_{f(w_i),f(f(w_i))} > 1/2$ $\Rightarrow$ $p_{f(w_i),w_i} < 1/2$,
\end{quote}
since by construction of $\succ$, $w_i \neq f(f(w_i))$ i.e., woman $w_i$ is not the favorite of her favorite man. \cref{enum:property3} follows directly from \cref{enum:property1,enum:property2}. 
\cref{enum:property4} follows again directly by the way in which selective women are constructed. \medskip 

\noindent Now let us consider the expected social welfare of $\mu$ on $\bv$. Consider any pair $(m_i, s(m_i))$ of a man and his suitor. Since every man has a different suitor (and hence, obviously, every woman is the suitor of some man, since $|M|=|W|=n$), these pairs are disjoint and span the whole set of agents. Therefore, the expected social welfare of $\mu$ can be bounded as
\begin{align*}
\SW(\mu\mid\bv) &= \sum_{(m_i, s(m_i)) \in M \times W} \left(\sum_{w_j \in W}p_{m_i,w_j}v_{m_i}(w_j) + \sum_{m_j \in M}p_{s(m_i),m_j}v_{s(m_i)}(m_j)\right) \\
&\leq \sum_{(m_i, s(m_i)) \in M \times W} \left(\frac12 + 2/n\right) = \frac{n}{2} + 2,
\end{align*}
where the inequality above follows from \cref{enum:property3,enum:property4}: Indeed, by \cref{enum:property4} only one agent of the pair $(m_i, s(m_i))$ is selective, and by \cref{enum:property3}, its expected contribution to the social welfare of that agent is at most $1/2 \cdot 1 = 1/2$. The other agent of the pair is not selective and hence its expected contribution to the social welfare is $1/n$.  \medskip

Now consider the following matching $\mu^*$. For any man $m_i \in M$: 
\begin{itemize}
    \item[-] If $m_i$ is selective, let $\mu^*(m_i)=f(m_i)$, i.e., the man is matched to his favorite in $\mu^*$. 
\item[-] If $m_i$ is not selective (in which case, by \cref{enum:property4}, $s(m_i)$ is selective), let $\mu^*(m_i)=s(m_i)$, the man is matched to his suitor in $\mu^*$.
\end{itemize}
Similarly to above, we can express the social welfare of $\mu^*$ as the sum of the contributions to the social welfare by each pair $(m_i,s(m_i))$ of a man and his suitor, as follows:
\begin{align*}
\SW(\mu^*\mid\bv) &= \sum_{(m_i, s(m_i)) \in M \times W} \left(v_{m_i}(\mu^*(m_i)) + v_{s(m_i)}(\mu^*(s(m_i)))\right)\\
& = \sum_{(m_i, s(m_i)) \in M \times W} \left(1+\frac{1}{n}\right)\\
&= n+1
\end{align*}
The second equation follows from the fact that the selective agent in the pair $(m_i, s(m_i))$ is matched with its favorite, and hence contributes $1$ to the social welfare; the other agent is indifferent, and contributes $1/n$. \medskip

\noindent We have that $\SW(\mu^*\mid\bv) / \SW(\mu\mid\bv) \geq 2-\frac{3}{n/2+2}$, which goes to $2$ as $n \rightarrow \infty$. Therefore, to establish the desired distortion bound, it suffices to establish that $\mu^*$ is a stable matching.  

To see this, consider any pair $(m_i,w_j)$ as a potential blocking pair. Notice that if either $m_i$ or $w_j$ is selective, then it is matched to its favorite in $\mu^*$ by construction, and hence it cannot be part of any blocking pair. This implies that both $m_i$ and $w_j$ are indifferent. Again, by construction, in $\mu^*$ they are both matched to their suitors $s(m_i)$ and $s(w_j)$. By construction of the preference profile $\succ$, man $m_i$ ranks his suitor $s(m_i)$ second; therefore, the only way in which $m_i$ could be part of a blocking pair is if $w_j$ is $m_i$'s favorite $f(m_i)$. But this is the same as saying that $m_i$ is $w_j$'s suitor $s(w_j)$, whom, by construction of $\succ$, $w_j$ ranks last. This establishes that $(m_i, w_j)$ cannot be a blocking pair, and hence $\mu^*$ is stable.  
\end{proof}

\begin{remark} The proof of \cref{thm:lower-bound-ordinal-randomized} actually does not use the stability of the matching produced by the algorithm as a property. In that sense, it in fact shows something quite stronger, namely that that any ordinal randomized matching algorithm (which might even produce \emph{unstable} matchings) cannot approximate the best stable matching within a factor better than $2$. In turn, this implies a lower bound of $2$ for any of the randomized stability notions of the literature, even the weaker ones. 
\end{remark}
\section{Improved Distortion Bounds via Queries}\label{sec:queries}

In this section, we go beyond simply ordinal algorithms and consider algorithms that elicit (limited) cardinal information about the valuation functions of the agents via \emph{queries}. For this, we employ the query model of \citet{amanatidis2021peeking}, in which the algorithm, on top of having access to the ordinal information, is equipped with a \emph{query oracle} $\mathcal{Q}$ which it can use to ask the agents a set of \emph{value queries}. A value query for an agent $a_i \in A$ inputs an ordinal preference profile $\succ$, an agent $a_j \in \bar{A}$, and possibly the answers (of all the agents) to the previous queries, and returns the value $v_{a_i}(a_j)$ of agent $a_i$ for agent $a_j$; we refer to these queries as \emph{adaptive} queries. A special case is that of \emph{non-adaptive} queries, in which the input to the query does not depend on the answers to the previous queries by the agents. Obviously, regardless of its type, an answer to a query must be consistent with the values revealed by previous queries (i.e., it must be consistent with the preference ranking of the agent). \medskip 

\noindent To define the query model formally, we first define the notion of a \emph{partial valuation profile}. 

\begin{definition}[Partial valuation profile $\bv^p$]
A partial valuation profile $\bv^p$ is a valuation profile in which only some \emph{entries} are known, and the others are unknown. Formally, the valuation functions $v_{m_i}: W \rightarrow \mathbb{R}$ and $v_{w_i}:M \rightarrow \mathbb{R}$ are not surjective, and thus define \emph{partial valuation vectors} that have fewer than $n$ elements. A partial valuation profile is defined as a vector of those partial valuation vectors. We will say that $\bv^p$ is a \emph{partial restriction} of $\bv$ if $\bv$ can be recovered from $\bv^p$ by completing its entries. 
\end{definition}

\noindent Let $q^t(\bv)$ denote the set of answers to the first $t$ queries performed by the algorithm, when the valuation profile is $\bv$, and notice that $q^t(\bv)$ is a partial valuation profile which is partial restriction of $\bv$. We define:
\begin{itemize}
    \item[-] An \emph{adaptive} query for agent $a_i \in A$ as a function $\mathcal{Q}_i$ with input $\succ_\bv$, $q^t(\bv)$, and an agent $a_j \in \bar{A}$, and output $v_{a_i}(a_j)$. 
    \item[-] A \emph{non-adaptive} query for agent $a_i \in A$ as a function $\mathcal{Q}_i$ with input $\succ_\bv$ and an agent $a_j \in \bar{A}$, and output $v_{a_i}(a_j)$.
\end{itemize}
The distortion of an algorithm $\mathcal{A}$ with query oracle $\mathcal{Q}$ is defined analogously to the case or ordinal algorithms, presented in \cref{sec:prelims}. \medskip

\noindent \textbf{Revealed Social Welfare.} A quantity that will be useful for our analyses will be the social welfare of a matching $\mu$ on $\bv$, restricted only to the answers of the queries $q^t(\bv)$. Formally, given a partial valuation profile $\bv^p$, the \emph{revealed social welfare} of a matching $\mu$ on $\bv^p$ is defined as
\begin{equation*}
    \SW(\mu\mid\bv^p) = \sum_{m_i \in M} v_{m_i}(\mu(m_i)) \cdot \mathbbm{1}_{v_{m_i}(\mu(m_i)) \in \bv^p} + \sum_{w_i \in W} v_{w_i}(\mu(w_i))\cdot \mathbbm{1}_{v_{w_i}(\mu(w_i)) \in \bv^p}
\end{equation*}
where $\mathbbm{1}$ is the indicator function which specifies for a given agent $i \in M \cup W$, whether the value for the agent's partner in $\mu$ is in the partial profile $\bv^p$ or not.  

\subsection{Warmup: 1 Query Per Agent}
We first consider the following question: What is the best distortion that a (deterministic) algorithm for stable matching can achieve while using one query per agent? It is not difficult to show that there is a simple algorithm which achieves a distortion of $2$, hence matching the distortion of the best randomized ordinal algorithm without queries.  The algorithm is a query-enhanced variant of Deferred Acceptance (\cref{alg:da}), which selects, among the man-optimal and the woman-optimal matching, the one with the highest revealed social welfare; see \cref{alg:1-mowo}.

\begin{algorithm}[t]
\caption{\textsc{1-Query Man-Optimal or Woman-Optimal (1-MoWo)}}
\label{alg:1-mowo}

\KwIn{A preference profile $\succ = (\succ_M, \succ_W)$.}
\KwOut{A stable matching $\mu \in \mathcal{M}_s(\succ)$.}

\BlankLine
$\mu_M^* = \textsc{Deferred Acceptance}(\succ)$ (men proposing version)\\
$\mu_W^* = \textsc{Deferred Acceptance}(\succ)$ (women proposing version) 
\BlankLine

\BlankLine
\ForEach{man $m \in M$}{
    Query man $m$ for the woman he is matched with in $\mu^*_M$\;
}

\ForEach{woman $w \in W$}{
    Query woman $w$ for the man she is matched with in $\mu^*_W$\;
}

\BlankLine
\Return $\displaystyle 
\arg\max\!\left\{
    \SW\!\left(\mu^*_M \mid q^1(\mathbf{v})\right),
    \SW\!\left(\mu^*_W \mid q^1(\mathbf{v})\right)
\right\}$\;
\tcc{The matching with the maximum revealed social welfare among the two}

\end{algorithm}

\begin{theorem}\label{thm:1-MOWO-2-distortion}
The distortion of the \textsc{1-MoWo} algorithm is at most $2$.
\end{theorem}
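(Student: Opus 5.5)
The plan is to compare the returned matching with the quantity $\SW_M(\mu_M^*\mid\bv)+\SW_W(\mu_W^*\mid\bv)$, which dominates the optimum, using the fact that the queried entries are exactly those appearing in this quantity. Fix any valuation profile $\bv$ and let $\mu^*(\bv)$ be an optimal stable matching. The first step is to identify the revealed welfares. Every man is queried for $\mu_M^*(m)$ and every woman for $\mu_W^*(w)$, so $q^1(\bv)$ contains exactly these $2n$ entries. Hence
\[
\SW(\mu_M^*\mid q^1(\bv)) \geq \SW_M(\mu_M^*\mid\bv) \quad\text{and}\quad \SW(\mu_W^*\mid q^1(\bv)) \geq \SW_W(\mu_W^*\mid\bv).
\]
These are inequalities rather than equalities only because some women's partners in $\mu_M^*$ may coincide with their partners in $\mu_W^*$, and those values are then also revealed. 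I also note that the revealed welfare never exceeds the true welfare, since all values are non-negative.

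The second step is to bound the optimum, exactly as in the proof for \RMOWO. In any stable matching, each man weakly prefers (ordinally) $\mu_M^*(m)$ to his partner. Since the preferences are consistent with the valuations, $v_m(\mu_M^*(m))\geq v_m(\mu^*(m))$, and symmetrically for women with $\mu_W^*$. Therefore
\[
\SW(\mu^*\mid\bv)\leq \SW_M(\mu_M^*\mid\bv)+\SW_W(\mu_W^*\mid\bv).
\]
Tie-breaking causes no trouble here, because an ordinal preference of $a$ over $b$ implies $v(a)\geq v(b)$.

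The third step combines the two. Let $\hat\mu$ be the output, and set $R_M=\SW(\mu_M^*\mid q^1(\bv))$ and $R_W=\SW(\mu_W^*\mid q^1(\bv))$. Then
\[
\SW(\hat\mu\mid\bv)\geq \max\{R_M,R_W\}\geq \tfrac12(R_M+R_W)\geq \tfrac12\bigl(\SW_M(\mu_M^*\mid\bv)+\SW_W(\mu_W^*\mid\bv)\bigr)\geq \tfrac12\SW(\mu^*\mid\bv).
\]
The first inequality holds because the true welfare dominates the revealed welfare, and the last uses the second step. This gives distortion at most $2$.

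The only delicate point is the first inequality: the true welfare of the chosen matching must be at least its revealed welfare. This is where non-negativity of the values is used, and I expect no real obstacle beyond stating it carefully.
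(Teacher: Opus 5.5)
Your proof is correct and follows essentially the same route as the paper: you bound each side's welfare in the optimal stable matching by the revealed welfare of $\mu_M^*$ (for the men) and of $\mu_W^*$ (for the women), and then take the better of the two. You also correctly note two points the paper glosses over: the revealed welfare is at least (not exactly) the corresponding side's welfare, and the true welfare of the output dominates its revealed welfare because values are non-negative.
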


\begin{proof}
In the man-optimal matching $\mu^*_M$, every man has a value for his assigned partner that is at least as high as in any other stable matching, in particular also in the optimal stable matching. Likewise, in the woman-optimal matching $\mu^*_W$, every woman has has a value for her assigned partner that is at least as high as her value for her partner in the optimal stable matching. The social welfare of the men in  $\mu^*_M$ is precisely $\SW(\mu^*_M\mid q^1(\bv))$ since every man is queried for his match in $\mu^*_M$. Likewise, the social welfare of the women in  $\mu^*_W$ is precisely $\SW(\mu^*_W\mid q^1(\bv))$ since every woman is queried for her match in $\mu^*_W$. Therefore, for any stable matching $\mu \in \mathcal{M}_s$, we have
\begin{align*}
\SW(\mu^*_M\mid q^1(\bv)) \geq \sum_{m_i \in M} v_{m_i}(\mu(m_i)) \  \text{ and } \ \SW(\mu^*_W\mid q^1(\bv)) \geq \sum_{w_i \in M} v_{w_i}(\mu(w_i))
\end{align*}
This immediately implies that 
\[
\max\{\SW(\mu^*_M\mid q^1(\bv)), \SW(\mu^*_W\mid q^1(\bv)) \geq \frac{1}{2} \cdot \max_{\mu \in \mathcal{M}_s}\SW(\mu\mid \bv)\}
\]
and the bound follows. 
\end{proof}

\noindent It is not hard to see that any deterministic algorithm that uses 1 query per agent has distortion at most $2$, and hence the bound of \cref{thm:1-MOWO-2-distortion} is tight.
In fact, in \cref{sec:more-queries-impossibility} below we show something quite stronger: to beat the bound of $2$, any deterministic algorithm must perform $\Omega(\log n)$ \emph{adaptive} queries or $\Omega(n)$ \emph{non-adaptive} queries.

\subsection{Impossibility Results for More Queries}\label{sec:more-queries-impossibility}

Our lower bounds in this section will make use of the following preference profile $\succ$, which we refer to as the \emph{cyclic shift profile}.\footnote{Incidentally, the same preference profile was identified by \citet{knuth1997stable} as a profile on which the \textsc{Deferred Acceptance} algorithm achieves the worst possible running time, see \citep[pp. 15]{gusfield1989stable}. We note that, similarly to the preference profile used in the proof of \cref{thm:lower-bound-ordinal-randomized}, the description of the profile is slightly informal. A fully rigorous definition of the profile would set the $j$-th most preferred woman of $m_i$ to be $w_{(i-2+j)\pmod n +1}$, and the $j$-th most preferred man of $w_i$ to be $m_{(i-1+j)\pmod n +1}$. To avoid the cumbersome notation, we elected to go with the slightly more informal definition instead.}

\begin{definition}[The Cyclic Shift Profile]\label{def:cyclic-shift-profile}
In the \emph{cyclic shift preference profile} $\succ$, for all $i=1,\ldots,n$, 
\begin{itemize}
    \item[-] $m_i$ has preference ranking $w_i \succ_{m_i} w_{i+1} \succ_{m_i} w_{i+2} \succ_{m_i} \ldots \succ_{m_i} w_n \succ_{m_i} w_{1} \succ_{m_i }\ldots \succ_{m_i} w_{i-1}$
    \item[-] $w_i$ has preference ranking $m_{i+1} \succ_{w_i} m_{i+2} \succ_{w_i} m_{i+3} \succ_{w_i} \ldots \succ_{w_i} m_n \succ_{w_i} m_{1} \succ_{w_i }\ldots \succ_{w_i} m_i$
\end{itemize}
See also \cref{fig:cyclic-shift-profile} for a pictorial representation.
\end{definition}

\begin{figure}[t]
\centering

\begin{tikzpicture}[
        scale=0.7, transform shape,
    cell/.style={minimum width=1.1cm, minimum height=0.6cm, align=center},
    rowlbl/.style={anchor=east, font=\Large},
    collbl/.style={anchor=south, font=\Large}
]

\draw[line width=1.5pt, rounded corners]
  (-1.1,-4.3) rectangle (14.6,5.5);

\node[collbl] at (0.25,6) {$w_1$};
\node[collbl] at (3,6) {$w_2$};
\node[collbl] at (6,6) {$w_3$};
\node[collbl] at (8,6) {$\ldots$};
\node[collbl] at (10,6) {$w_{n-1}$};
\node[collbl] at (13,6) {$w_n$};

\node[rowlbl] at (-1.5,4.5) {$m_1$};
\node[rowlbl] at (-1.5,3.0) {$m_2$};
\node[rowlbl] at (-1.5,1.5) {$m_3$};
\node[rowlbl] at (-1.5,0.0) {$\vdots$};
\node[rowlbl] at (-1.5,-1.5) {$m_{n-1}$};
\node[rowlbl] at (-1.5,-3) {$m_n$};

\node[cell] (a00) at (0.25,4.5) {\Large $1,n$};
\node[cell] (a01) at (3,4.5) {\Large $2,n-1$};
\node[cell] (a02) at (6,4.5) {\Large $3,n-2$};
\node[cell] (a03) at (8,4.5) {\Large $\ldots$};
\node[cell] (a04) at (10,4.5) {\Large $n-1,2$};
\node[cell] (a05) at (13,4.5) {\Large $n,1$};

\node[cell] (a10) at (0.25,3.0) {\Large $n,1$};
\node[cell] (a11) at (3,3.0) {\Large $1,n$};
\node[cell] (a12) at (6,3.0) {\Large $2,n-1$};
\node[cell] (a13) at (8,3.0) {\Large $\ldots$};
\node[cell] (a14) at (10,3.0) {\Large $n-2,3$};
\node[cell] (a15) at (13,3.0) {\Large $n-1,2$};

\node[cell] (a20) at (0.25,1.5) {\Large $n-1,2$};
\node[cell] (a21) at (3,1.5) {\Large $n,1$};
\node[cell] (a22) at (6,1.5) {\Large $1,n$};
\node[cell] (a23) at (8,1.5) {\Large $\ldots$};
\node[cell] (a24) at (10,1.5) {\Large $n-3,4$};
\node[cell] (a25) at (13,1.5) {\Large $n-2,3$};

\node[cell] (a30) at (0.25,0) {\Large $\vdots$};
\node[cell] (a31) at (3,0) {\Large $\vdots$};
\node[cell] (a32) at (6,0) {\Large $\vdots$};
\node[cell] (a33) at (8,0) {\Large $\vdots$};
\node[cell] (a34) at (10,0) {\Large $\vdots$};
\node[cell] (a35) at (13,0) {\Large $\vdots$};

\node[cell] (a40) at (0.25,-1.5) {\Large $3,n-2$};
\node[cell] (a41) at (3,-1.5) {\Large $4,n-3$};
\node[cell] (a42) at (6,-1.5) {\Large $5,n-4$};
\node[cell] (a43) at (8,-1.5) {\Large $\ldots$};
\node[cell] (a44) at (10,-1.5) {\Large $1,n$};
\node[cell] (a45) at (13,-1.5) {\Large $2,n-1$};

\node[cell] (a50) at (0.25,-3) {\Large $2,n-1$};
\node[cell] (a51) at (3,-3) {\Large $3,n-2$};
\node[cell] (a52) at (6,-3) {\Large $4,n-3$};
\node[cell] (a53) at (8,-3) {\Large $\ldots$};
\node[cell] (a54) at (10,-3) {\Large $n,1$};
\node[cell] (a55) at (13,-3) {\Large $1,n$};

\draw[blue!60, line width=2pt] (a00) circle (0.7);
\draw[blue!60, line width=2pt] (a11) circle (0.7);
\draw[blue!60, line width=2pt] (a22) circle (0.7);
\draw[blue!60, line width=2pt] (a44) circle (0.7);
\draw[blue!60, line width=2pt] (a55) circle (0.7);

\draw[red!60, line width=2pt, dashed] (a01) ellipse (1 and 0.7);
\draw[red!60, line width=2pt, dashed] (a12) ellipse (1 and 0.7);
\draw[red!60, line width=2pt, dashed] (a45) ellipse (1 and 0.7);
\draw[red!60, line width=2pt, dashed] (a50) ellipse (1 and 0.7);

\draw[green!60!black, line width=2pt, dotted] (a02) ellipse (1 and 0.7);
\draw[green!60!black, line width=2pt, dotted] (a40) ellipse (1 and 0.7);
\draw[green!60!black, line width=2pt, dotted] (a51) ellipse (1 and 0.7);


\end{tikzpicture}
        \caption{The \emph{cyclic shift profile} used in the proof of \cref{thm:lower-bound-ordinal-randomized}. The entry $(m_i, w_j)$ consists of a pair of numbers; the first indicates the rank of $w_j$ for $m_i$ and the second indicates the rank of $m_i$ for $w_j$. For example, the entry $(m_2,w_1)$ is $(n,2)$ which indicates that $m_2$ ranks $w_1$ last among the women, and $w_1$ ranks $m_2$ second among the men. Notice that the preference of each man $m_i$ (in the rows of the table) is formed by placing $w_i$ first, $w_{i+1}$ second, and so on, with the preference \emph{cycling around} once the index reaches $n$. The preference of each woman is formed similarly, but in a shifted way: woman $w_i$ places $m_{i+1}$ first, $m_{i+2}$ second and so on, with preference cycling around once the index reaches $n$.
        The set of stable matchings on this instance consists of the ``cyclic diagonals''. For example, referring to the statement of \cref{lem:cyclic-shift-profile-stable-matchings}, $\mu_0$ corresponds to the main diagonal (shown in solid blue lines), $\mu_1$ corresponds to the first superdiagonal, together with the $n$-th subdiagonal (shown in dashed red lines), $\mu_2$ corresponds to the second superdiagonal, together with the $(n-1)$-th subdiagonal (shown in dotted green lines), etc. } 
        \label{fig:cyclic-shift-profile}
    \end{figure}

\noindent The following lemma characterizes the possible stable matchings on the cyclic shift profile of \cref{def:cyclic-shift-profile}, see also \cref{fig:cyclic-shift-profile}. Intuitively, the set of stable matchings on this profile is consists of the ``cyclic diagonals'', i.e., the main diagonal, the first superdiagonal together with the $n$-th subdiagonal, the second superdiagonal together with the $(n-1)$-th subdiagonal etc. Also notice that for each agent $a_i \in A$, every agent $a_j \in \bar{A}$ is a stable partner, and hence $S_{a_i} = \bar{A}$. We state the corresponding lemma formally below. The proof of the lemma requires more advanced machinery related to the structure of stable matchings, which we present in \cref{app:background-stable-matchings}. The proof itself is presented in \cref{app:queries-missing}.

\begin{lemma} \label{lem:cyclic-shift-profile-stable-matchings}
On the cyclic shift profile of \cref{def:cyclic-shift-profile}, the set of stable matchings $\mathcal{M}_s$ consists of stable matchings $\mu_0, \ldots, \mu_{n-1}$, where $\mu_k = \{(m_i,w_{(i-1+k)\pmod n +1})\}$, for $k=0,\ldots,n-1$. 
\end{lemma}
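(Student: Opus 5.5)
We argue directly from the definitions, with no rotation machinery; indices are taken cyclically modulo $n$ throughout. In $\mu_k$, man $m_i$ is matched to $w_{i+k}$, which is his $(k+1)$-th choice. Woman $w_j$ is matched to $m_{j-k}$. In her ranking $m_{j+1},m_{j+2},\ldots,m_j$, the man $m_{j+t}$ sits at position $t$ (reading $t \bmod n$, with $t=n$ for $m_j$), so $m_{j-k}=m_{j+(n-k)}$ is her $(n-k)$-th choice. Hence in $\mu_k$ every man receives rank $k+1$ and every woman receives rank $n-k$, matching the diagonals in \cref{fig:cyclic-shift-profile}.

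\textbf{Step 1: each $\mu_k$ is stable.} Suppose $(m_i,w_j)$ blocks $\mu_k$, and write $w_j=w_{i+d}$ with $0\le d\le n-1$, so $w_j$ is $m_i$'s $(d+1)$-th choice. Since $m_i$ prefers $w_j$ to his partner, $d<k$. Now $m_i=m_{j-d}=m_{j+(n-d)}$ is $w_j$'s $(n-d)$-th choice, and her partner has rank $n-k$. Since $d<k$ gives $n-d>n-k$, she ranks $m_i$ strictly worse than her partner. This contradicts the blocking condition, so $\mu_k$ is stable.

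\textbf{Step 2: no other stable matching exists.} Let $\mu$ be stable and define the offset $d_i$ by $\mu(m_i)=w_{i+d_i}$, with $d_i\in\{0,\ldots,n-1\}$. We show that all offsets are equal. The main tool is a local claim: $d_{i+1}\ge d_i$ for every $i$, cyclically.

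To prove the claim, assume $d_{i+1}<d_i$. The woman $w=w_{i+1+d_{i+1}}=w_{i+(d_{i+1}+1)}$ has offset $d_{i+1}+1\le d_i$ from $m_i$. If $d_{i+1}+1<d_i$, then $m_i$ strictly prefers $w$ to his partner. The woman $w$ is matched to $m_{i+1}$. Writing $w=w_j$, the man $m_{i+1}=m_{j-d_{i+1}}$ has her rank $n-d_{i+1}$, while $m_i=m_{j-d_{i+1}-1}$ has rank $n-d_{i+1}-1$, which is better. So $(m_i,w)$ blocks $\mu$. In the boundary case $d_{i+1}+1=d_i$, we would have $w=\mu(m_i)$, i.e., two men share a woman, which is impossible. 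In both cases we reach a contradiction, so the claim holds.

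Going once around the cycle gives $d_1\le d_2\le\cdots\le d_n\le d_1$. This chain needs care, because offsets are residues and we must confirm the inequality never wraps. The proof of the claim used only the representatives $d_i\in\{0,\ldots,n-1\}$, and every rank computation stayed in range, including $d_{i+1}+1\le n-1$. So the inequalities hold between these genuine integers, and the chain forces all $d_i$ to be equal, i.e., $\mu=\mu_k$.

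The main obstacle I expect is exactly this rank bookkeeping modulo $n$, particularly the woman's rank formula $n-d$ and the edge cases $d=0$ and $d_{i+1}=n-1$. I would verify these against \cref{fig:cyclic-shift-profile}.
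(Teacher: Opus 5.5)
Your proof is correct, and it takes a genuinely different route from the paper's.

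The paper works inside the rotation framework of \cref{app:background-stable-matchings}. It shows that the list of all $n$ pairs of $\mu_{t-1}$ is a rotation $\rho_t$ exposed in $\mu_{t-1}$. This is the only exposed rotation there, since a pair lies in at most one rotation, and eliminating it yields $\mu_t$. The paper then invokes the Irving--Leather bijection between stable matchings and closed subsets of the rotation poset (\cref{thm:rotation-poset}). Since the poset is the chain $\rho_1\prec\cdots\prec\rho_{n-1}$, the stable matchings are exactly $\mu_0,\ldots,\mu_{n-1}$. Stability of each $\mu_k$ is inherited from $\mu_0$ being man-optimal and from elimination preserving stability.

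You instead verify stability directly through the rank observation: in $\mu_k$ every man receives rank $k+1$ and every woman rank $n-k$, so no pair can improve on both sides. You then get uniqueness from a local exchange argument showing the offsets are cyclically non-decreasing, hence constant.

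What your approach buys is that it is short and self-contained. It relies on none of the external structure theorems, nor on the separate check (left to the reader in the paper) that $\mu_0$ is man-optimal. Your rank formula is also exactly the partner correspondence \eqref{eq:women-matchings}--\eqref{eq:men-matchings} used in the adaptive lower-bound proof.

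What the paper's approach buys is more structure: the rotation poset of this profile is a path of $n-1$ rotations, each involving every agent. This ties into the rotation-poset-path setting of \cref{app:improved-distortion-with-structure}, and the technique transfers to profiles that lack the cyclic symmetry your argument exploits.
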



\noindent For the lower bounds of the section, we will consider valuation profiles $\bv$ consistent with the cyclic shift profile $\succ$. In fact, those will be \emph{dichotomous} valuation profiles, defined formally below. 

\begin{definition}[Dichotomous valuation function and profile]\label{def:dichotomous}
Let $a_i \in A$ be an agent. A valuation function $v_{a_i}$ is \emph{dichotomous} if its image is $\{0,1\}$, i.e., if for each agent $a_j \in \bar{A}$, it holds that $v_{a_i}(a_j) \in \{0,1\}$. A valuation profile is dichotomous if it is defined by dichotomous valuation functions. 
\end{definition}

\paragraph{Transition Points.} For convenience, when $v_{a_i}$ is dichotomous, we will use an alternative equivalent representation of the function, rather than the valuation vector representation introduced in \cref{sec:prelims}. We will represent the function $v_{a_i}$ by a single number $\tr(v_{a_i})$, which is the index of the first agent $a_j \in \bar{A}$ in $\succ_{a_i}$ for which $v_{a_i}(a_j)=0$. We will refer to $\tr(v_{a_i})$ as the \emph{transition point} of $v_{a_i}$. For completeness, we may define $\tr(v_{a_i})=\varnothing$ when $v_{a_i}(a_j)=1$ for all $a_j \in \bar{A}$, but such valuation functions will not appear in any of the valuation profiles that we will construct.  

\paragraph{Uninformed Regions.} Before we proceed, we introduce some further terminology, which will make the exposition of the results of the section easier. Let $\mathcal{A}$ be some query-enhanced algorithm applied to some valuation profile $\bv$ and let $\bv^p$ be a partial restriction of $\bv$. Given an agent $a_i \in A$, we define the \emph{uninformed set} of agent $a_i$, denoted $U_{a_i,\bv^p}$ to be the set of agents $a_j \in \bar{A}$ for which the values $v_{a_i}(a_j)$ are not known, i.e., $U_{a_i,\bv^p}:=\{a_i \in \bar{A}: v_{a_i}(a_j) \notin \bv^p\}$. Notice that for dichotomous valuation profiles, $U_{a_i,\bv^p}$ is an interval, meaning that it contains agents $a_j \in \bar{A}$ that appear in consecutive positions in the preference ranking of agent $a_i$.  
We will refer to $U_{a_i}$ as the \emph{uninformed region} of agent $a_i$. When it is clear from the context, we will drop $\bv^p$ from the notation and simply write $U_{a_i}$. \medskip 

\noindent We are now ready to present our first lower bound of the section, which applies to algorithms that use non-adaptive queries. The bound establishes that in the worst case, we need to query asymptotically the whole valuation profile $\bv$ in order to improve over the distortion of $2$, which is achieved by the \textsc{1-MoWo} algorithm. 

\begin{theorem}\label{thm:lower-bound-queries-non-adaptive}
    Let $\mathcal{A}$ be any deterministic stable algorithm that uses fewer than $n/2$ non-adaptive queries. Then $\text{dist}(\mathcal{A}) \geq 2$.
\end{theorem}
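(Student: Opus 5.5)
Throughout, $\succ$ is the cyclic shift profile of \cref{def:cyclic-shift-profile}, whose stable matchings are exactly $\mu_0,\ldots,\mu_{n-1}$ by \cref{lem:cyclic-shift-profile-stable-matchings}, and all valuations are dichotomous (\cref{def:dichotomous}). In $\mu_k$ every man gets his $(k+1)$-th choice and every woman her $(n-k)$-th choice. So a man with transition point $t$ gets value $1$ in $\mu_k$ iff $k+1<t$, and a woman with transition point $t'$ gets value $1$ iff $n-k<t'$. Men's welfare is therefore non-increasing in $k$ and women's is non-decreasing, as in the man-optimal/woman-optimal tension of \cref{thm:any-ordinal-distortion-unbounded}.

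The key gadget is the family of profiles $\bv^{(k)}$, where every man has transition point $k+2$ and every woman has transition point $n-k+1$. Then $\mu_k$ has welfare $2n$ on $\bv^{(k)}$, while any $\mu_\ell$ with $\ell\neq k$ has welfare exactly $n$: it gives all men value $1$ and all women $0$, or the reverse. If the algorithm cannot distinguish $\bv^{(k)}$ from $\bv^{(\ell)}$ and outputs some $\mu_\ell$ with $\ell\neq k$, the ratio is $2$.

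Since the queries are non-adaptive, the whole query set $Q$ is fixed once $\succ$ is given. Each agent queries fewer than $n/2$ of its $n$ possible partners, and the $n$ stable matchings give each agent $n$ distinct partners. A man's query at rank $r$ reveals whether $t>r$, i.e. whether $k\geq r-1$. A woman's query at rank $r$ reveals whether $n-k+1>r$, i.e. whether $k\leq n-r$. Hence the answers on $\bv^{(k)}$ and $\bv^{(k+1)}$ differ only on queries that some man makes at rank $k+2$, or that some woman makes at rank $n-k$. These are precisely the queries of the agents' partners in $\mu_{k+1}$ (for men) and in $\mu_k$ (for women).

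The plan is a counting argument to find a boundary $k$ at which no one queries the relevant entry. If that holds, $\bv^{(k)}$ and $\bv^{(k+1)}$ produce identical answers, so $\mathcal{A}$ outputs the same $\mu_\ell$ on both. This $\mu_\ell$ differs from at least one of $\mu_k,\mu_{k+1}$, which gives distortion $2$ on the corresponding profile.

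The main obstacle is that the per-agent budget of fewer than $n/2$ queries does not by itself imply that \emph{no} agent queries a given rank: different agents may cover different ranks. I would handle this by making transition points agent-specific rather than uniform. For each boundary $k$, each agent $a$ that queries the critical entry is "frozen", meaning its value is kept identical in the two profiles. Such an agent contributes at most $1$ in both the optimum and the output.

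Because each agent queries fewer than $n/2$ ranks, an averaging over the $n-1$ boundaries should give some $k$ at which at most a $1/2$ fraction of men and of women are frozen. Choose the frozen agents' values to help the output rather than the optimum. The resulting ratio would then approach $2$ only if the frozen agents are handled with care. If the averaging only gives a fraction bounded away from $0$, I would instead chain several consecutive boundaries $k,k+1,\ldots$ and define each agent's transition point as its first unqueried rank within that window, so that every agent is individually indistinguishable. I expect the careful bookkeeping in this step, making sure the frozen agents lose at most a vanishing share of the gap, to be where the $n/2$ threshold is really used.
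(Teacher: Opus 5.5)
There is a genuine gap, and it sits in the step you leave open. You correctly see that with the uniform profiles $\bv^{(k)}$ different agents can cover every boundary. The proposed repair, however, does not give the bound. Frozen agents must contribute nothing, not help the output. If they contribute some $F>0$ to every matching, then an output serving the larger non-frozen side already achieves ratio $(F+N_M+N_W)/(F+\max\{N_M,N_W\})<2$. The averaging and window-chaining steps are not specified, so no ratio is actually established.

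The missing idea is that only two agents need to carry welfare. Give every agent other than $m_1$ and $w_1$ value $0$ for everybody (transition point $1$). Then their queries are irrelevant and they contribute nothing to any matching. Each of $m_1,w_1$ leaves more than $n/2$ partners unqueried. By \cref{lem:cyclic-shift-profile-stable-matchings}, the maps $j\mapsto\mu_j(m_1)$ and $j\mapsto\mu_j(w_1)$ are bijections from $\{0,\dots,n-1\}$ onto $W$ and $M$. So the two sets of indices with an unqueried partner each have size more than $n/2$, and they intersect. This gives one stable matching $\tilde\mu=\mu_j$ whose partners are unqueried for both $m_1$ and $w_1$.

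Even after this reduction, your adjacent ``peak'' pair $\bv^{(k)},\bv^{(k+1)}$ is the wrong pair. As you computed, the two profiles differ at a man's $\mu_{k+1}$-partner but at a woman's $\mu_k$-partner, and this offset can break the argument. Take $n=2p+1$ and let every agent query exactly ranks $2,\dots,p+1$, which is $p<n/2$ queries. The boundaries clean for men are $k\in\{p,\dots,2p-1\}$, and those clean for women are $k\in\{0,\dots,p-1\}$. So at every boundary an entire side is frozen. This contradicts your averaging claim, and the two-sided gap behind the factor $2$ disappears.

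The paper instead compares a valley and a peak at the same matching $\tilde\mu=\mu_j$:
\begin{itemize}
\item In the first profile, $\tr(v_{m_1})=j+1$ and $\tr(v_{w_1})=n-j$, the ranks of their $\tilde\mu$-partners. Then $\tilde\mu$ has welfare $0$, while every other stable matching has welfare $1$.
\item In the second profile, both transition points are increased by one. Then $\tilde\mu$ has welfare $2$, and every other $\mu_\ell$ has welfare exactly $1$. This holds because along the chain $\mu_0,\dots,\mu_{n-1}$, $m_1$ strictly worsens and $w_1$ strictly improves.
\end{itemize}
The two profiles differ only on the unqueried entries $v_{m_1}(\tilde\mu(m_1))$ and $v_{w_1}(\tilde\mu(w_1))$, so $\mathcal{A}$ returns the same matching on both. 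If that matching is $\tilde\mu$, the first profile gives unbounded ratio; otherwise the second gives ratio exactly $2$. No limiting argument in $n$ or bookkeeping of frozen agents is needed.
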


\begin{proof}
We will consider dichotomous valuation profiles $\bv$ consistent with the cyclic shift profile $\succ$ of \cref{def:cyclic-shift-profile}. In all of these profiles, for any agent $a_i \notin \{m_1,w_1\}$, we will have $\tr(v_{a_i}) = 1$, i.e., the agent will have value $0$ for all the agents of the opposite side. For man $m_1$ and woman $w_1$, the transition points $\tr(v_{m_1})$ and $\tr(v_{w_1})$ will depend on the positions where the algorithm queries the agents. Since these queries are non-adaptive, we can construct the valuation profile $\bv$, given the positions of the queries in advance. We will let $\bv^p=q^{k}(\bv)$ be the restriction of $\bv$ to the answers of the $k$ queries per agent that $\mathcal{A}$ is allowed to make (with $k < n/2$), and the uninformed regions $U_{m_1}$ and $U_{w_1}$ for man $m_1$ and woman $w_1$, respectively. We first prove the following claim:
\begin{quote}
    Regardless of the positions of the queries for $m_1$ and $w_1$, there exists a stable matching $\tilde{\mu}$ such that $\tilde{\mu}(m_1) \in U_{m_1}$ and $\tilde{\mu(w_1)} \in U_{w_1}$. 
\end{quote}
By the pigeonhole principle, $|U_{m_1}| > n/2$ and $|U_{w_1}| > n/2$, as the algorithm uses fewer than $n/2$ queries per agent. By \cref{lem:cyclic-shift-profile-stable-matchings}, all women in $U_{m_1}$ are potential stable partners for $m_1$, and furthermore, for each $j=1,\ldots,|U_{m_1}|$, there is a unique stable matching $\mu^j$ such that $\mu^j(m_1)=w_j$, see also \cref{fig:cyclic-shift-profile}. Consider the set $S_{w_1}=\{\mu_1(w_1),\ldots,\mu_{k}(w_1)\}$ of the assigned partners of woman $w_1$ in those stable matchings. Since $|S_{w_1}|=|U_{m_1}| > n/2$ and $|M\setminus U_{w_1}|< n/2$, it follows that $|S_{w_1} \cap M\setminus U_{w_1}| > 1$, i.e., there is a man in $S_{w_1}$ that woman $w_1$ was not queried for. By the way that $S_{w_1}$ was constructed, this implies that there exists some $j \in 1, \ldots, |U_{m_1}|$ such that $w_1$ was not queried for man $\mu^j(w_1)$. Together with the fact that $\mu^j(m_1) \in W \setminus T_{m_1}$, it follows that $m_1$ was not queried for woman $\mu^j(m_1)$ either, and hence $\mu^j(m_1) \in U_{m_1}$ and $\mu^j(w_1) \in U_{w_1}$. \medskip

\noindent We are now ready to define the valuation profile $\bv$. By the discussion in the first paragraph of the proof, it suffices to define $\tr(v_{m_1})$ and $\tr(v_{m_2})$. For any agent $a_i \in A$, let $r(a_i)$ be the position of $\tilde{\mu}(a_i)$ in its preference ranking $\succ_{a_i}$, where $\tilde{\mu}$ is the matching $\mu^j$ identified above. Let $\mu$ be the matching computed by $\mathcal{A}$ on input $\succ$ and the query oracle $\mathcal{Q}$. We consider two cases:
\begin{itemize}[leftmargin=*]
    \item[-] \emph{Case 1: $\mu=\tilde{\mu}$.} In this case, let $\tr(v_{m_1})=r(\tilde{\mu}(m_1))$, and $\tr(v_{w_1})=r(\tilde{\mu}(w_1))$. In other words, both $m_1$ and $w_1$ have value $1$ for any agent on the other side which is preferred to their matched partner in $\tilde{\mu}$, and $0$ for any other agent, including their partner in $\tilde{\mu}$. In this case, $\SW(\mu \mid \bv) = 0$. %
    At the same time, there exist stable matchings that have a positive welfare; for example for the man-optimal matching $\mu^{*}_M$, we have that $\SW(\mu^{*}_M \mid \bv) = 1$, and hence the distortion of $\mathcal{A}$ is infinite. 
    \item[-] \emph{Case 2: $\mu\neq \tilde{\mu}$.} In this case, let $\tr(v_{m_1})=r(\tilde{\mu}(m_1))+1$ and $\tr(v_{m_1})=r(\tilde{\mu}(m_1))+1$. In other words both $m_1$ and $w_1$ have value $1$ for their matched partner in $\tilde{\mu}$, and any agent on the other side that is preferred to their matched partner, and $0$ for any agent on the other side which is not preferred to their matched partner. $\tilde{\mu}$ is a stable matching for which $\SW(\tilde{\mu} \mid \bv) = 2$. By construction, for any other matching $\mu$, it holds that 
\[
\text{either }\mu(m_1) \succ_{m_1} \tilde{\mu}(m_1) \ \text{ and } \  \tilde{\mu}(w_1) \succ_{w_1} \mu(w_1), \ \text{ or }\  
\mu(w_1) \succ_{w_1} \tilde{\mu}(w_1) \ \text{ and } \ \tilde{\mu}(m_1) \succ_{m_1} \mu(m_1)
\]
 In either case, we have that $\SW(\mu \mid \bv) = 1$, and the distortion bound follows.
 \end{itemize}
This completes the proof.
\end{proof}

\noindent Next, we prove our lower bound on algorithms that use adaptive queries. In contrast to the proof of \cref{thm:lower-bound-queries-non-adaptive}, our construction of the valuation profiles used for the bound has to be dynamic, taking into account the answers to the previous queries as well.

\begin{theorem}\label{thm:lower-bound-queries-adaptive}
    Let $\mathcal{A}$ be any deterministic stable algorithm that uses fewer than $(\log n) / 2$ adaptive queries. Then $\text{dist}(\mathcal{A}) \geq 2$.
\end{theorem}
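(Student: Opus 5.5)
We reuse the setup of \cref{thm:lower-bound-queries-non-adaptive}: the cyclic shift profile $\succ$ of \cref{def:cyclic-shift-profile}, with stable matchings $\mu_0,\ldots,\mu_{n-1}$ given by \cref{lem:cyclic-shift-profile-stable-matchings}. Every agent other than $m_1$ and $w_1$ has transition point $1$ (value $0$ for everyone), and the valuations of $m_1$ and $w_1$ are dichotomous. Because queries to other agents reveal nothing, we may assume the algorithm only queries $m_1$ and $w_1$.

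In $\mu_k$, man $m_1$ is matched to the woman in position $k+1$ of his ranking. Woman $w_1$ is matched to $m_{n-k+1}$ (indices mod $n$), who sits in position $n-k$ of her ranking (position $n$ when $k=0$). So moving along $k$ makes $m_1$ worse off and $w_1$ better off, and the index $k$ parametrizes both rankings simultaneously in opposite directions. Identify each agent's transition point with a threshold on $k$. Then $v_{m_1}(\mu_k(m_1))=1$ iff $k < t_m$, and $v_{w_1}(\mu_k(w_1))=1$ iff $k \geq t_w$, for some thresholds $t_m,t_w$ in $\{0,\ldots,n\}$.

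The adversary runs a binary-search style strategy. It keeps an interval $I=[\ell,r]$ of indices $k$ that are consistent with all answers so far, and it keeps it consistent with the single hypothesis $t_m = t_w+1 = k^*+1$ for some $k^*\in I$. Under this hypothesis, exactly $\mu_{k^*}$ has welfare $2$ and every other stable matching has welfare $1$. The answers are as follows.
\begin{itemize}
\item A query to $m_1$ about the woman at position $j+1$ (index $j$) is answered $1$ if $j$ lies below the midpoint of $I$ and $0$ otherwise, after which $I$ is shrunk to the half consistent with the answer.
\item A query to $w_1$ is handled symmetrically.
\end{itemize}
Each query at most halves $|I|$. With fewer than $(\log n)/2$ queries per agent, and hence fewer than $\log n$ total on $m_1,w_1$, the interval satisfies $|I|\geq 2$ when the algorithm stops.

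The algorithm then outputs some stable matching $\mu=\mu_k$. If $k\notin I$, or if $k\in I$, the adversary fixes $k^*\in I\setminus\{k\}$, which is possible since $|I|\ge2$. This yields $\SW(\mu_{k^*}\mid\bv)=2$ and $\SW(\mu\mid\bv)\le 1$, so the distortion is at least $2$. If the output is not among the $\mu_k$ this is impossible, since by the lemma these are all the stable matchings.

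The main obstacle is verifying consistency of the adversary's answers. Each answer must agree with every possible final choice of $k^*\in I$, and every answer outside $I$ must already be determined. This holds because a $1/0$ answer for $m_1$ at index $j$ is exactly the constraint $j<t_m$ or $j\ge t_m$, and similarly for $w_1$; with $t_m=k^*+1$ and $t_w=k^*$, each such constraint restricts $k^*$ to a prefix or suffix, which the halving rule respects.

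A secondary technicality is the wrap-around of $w_1$'s ranking at $k=0$, where her partner $m_1$ is her last choice. We handle it by excluding $k=0$ from the initial interval, which costs only one in $|I|$ and is absorbed by the slack in the $(\log n)/2$ bound.
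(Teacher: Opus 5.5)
This is essentially the paper's argument. It uses the same cyclic shift profile with all value on $m_1$ and $w_1$, the same halving adversary, and the same final profile in which one stable matching has welfare $2$ and every other has welfare $1$. Your single interval $I$ of consistent hypotheses $k^*$ (with $t_m=t_w+1=k^*+1$ built in) is a cleaner form of the paper's coupled uninformed regions and proxy agent, and it makes consistency of the adversary's answers immediate. By asking for $|I|\ge 2$ at the end, it also lets you skip the paper's separate zero-welfare case for when the algorithm outputs the surviving matching.

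One step should be removed rather than patched: the exclusion of $k=0$. There is no wrap-around at $k=0$. The sequence $(\mu_0(w_1),\mu_1(w_1),\ldots,\mu_{n-1}(w_1))=(m_1,m_n,\ldots,m_2)$ is exactly $w_1$'s ranking reversed. So $\mu_k(w_1)$ sits at position $n-k$ for every $k$, including $k=0$, as your own formula already says. Taking $k^*=0$ just gives $w_1$ the all-ones valuation, which the model permits.

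Excluding $k=0$ is also not covered by any slack in the $(\log n)/2$ bound. Suppose $n-1$ is a power of $4$, e.g.\ $n=5$ with one query per agent. Then the algorithm can split $I$ exactly in half with each query, taking $|I|$ from $n-1$ down to $1$, after which you cannot choose $k^*\neq k$. If you start instead from $I=\{0,\ldots,n-1\}$, you get $|I|\ge\lceil n/2^q\rceil\ge 2$ for any total $q<\log_2 n$, and the rest of your proof goes through unchanged.
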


\begin{proof}
Similarly to the proof of \cref{thm:lower-bound-queries-non-adaptive}, we will consider dichotomous valuation profiles $\bv$ consistent with the cyclic shift profile $\succ$ of \cref{def:cyclic-shift-profile}. In all of these profiles, for any agent $a_i \notin \{m_1,w_1\}$, we will have $\tr(v_{a_i}) = 1$, i.e., the agent will have value $0$ for all the agents of the opposite side. For man $m_1$ and woman $w_1$, the transition points $\tr(v_{m_1})$ and $\tr(v_{w_1})$ will depend on the positions where the algorithm queries the agents. Contrary to the proof of \cref{thm:lower-bound-queries-non-adaptive}, the queries are now adaptive; this means that we cannot provide a ``bad'' valuation profile consistent with their answers at the end of the query process, but, rather, we will have to construct such a profile dynamically, updating it after each individual query. Therefore, the answers $q^t(\bv)$ to the first $t$ queries will define a partial valuation profile and corresponding uninformed regions $U_{m_1}^t$ and $U_{w_1}^t$, which will be updated for each value of $t=1,\ldots, k$, where $k < (\log n)/2$ is the number of queries performed by $\mathcal{A}$. Initially, we have $U_{m_1}^0 = W$ and $U_{w_1}^0=M$. 

Recall the set of stable matchings $\mu_0, \ldots, \mu_{n-1}$ defined in the statement of \cref{lem:cyclic-shift-profile-stable-matchings}; by the lemma, these are all the stable matchings on $\succ$, with $\mu_0 = \mu_M^*$ and $\mu_{n-1} = \mu_{W}^*$. Specifically for $m_1$ and $w_1$, this implies that the set of stable partners are
\begin{align}
&(\mu_0(m_1), \mu_1(m_1), \ldots, \mu_{n-2}(m_1), \mu_{n-1}(m_1)) = (w_1, w_2, \ldots, w_{n-1}, w_n) \text { and } \label{eq:women-matchings}\\
&(\mu_0(w_1), \mu_1(w_1), \ldots, \mu_{n-2}(w_1), \mu_{n-1}(w_1)) = (m_1, m_n, \ldots, m_3, m_2) \label{eq:men-matchings}
\end{align}
In other words, for both agents, all agents of the other side are potential stable partners, and, furthermore, the sequence of matchings induces a sequence of stable partners for the agents; for $m_1$ this is the same as his preference ranking, and for $m_2$ it is the reverse of her preference ranking, see also \cref{fig:proof-of-lower-bound}. Given this, we 
have an alternative (equivalent) interpretation of the uninformed region, via the set of stable matchings that match agent $a$ with an agent $b \in \bar{A}$ of the other side, for whom the value $v_a(b)$ is not known. Formally, we have:
\[
\hat{U}_a^t = \{\mu \in \mathcal{M}_s \mid \mu(a) \in U_a^t\}
\]
Notice that, given the discussion above, there is a one-to-one mapping between the elements of $U_a^t$ and the elements of $\hat{U}_a^t$. \medskip 

\noindent Now consider agent $a \in \{m_1,w_1\} \subset A$ such that $\mathcal{A}$ performs one query for $v_{a}(b)$, for some agent $b \in \bar{A}$. If the answer to the query is $1$, this means that for any agent $b'$ such that $b' \succ_{a} b$, it holds that $v_{a}(b')=1$ as well. If the answer is $0$, this means that for any agent $b'$ such that $b \succ_{a} b'$, it holds that $v_{a}(b')=0$ as well. In terms of the uninformed region $U_{a}^t$, this effectively means that after the query we have the following update rule:
\begin{equation}
\label{eq:update-rule}
U_{a}^t=\begin{cases}
			U_{a}^{t-1}\setminus \{b' \in \bar{A} \mid b' \succ_{a} b\}, & \text{if $v_a(b)=1$}\\
            U_{a}^{t-1} \setminus \{b' \in \bar{A} \mid b \succ_a b'\}, & \text{if $v_a(b)=0$}
		 \end{cases}
\end{equation}
It will be useful to keep track of the set of agents in $\bar{A}$ for whom we learned the value of agent $a$ after the $t$-th query; formally, we let $L_a^t = \{U_a^t \setminus U_a^{t-1}\}$. It follows from the update rule in \eqref{eq:update-rule} that $L_a^t$ is an interval which is \emph{adjacent} to $U_a^t$, i.e., it either ``precedes'' or ``succeeds'' $U_a^t$ and does not intersect it. We also define $\hat{L}_a^t = \{\hat{U}_a^t \setminus \hat{U}_a^{t-1}\}$ for the corresponding set of ``learned'' stable matchings. 

\medskip 

\noindent Similarly to the proof of \cref{thm:lower-bound-queries-non-adaptive}, the valuation profile $\bv$ that we will construct will ensure that after $k \leq (\log n)/2$ queries, the following holds:
\begin{quote}
    Regardless of the positions of the $k$ queries for $m_1$ and $w_1$, $|\hat{U}_{m_1}^k \cap \hat{U}_{w_1}^k|\geq 1$, i.e., there exists a stable matching $\tilde{\mu}$ such that $\tilde{\mu}(m_1) \in U_{m_1}^k$ and $\tilde{\mu}(w_1) \in U_{w_1}^k$.
\end{quote}
Given queries $t=1,\ldots,k$, we will define the answer to those queries in a way that ensures that the uninformed region of the agent \emph{shrinks as little as possible}. To simplify our exposition, we will assume that queries to men $m_1$ shrink the uniformed region of woman $w_1$ and vice versa; if the statement above holds in this case, it certainly also holds in the case where the regions are shrunk only by the queries to the agent itself. In particular, suppose that $\mathcal{A}$ has already asked $\ell$ queries to agent $b \in \bar{A}$, and $t-1$ queries to agent $a \in A$, and is now asking the $t$-th query to agent $a$. After the update of $U_a^{t-1}$ to $U_a^t$ according to \eqref{eq:update-rule}, we will also update $U_b^\ell$ (and correspondingly $\hat{U}_b^\ell$) as follows:

\begin{figure}[ht!]

\begin{tcolorbox}[
    standard jigsaw,
    opacityback=0,  
]
\centering
\scalebox{1}{
    \begin{tikzpicture}[
    font=\large,
    every node/.style={inner sep=1pt},
]

\matrix (top) [matrix of math nodes,
    row sep=6mm,
    column sep=2mm] {
m_1: &
w_1 & \succ & w_2 & \succ &
w_3 & \succ & w_4 & \succ & w_5 & \succ & \cdots & \succ & w_{n-3} &
\succ & w_{n-2} & \succ & w_{n-1} & \succ & w_n \\
};

\node[
    draw=red,
    rounded corners,
    fill=red,
    fill opacity=0.15,   
    fit=(top-1-6)(top-1-16),
    inner sep=4pt
] {};

\node[red!70!black,above=3mm of top-1-2] {$\mu_0$};
\node[red!70!black,above=3mm of top-1-4] {$\mu_1$};
\node[red!70!black,above=3mm of top-1-6] {$\mu_2$};
\node[red!70!black,above=3mm of top-1-8] {$\mu_3$};
\node[red!70!black,above=3mm of top-1-10] {$\mu_4$};
\node[red!70!black,above=3mm of top-1-12] {$\cdots$};
\node[red!70!black,above=3mm of top-1-14] {$\mu_{n-4}$};
\node[red!70!black,above=3mm of top-1-16] {$\mu_{n-3}$};
\node[red!70!black,above=3mm of top-1-18] {$\mu_{n-2}$};
\node[red!70!black,above=3mm of top-1-20] {$\mu_{n-1}$};

\node[blue!70!black,below=2mm of top-1-2] {$1$};
\node[blue!70!black,below=2mm of top-1-4] {$1$};
\node[blue!70!black,below=2mm of top-1-18] {$0$};
\node[blue!70!black,below=2mm of top-1-20] {$0$};

\draw[
    decorate,
    decoration={brace,mirror,raise=4pt,amplitude=10pt}
]
(top-1-6.south west) -- (top-1-16.south east)
node[midway,below=15pt] {$U_{m_1}^{t-1}$};

\matrix (bot) [matrix of math nodes,
    below=20mm of top,
    column sep=2mm] {
w_1: &
m_2 & \succ & m_3 & \succ &
m_4 & \succ & m_5 & \succ & m_6 & \succ & \cdots & \succ & m_{n-2} &
\succ & m_{n-1} & \succ & m_n & \succ & m_1\\
};

\node[
    draw=red,
    rounded corners,
    fill=red,
    fill opacity=0.15,   
    fit=(bot-1-6)(bot-1-16),
    inner sep=4pt
] {};

\node[red!70!black,above=3mm of bot-1-2] {$\mu_{n-1}$};
\node[red!70!black,above=3mm of bot-1-4] {$\mu_{n-2}$};
\node[red!70!black,above=3mm of bot-1-6] {$\mu_{n-3}$};
\node[red!70!black,above=3mm of bot-1-8] {$\mu_{n-4}$};
\node[red!70!black,above=3mm of bot-1-10] {$\mu_{n-5}$};
\node[red!70!black,above=3mm of bot-1-12] {$\cdots$};
\node[red!70!black,above=3mm of bot-1-14] {$\mu_{3}$};
\node[red!70!black,above=3mm of bot-1-16] {$\mu_{2}$};
\node[red!70!black,above=3mm of bot-1-18] {$\mu_{1}$};
\node[red!70!black,above=3mm of bot-1-20] {$\mu_{0}$};

\node[blue!70!black,below=2mm of bot-1-2] {$1$};
\node[blue!70!black,below=2mm of bot-1-4] {$1$};
\node[blue!70!black,below=2mm of bot-1-18] {$0$};
\node[blue!70!black,below=2mm of bot-1-20] {$0$};

\draw[
    decorate,
    decoration={brace,mirror,raise=4pt,amplitude=10pt}
]
(bot-1-6.south west) -- (bot-1-16.south east)
node[midway,below=15pt] {$U_{w_1}^{\ell}$};

\end{tikzpicture} 
}
\end{tcolorbox} 

\begin{tcolorbox}[
    standard jigsaw,
    opacityback=0,  
]
\centering
\scalebox{1}{

    \begin{tikzpicture}[
    font=\large,
    every node/.style={inner sep=1pt},
]

\matrix (top) [matrix of math nodes,
    row sep=6mm,
    column sep=2mm] {
m_1: &
w_1 & \succ & w_2 & \succ &
w_3 & \succ & w_4 & \succ & w_5 & \succ & \cdots & \succ & w_{n-3} &
\succ & w_{n-2} & \succ & w_{n-1} & \succ & w_n \\
};

\node[
    draw=red,
    rounded corners,
    fill=red,
    fill opacity=0.15,   
    fit=(top-1-6)(top-1-16),
    inner sep=4pt
] {};

\node[
    draw=green,
    rounded corners,
    pattern=north east lines,   
    pattern color=green,
    fill opacity=0.8,   
    fit=(top-1-6)(top-1-8),
    inner sep=4pt
] {};

\node[red!70!black,above=3mm of top-1-2] {$\mu_0$};
\node[red!70!black,above=3mm of top-1-4] {$\mu_1$};
\node[red!70!black,above=3mm of top-1-6] {$\mu_2$};
\node[red!70!black,above=3mm of top-1-8] {$\mu_3$};
\node[red!70!black,above=3mm of top-1-10] {$\mu_4$};
\node[red!70!black,above=3mm of top-1-12] {$\cdots$};
\node[red!70!black,above=3mm of top-1-14] {$\mu_{n-4}$};
\node[red!70!black,above=3mm of top-1-16] {$\mu_{n-3}$};
\node[red!70!black,above=3mm of top-1-18] {$\mu_{n-2}$};
\node[red!70!black,above=3mm of top-1-20] {$\mu_{n-1}$};

\node[blue!70!black,below=2mm of top-1-2] {$1$};
\node[blue!70!black,below=2mm of top-1-4] {$1$};
\node[blue!70!black,below=2mm of top-1-18] {$0$};
\node[blue!70!black,below=2mm of top-1-20] {$0$};

\draw[
    decorate,
    decoration={brace,mirror,raise=4pt,amplitude=10pt}
]
(top-1-9.south west) -- (top-1-16.south east)
node[midway,below=15pt] {$U_{m_1}^{t}$};

\draw[
    decorate,
    decoration={brace,mirror,raise=4pt,amplitude=10pt}
]
(top-1-6.south west) -- (top-1-8.south east)
node[midway,below=15pt] {$L_{m_1}^{t}$};

\matrix (bot) [matrix of math nodes,
    below=20mm of top,
    column sep=2mm] {
w_1: &
m_2 & \succ & m_3 & \succ &
m_4 & \succ & m_5 & \succ & m_6 & \succ & \cdots & \succ & m_{n-2} &
\succ & m_{n-1} & \succ & m_n & \succ & m_1\\
};

\node[
    draw=red,
    rounded corners,
    fill=red,
    fill opacity=0.15,   
    fit=(bot-1-6)(bot-1-16),
    inner sep=4pt
] {};

\node[
    draw=green,
    rounded corners,
    pattern=north east lines,   
    pattern color=green,
    fill opacity=0.8,   
    fit=(bot-1-14)(bot-1-16),
    inner sep=4pt
] {};

\node[red!70!black,above=3mm of bot-1-2] {$\mu_{n-1}$};
\node[red!70!black,above=3mm of bot-1-4] {$\mu_{n-2}$};
\node[red!70!black,above=3mm of bot-1-6] {$\mu_{n-3}$};
\node[red!70!black,above=3mm of bot-1-8] {$\mu_{n-4}$};
\node[red!70!black,above=3mm of bot-1-10] {$\mu_{n-5}$};
\node[red!70!black,above=3mm of bot-1-12] {$\cdots$};
\node[red!70!black,above=3mm of bot-1-14] {$\mu_{3}$};
\node[red!70!black,above=3mm of bot-1-16] {$\mu_{2}$};
\node[red!70!black,above=3mm of bot-1-18] {$\mu_{1}$};
\node[red!70!black,above=3mm of bot-1-20] {$\mu_{0}$};

\node[blue!70!black,below=2mm of bot-1-2] {$1$};
\node[blue!70!black,below=2mm of bot-1-4] {$1$};
\node[blue!70!black,below=2mm of bot-1-18] {$0$};
\node[blue!70!black,below=2mm of bot-1-20] {$0$};

\draw[
    decorate,
    decoration={brace,mirror,raise=4pt,amplitude=10pt}
]
(bot-1-6.south west) -- (bot-1-13.south east)
node[midway,below=15pt] {$U_{w_1}^{\ell}$};

\draw[
    decorate,
    decoration={brace,mirror,raise=4pt,amplitude=10pt}
]
(bot-1-14.south west) -- (bot-1-16.south east)
node[midway,below=15pt] {$\{\mu(w_1) \mid \mu \in \hat{L}_{m_1}^{t}$\}};
\end{tikzpicture}
}
\end{tcolorbox}
\caption{The argument used in the proof of \cref{thm:lower-bound-queries-adaptive}. The top part of the figure depicts a situation in which $t-1$ queries have been asked to man $m_1$ and $\ell$ queries have been asked to woman $w_1$; the corresponding uninformed regions $U_{m_1}^{t-1}$ and $U_{w_1}^\ell$ are shaded in red. Observe also the one-to-one correspondence between stable partners of agent $a \in \{m_1,w_1\}$ and the possible stable matchings for this instance, established by \cref{lem:cyclic-shift-profile-stable-matchings}. In the figure, we have that $\hat{U}_{m_1}^{t-1} = \hat{U}_{w_1}^\ell=\{\mu_2, \mu_3, \mu_4, \ldots, \mu_{n-4}, \mu_{n-3}\}$. The bottom part of the picture shows the situation after query $t$ has been asked to man $m_1$ for woman $w_4$. This partitions the previous uninformed region $U_{m_1}^{t-1}$ into the new uninformed region $U_{m_1}^t$, and the region of the values learned via this query $L_{m_1}^t$. As we mention in the proof, we assume that this query also updates the uninformed region $U_{w_1}^\ell$ - note that the index in the superscript does not change because there was no query for woman $w_1$ in this round. $U_{w_1}^\ell$ is updated by removing from it the men that are matched with $w_1$ in the matchings in $\hat{L}_{m_1}^t$, namely $m_{n-2} = \mu_3(w_1)$ and $m_{n-1} = \mu_2(w_1)$.}
\label{fig:proof-of-lower-bound}
\end{figure}
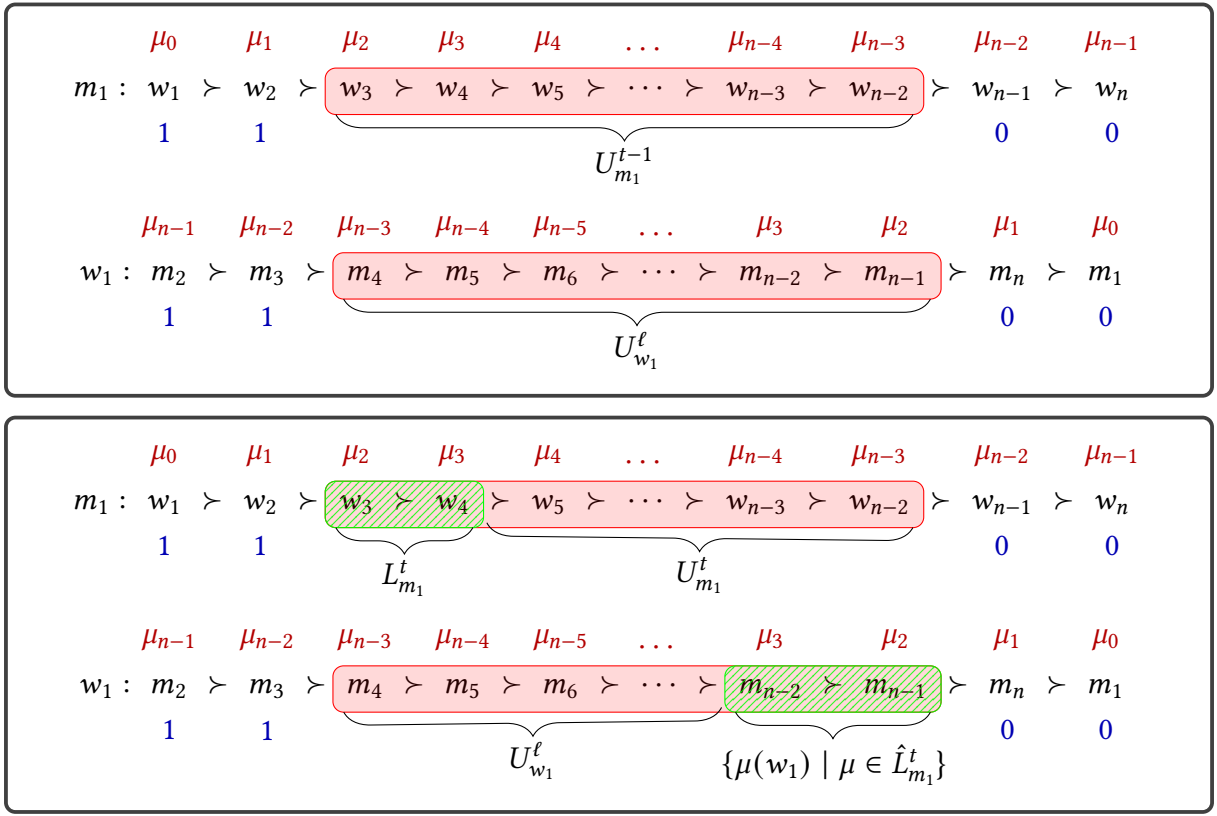

\begin{equation}
\label{eq:update-ub}
U_b^\ell = U_b^\ell \setminus \{\mu(b) \mid \mu \in \hat{L}_a^t\}
\end{equation}
In other words, for every matching $\mu$ that assigns agent $a$ to a partner with a known value, the partner $\mu(b)$ of agent $b$ will also be considered to be known, see \cref{fig:proof-of-lower-bound}.

Now observe that the set $\{\mu(b) \mid \mu \in \hat{L}_a^t\}$ is an interval adjacent to $U_b^\ell$, after the update of $U_b^\ell$ in \eqref{eq:update-ub}; this follows from the fact that $L_a^t$ is an interval adjacent to $U_q^t$, and the structure of the correspondence of agents and stable matchings in \eqref{eq:women-matchings} and \eqref{eq:men-matchings} above. Since $U_a^t$ and $U_b^t$ are updated ``together'', $U_a^t$ fully defines $U_b^t$ and vice versa; therefore it is equivalent to assume that $\mathcal{A}$ performs $2k$ queries for the values of a ``proxy'' agent $c \in \{m_1,m_2\}$, and prove that $|U_{c}^{2k}| \geq 1$. In that case, by construction, for any stable matching $\mu \in \hat{U}_{c}^{2k}$, we will have $\mu(b) \in U_b^{2k}$.\medskip 

\noindent We will construct the profile $\bv$ given the following rule: for $t=1,\ldots, 2k$, set
\[
v_{c}(b')=\begin{cases}
			0, & \text{if $|\{b' \mid b' \succ_c \bar{b}\}| \geq |\{b' \mid \bar{b} \succ_c b'\}|$}\\
            1, & \text{otherwise}
		 \end{cases}
\]
In simple words, the answer to the query is $0$ if it is asked at the bottom half of agent $c$'s preference ranking, and $1$ if it asked at the top half. Observe that by definition, we have that $|U_c^t| \geq \frac{1}{2}\cdot |U_c^{t-1}|$. This implies that $|U_c^{2k}| \geq \frac{1}{2^{2k}} \cdot |U_c^0| = \frac{1}{2^{2k}}\cdot n$. Since $k < (\log_2 n)/2$, we have that $|U_c^{2k}| > 1$. This implies that $|\hat{U}_{m_1}^k \cap \hat{U}_{w_1}^k|\geq 1$.  \medskip

\noindent To complete the proof, we work similarly as in the proof of \cref{thm:lower-bound-queries-non-adaptive}. The valuation profile $\bv$ has been partially defined, apart from the unexplored regions $U_{m_1}^k$ and $U_{m_2}^k$. Specifically, from the discussion above, we know that there exists at least one matching $\tilde{\mu} \in \{\mu_0,\ldots,\mu_{n-1}\}$ such that $\tilde{\mu}(m_1) \in U_{m_1}^k$ and $\tilde{\mu}(w_1) \in U_{m_2}^k$. For any agent $a \in A$, let $r(a)$ be the position of $\tilde{\mu}(a)$ in its preference ranking $\succ_{a}$.
Let $\mu$ be the matching computed by $\mathcal{A}$ on input $\succ$ and the query oracle $\mathcal{Q}$. We consider two cases:
\begin{itemize}[leftmargin=*]
    \item[-] \emph{Case 1: $\mu=\tilde{\mu}$.} In this case, let $\tr(v_{m_1})=r(\tilde{\mu}(m_1))$, and $\tr(v_{w_1})=r(\tilde{\mu}(w_1))$. In other words, both $m_1$ and $w_1$ have value $1$ for any agent on the other side which is preferred to their matched partner in $\tilde{\mu}$, and $0$ for any other agent, including their partner in $\tilde{\mu}$. In this case, $\SW(\mu \mid \bv) = 0$. %
    At the same time, there exist stable matchings that have a positive welfare; for example for the man-optimal matching $\mu^{*}_M$, we have that $\SW(\mu^{*}_M \mid \bv) = 1$, and hence the distortion of $\mathcal{A}$ is infinite. 
    \item[-] \emph{Case 2: $\mu\neq \tilde{\mu}$.} In this case, let $\tr(v_{m_1})=r(\tilde{\mu}(m_1))+1$ and $\tr(v_{m_1})=r(\tilde{\mu}(m_1))+1$. In other words both $m_1$ and $w_1$ have value $1$ for their matched partner in $\tilde{\mu}$, and any agent on the other side that is preferred to their matched partner, and $0$ for any agent on the other side which is not preferred to their matched partner. $\tilde{\mu}$ is a stable matching for which $\SW(\tilde{\mu} \mid \bv) = 2$. By construction, for any other matching $\mu$, it holds that 
\[
\text{either }\mu(m_1) \succ_{m_1} \tilde{\mu}(m_1) \ \text{ and } \  \tilde{\mu}(w_1) \succ_{w_1} \mu(w_1), \ \text{ or }\  
\mu(w_1) \succ_{w_1} \tilde{\mu}(w_1) \ \text{ and } \ \tilde{\mu}(m_1) \succ_{m_1} \mu(m_1)
\]
 In either case, we have that $\SW(\mu \mid \bv) = 1$, and the distortion bound follows.
 \end{itemize}
This completes the proof.
\end{proof}

\subsection{Positive Results for More Queries}\label{sec:more-queries-positive}

Our lower bounds in \cref{sec:more-queries-impossibility} establish that although a distortion of $2$ can rather easily be achieved with $1$ query, achieving a better distortion seems quite challenging, even if we are allowed to employ more queries. It is natural to consider the question of how many queries are sufficient to beat the bound of $2$, and, more generally, what the tradeoff between the number of queries and the distortion is. Given \cref{thm:lower-bound-queries-non-adaptive}, we are only concerned with algorithms that use adaptive queries. The general question that we aim to answer in this subsection is the following:

\begin{quote}
    \emph{Given an $\varepsilon > 0$, how many queries does a deterministic stable algorithm need to use in order to achieve distortion at most $1+\varepsilon$?}
\end{quote}

\noindent We first provide a general result that applies to all valuation profiles, namely that $O(\log n / \varepsilon^2)$ queries are enough to obtain the desired distortion. To achieve this, we adapt the approach proposed by \citet{amanatidis2021peeking, amanatidis2022few}, used in their work to design query-enhanced algorithms for voting and (non-stable) matching using binary search. The high-level idea is the following: given an agent $a_i$ and a threshold $\delta$, the algorithm uses binary search to find the last choice of an agent for which the agent has value $\delta \cdot v_{a_i}^*$, where $v_{a_i}^*$ is the value of the agent for its top choice. The algorithm then uses $\lambda$ such binary searches, effectively partitioning the value space of the agent into $\lambda+1$ ``buckets'', where the value of the agent for a choice in a bucket lies between the upper and lower thresholds (possibly $0$ for the last bucket) that define the bucket. Then, it creates a \emph{simulated} valuation function which sets the value of each choice in each bucket to the lower threshold of the bucket, and then finds an outcome that maximizes the social welfare based on the simulated valuations.

Our approach will be similar, but with some key differences. The first difference will be in the design of the algorithm: instead of using the whole preference ranking of an agent, we will only use the restriction of the preference ranking to stable partners. This is necessary, as otherwise the simulated social welfare of our algorithm might not correspond to a stable matching. This is also sufficient because only stable pairs can appear in any stable matching. The second difference is more subtle and has to do with the analysis of the algorithm, rather than its design. In particular, since for each man (resp. woman), his (resp. her) top choice among stable partners is his (resp. her) partner in the man-optimal (resp. woman-optimal) matching, that allows us to bound the contribution to the social welfare of the agents in the last bucket by effectively twice the welfare of the matching chosen by our algorithm. In turn, this allows us to achieve an improved bound of $O(\log n/\varepsilon^2)$, compared to the $O(\log^2 n / \varepsilon)$ bound proven by \citet{amanatidis2021peeking, amanatidis2022few} for their settings. We present the algorithm in \cref{alg:stsf} and the theorem that establishes its distortion in \cref{thm:stsf-distortion} below.

\begin{algorithm}[t]
\caption{\textsc{Stable Threshold Step Function (Stable-TSF)}}
\label{alg:stsf}
\LinesNumbered
\KwIn{A preference profile $\succ=(\succ_M, \succ_W)$, and an $\varepsilon \in (0,1]$.}
\KwOut{A stable matching $\mu \in \mathcal{M}_s(\succ)$.}
\BlankLine

Let $\lambda = \frac{4\ln (1/\varepsilon)}{\varepsilon}$, and $t_\ell = \frac{\varepsilon^{\ell/\lambda}}{4}$, for $\ell = 1, \cdots, \lambda$.\\
\tcc{The threshold $\delta$ is $t_\ell / t_{\ell+1}$, which is $\varepsilon^{1/\lambda}$ for any $\ell$.}
\BlankLine

\If{$\varepsilon < (\log_2 n)/n$}
{Perform $n$ queries per agent to learn the whole valuation profile $q^n(\bv)=\bv$. Let $\tilde{\bv} = \bv$.} \label{line:query-all}
\Else{
Run Algorithm $\mathcal{A}_\text{gus}$ to compute, for each agent $a_i \in M \cup W$, the set $S_a$ of its stable partners (by \cref{lem:all-stable-pairs}), and construct the preference ranking $\succ_{a_i}^s$ based on $S_{a_i}$.
\BlankLine

\BlankLine
\ForEach{agent $a_i \in A$}{

    Query agent $a_i$ for her favorite stable partner $f^s(a_i) \in \bar{A}$.\\
    Set $T_{a_i,0} \rightarrow \{f^s(a_i)\}$, and let $v_{a_i}^* = v_{a_i}(f^s(a_i))$.

    \For{$\ell = 1$ \KwTo $\lambda$}{
    \tcc{Use binary search to create buckets, where the $\ell$-th bucket contains all the stable partners of $a_i$ for whom $a_i$'s value is between $t_\ell$ and $t_{\ell-1}$ times its value for its favorite stable partner.}
        Using binary search on $\succ^s_{a_i}$, compute \label{line:binary_search}
        \[
        T_{a_i,\ell}
        \leftarrow
        \left\{
            a_j \in \bar{A} : v_{a_i}(a_j) \in [t_\ell \cdot v_{a_i}^*,\, t_{\ell-1} \cdot v_{a_i}^*]
        \right\}
        \]

        Set $\tilde v_{a_i}(a_j) = t_\ell \cdot v_{a_i}^*$ for every $a_j \in T_{a_i,\ell}$\\
        \tcc{Set the simulated value of agent $a_i$ for agent $a_j$ to be the lower boundary of the bucket.}
    }

    Set $T_{a_i} \leftarrow \bigcup_{\ell=0}^{\lambda} T_{a_i,\ell}$\\
    \tcc{The set of stable partners for whom $a_i$ was queried.}

    Set $\tilde v_i(a_j) = 0$ for every $a_j \in \bar{A} \setminus T_{a_i}$\\
    \tcc{Set the simulated vale of agent $a_i$ for agent $a_j$ to $0$, if $a_j$ is not in the $\lambda+1$-th (last) bucket.}
}
}

\BlankLine
\Return $\mu \in \arg \max_{\hat{\mu} \in \mathcal{M}_s(\tilde{\bv})} \SW (\hat{\mu} \mid  \tilde{\bv})$, using Algorithm $\mathcal{A}_\text{ilg}$.\\
\tcc{Return a stable matching which maximizes the simulated social welfare, using the algorithm of \citet{irving1987efficient}.}
\end{algorithm}

\begin{theorem}\label{thm:stsf-distortion}
For any $\varepsilon \in (0,1]$, the \textsc{Stable-TSF} algorithm performs $\min \left\{n,\frac{4\log_2 n}{\varepsilon^2}\right\}$ queries per agent and achieves distortion at most $1+\varepsilon$.
\end{theorem}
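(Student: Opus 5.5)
The proof has two parts: the query count and the distortion. For the query count, if $\varepsilon < (\log_2 n)/n$ the algorithm queries everything ($n$ queries), and $n \le 4\log_2 n/\varepsilon^2$ in that regime. Otherwise each agent receives one query for $f^s(a_i)$ plus $\lambda$ binary searches over a list of at most $n$ stable partners, each costing $O(\log n)$ queries. This gives at most $1+\lambda\lceil\log_2 n\rceil$ queries, which is $O(\ln(1/\varepsilon)\log n/\varepsilon)$. Since $\ln(1/\varepsilon)\le 1/\varepsilon$, this is at most $4\log_2 n/\varepsilon^2$ after tracking constants. Rounding of $\lambda$ needs a little care but is routine. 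In the full-information case the algorithm returns an optimal stable matching, so we may assume $\varepsilon \ge (\log_2 n)/n$ from now on.

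For the distortion, fix an optimal stable matching $\mu^*$ and let $\mu$ be the output. Two facts about the simulated profile drive everything.

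\begin{itemize}
\item[(i)] $\tilde{\bv}$ induces rankings consistent with $\succ$ after restricting to stable pairs. Bucket values are monotone along $\succ^s$, and every matching in $\mathcal{M}_s(\succ)$ uses only stable pairs. So maximizing over $\mathcal{M}_s(\succ)$ with $\tilde{\bv}$ is legitimate, and $\SW(\mu\mid\tilde{\bv}) \ge \SW(\mu^*\mid\tilde{\bv})$.
\item[(ii)] $\tilde v \le v$ pointwise, because each value is rounded down to the bucket's lower threshold. Hence $\SW(\mu\mid\bv)\ge \SW(\mu\mid\tilde\bv)$.
\end{itemize}

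Now split the contribution of each agent $a$ in $\mu^*$ into two kinds. If $\mu^*(a)$ lies in a bucket $\ell\ge 1$, then $\tilde v_a(\mu^*(a)) \ge \varepsilon^{1/\lambda} v_a(\mu^*(a))$, since consecutive thresholds differ by the ratio $\varepsilon^{1/\lambda}$. The top bucket $T_{a,0}$ holds the exact value, and partners with value between $t_0 v_a^*$ and $v_a^*$ also lose at most this factor, after adjusting the top threshold. If $\mu^*(a)$ falls below $t_\lambda v_a^* = (\varepsilon/4)v_a^*$, its value is lost but is at most $(\varepsilon/4)v_a^*$. Summing over agents gives
\[
\SW(\mu^*\mid\bv) \le \varepsilon^{-1/\lambda}\SW(\mu^*\mid\tilde\bv) + \tfrac{\varepsilon}{4}\Big(\textstyle\sum_{m} v_m^* + \sum_w v_w^*\Big).
\]

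The key step is to bound the last sum by the welfare of $\mu$. Since $f^s(m)=\mu_M^*(m)$, we have $\sum_m v_m^* = \SW_M(\mu_M^*\mid\bv)$, and similarly $\sum_w v_w^* = \SW_W(\mu_W^*\mid\bv)$. Each of these is at most $\SW(\mu_M^*\mid\bv)$ or $\SW(\mu_W^*\mid\bv)$ respectively, which is at most $\SW(\mu^*\mid\bv)$. The total is therefore at most $2\SW(\mu^*\mid\bv)$.

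I expect this to be the main obstacle: the bound is in terms of $\mathrm{OPT}$, not $\SW(\mu)$, and a stronger lower-bound argument is needed to close the loop. Using it in terms of $\mathrm{OPT}$ gives
\[
\big(1-\tfrac{\varepsilon}{2}\big)\SW(\mu^*\mid\bv)\le \varepsilon^{-1/\lambda}\SW(\mu\mid\bv).
\]
With $\lambda = 4\ln(1/\varepsilon)/\varepsilon$ we get $\varepsilon^{-1/\lambda}=e^{\varepsilon/4}\le 1+\varepsilon/3$ for $\varepsilon\le 1$. The ratio is then $(1+\varepsilon/3)/(1-\varepsilon/2)$, which slightly exceeds $1+\varepsilon$ for $\varepsilon$ near $1$.

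I would fix this by bounding $\sum_m v_m^*$ more carefully. The men's top values are captured exactly in $\tilde\bv$ via $T_{a,0}$. Hence $\tilde{\SW}(\mu_M^*)\ge \sum_m v_m^*$, and the same holds for $\mu_W^*$. Since $\mu$ maximizes the simulated welfare over stable matchings, $\sum_m v_m^* + \sum_w v_w^* \le 2\,\SW(\mu\mid\tilde\bv)\le 2\SW(\mu\mid\bv)$. This yields
\[
\SW(\mu^*\mid\bv)\le (\varepsilon^{-1/\lambda}+\varepsilon/2)\,\SW(\mu\mid\bv)\le (1+\varepsilon/3+\varepsilon/2)\,\SW(\mu\mid\bv),
\]
which is at most $(1+\varepsilon)\SW(\mu\mid\bv)$. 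This "twice the algorithm's welfare" bound on the last bucket is precisely the refinement highlighted before the algorithm.
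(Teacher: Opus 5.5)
Your argument is correct and is essentially the paper's proof. Agents whose optimal partner lands in a bucket lose at most a factor $\varepsilon^{-1/\lambda}$, which transfers to $\mu$ via optimality of $\mu$ for $\tilde{\bv}$ and $\tilde v\le v$. The remaining agents contribute at most $t_\lambda$ times their favorite-stable-partner values, and since $f^s$ is queried exactly and $\mu$ beats $\mu_M^*$ and $\mu_W^*$ in simulated welfare, this is at most $2t_\lambda\,\SW(\mu\mid\bv)=\frac{\varepsilon}{2}\SW(\mu\mid\bv)$; your OPT-based detour can simply be dropped.

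The one step you wave through, ``adjusting the top threshold,'' is a real issue, though the paper shares it. With $t_0=\varepsilon^{0}/4=1/4$, partners valued in $(v_a^*/4,v_a^*)$ fall in no bucket. Setting $t_0=1$ instead makes the first bucket's ratio $4\varepsilon^{-1/\lambda}$. So one actually needs geometrically spaced thresholds from $1$ down to $\varepsilon/4$, about $4\ln(4/\varepsilon)/\varepsilon$ of them at ratio $e^{\varepsilon/4}$. This still gives $O(\log n/\varepsilon^2)$ queries, but the constant $4$ must be re-verified, and the paper's proof silently makes the same assumption.
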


\begin{proof}
Consider any valuation profile $\bv$, and let $\mu^*(\bv)$ be the optimal stable matching on $\bv$. Let $\mu$ be the stable matching returned by the \textsc{Stable-TSF} algorithm. We will partition the agents in $M \cup W$ into two sets $S$ and $\bar{S}$, where $S$ will contain the agents whose partner in the $\mu^*$ is in the set $T_{a_i}$, i.e., their partner falls into one of the first $\ell$ buckets constructed by the algorithm. Formally, we have
\[
S := \{a_i \in M \cup W \mid \mu^*(a_i) \in T_{a_i}\}, \text{ and } \bar{S} = (M \cup W) \setminus S
\]
For ease of notation, let also $\bar{S}_M = \bar{S} \cap M$ and $\bar{S}_W = \bar{S} \cap W$, and observe that $\bar{S}_M \cup \bar{S}_W = \bar{S}$. We will bound the contribution to the social welfare of $\mu^*$ from agents in those two sets separately. We start with the agents in the set $\bar{S}$. We have:
\begin{small}
\begin{align*}
\sum_{a_i \in \bar{S}}v_{a_i}(\mu^*(a_i)) &= \sum_{m_i \in \bar{S}_M}v_{m_i}(\mu^*(m_i)) + \sum_{w_i \in \bar{S}_W}v_{w_i}(\mu^*(w_i)) 
\leq t_{\lambda} \cdot \left(\sum_{m_i \in \bar{S}_M} v_{m_i}(f^s(m_i)) +  \sum_{w_i \in \bar{S}_W} v_{w_i}(f^s(w_i))\right) \\
 &\leq t_{\lambda} \cdot \left(\SW_M(\mu_M^* \mid \bv) + \SW_W(\mu_W^* \mid \bv\right) 
 \leq 2t_{\lambda} \cdot \SW(\mu \mid \bv)
= \frac{\varepsilon}{2} \cdot \SW(\mu \mid \bv)
\end{align*}
\end{small}
The first inequality follows from the definition of the set $\bar{S}$, since each agent $a_i$ in this set has value at most $t_\lambda \cdot v_{a_i} (f^s(a_i))$ for their assigned partner $\mu^*(a_i)$ in the optimal matching. The second inequality follows from the fact that in the man-optimal (resp. woman-optimal) matching, each man (resp. woman) is matched with his (resp. her) favorite stable partner. The last inequality follows from the fact that every agent $a_i$ has been queried for its favorite stable partner $f^s(a_i)$, and therefore $\tilde{v}_{a_i}(f^s(a_i))=v_{a_i}(f^s(a_i))$. Since the algorithm returns a matching the maximizes the revealed social welfare, and the revealed social welfare is always a lower bound on the actual social welfare of the matching, the social welfare of $\mu$ is at least that of both the man-optimal and the woman-optimal matching. The last equation follows by the definition of $t_\lambda$. 

Next, we bound the contribution to the social welfare from the agents in $S$. For $\ell =0,\ldots, \lambda$, let $S_\ell = \{a_i \in S \mid \mu^*(a_i) \in T_{a_i,\ell}\}$ be the restriction of the agents in $S$ to those whose partner in the optimal stable matching is in the $\ell$-th bucket. For any agent $a_i \in S_\ell$, and any agent agent $a_j \in T_{a_i,\ell}$, we have 
\begin{equation*}
v_{a_i}(a_j) \leq t_{\ell -1} \cdot v_{a_i}(f^s(a_i)) = \frac{t_{\ell-1}}{t_\ell}\cdot t_\ell \cdot v_{a_i}(f^s(a_i)) = \frac{1}{\delta}\cdot \tilde{v}_{a_i}(a_j),
\end{equation*}
where $\delta = \frac{t_{\ell-1}}{t_\ell}$ for all $\ell = 1, \ldots, \lambda$. The first inequality above follows by the definition of the set $T_{a_i,\ell}$, and the last equation follows from the definition of the simulated valuation function. Therefore, we can bound the contribution to the social welfare of $\mu^*$ from the agents in $S$ as follows:
\begin{align*}
\sum_{a_i \in S}v_{a_i}(\mu^*(a_i)) &= \sum_{\ell = 0}^\lambda \sum_{a_i \in S_\ell}v_{a_i}(\mu^*(a_i)) \leq \sum_{\ell = 0}^\lambda \sum_{a_i \in S_\ell}\frac{1}{\delta} \cdot \sum_{a_i \in M \cup W} \tilde{v}_{a_i}(\mu^*(a_i)) \\
&\leq \frac{1}{\delta} \cdot \SW(\mu \mid \bv) = \varepsilon^{-1/\lambda} \cdot \SW(\mu \mid \bv),  
\end{align*}
where the last inequality follows from the fact that $\mu$ maximizes the social welfare when calculated with the simulated valuation functions, and the last equation follows from the definition of $\delta$. Putting everything together, and substituting the value of $\lambda$, we have that 
\[
\text{dist}(\textsc{Stable-TSF}) \leq \frac{\varepsilon}{2} + \varepsilon^{-\frac{\varepsilon}{4\ln(1/\varepsilon)}} = \frac{\varepsilon}{2} + \varepsilon^{\frac{\varepsilon}{4\ln \varepsilon}}= \frac{\varepsilon}{2} + e^{\frac{\varepsilon}{4\ln \varepsilon}\cdot \ln \varepsilon} = \frac{\varepsilon}{2} + e^{\varepsilon/4} \leq 1+\varepsilon,
\]
where the last inequality holds for any $\varepsilon < 4 \ln 2$, i.e., any $\varepsilon \in (0,1]$. \medskip

\noindent Each binary search procedure in \cref{line:binary_search} of \cref{alg:stsf} requires $\log_2 n$ queries, and the procedure is called $4\ln (1/\varepsilon)/\varepsilon$ times. Since $4\ln (1/\varepsilon) < 4/\varepsilon$ for any $\varepsilon >0$, we have that the number of queries of the algorithm is at most $4\log_2n/\varepsilon^2$. If $\varepsilon < \log_2 n$, the algorithm in \cref{line:query-all} will query the whole valuation profile using $n$ queries per agent, therefore the number of queries that it performs is upper bounded by $\min\{n,4\log_2 n /\varepsilon^2$\}. This completes the proof.
\end{proof}


\subsection{Improved Distortion Bounds on Rotation Poset Paths} 

\cref{thm:stsf-distortion} and \cref{thm:lower-bound-queries-adaptive} together imply that $\Theta(\log n)$ queries are both necessary and sufficient to achieve a distortion of $1+\varepsilon$, for any constant $\varepsilon >0$. However, it is still quite interesting to consider how tight these bounds are with respect to the value of $\varepsilon$. In particular, we would like to investigate whether the $O(\log n /\varepsilon^2)$ bound of \cref{thm:stsf-distortion} can be improved to $O(\log n /\varepsilon)$; since we would like our $\varepsilon$ to be small (possibly even asymptotically smaller than any constant), the extra $\varepsilon$ factor in the denominator could be quite significant. 

To this end, we provide improved distortion bounds when we impose a structural restriction on the preference profiles that can appear as input to query-enhanced algorithms. This restriction does not apply directly to the profiles themselves, but rather to the \emph{rotation posets} corresponding to these profiles. Informally, the rotation poset is a compact way to describe all stable matchings on a given profile, and how to move from one stable matching to another. The rotation poset has in fact appeared already in some of our proofs, which were delegated to the appendix; this is due to the fact that they require the introduction of these more advanced notions related to the structure of stable matchings, which we present in \cref{app:background-stable-matchings}. We state our main theorem of this section informally below; we present the details and the formal statement in \cref{app:improved-distortion-with-structure}.

\begin{inftheorem}
It is possible to achieve distortion $1+\varepsilon$ with $O(\log n / \varepsilon)$ queries, when the rotation poset is a path.
\end{inftheorem}
\section{Average-Case Distortion}\label{sec:average}

So far, we have considered the \emph{worst-case} performance of algorithms, on valuation profiles that are \emph{adversarially} selected. In this section, we will consider the \emph{average case}, when the agents' values for the agents of the other side are drawn independently from a given distribution. More precisely, following the setup of \citet{caragiannis2024beyond}, we will use $\bv \sim F$ to denote that for any agent $a_i \in A$ and any agent $a_g \in \bar{A}$, $v_{a_i}(a_j)$ is drawn independently from $F$. Given this, we can define the average-case distortion of an algorithm $\mathcal{A}$ as follows:
\begin{equation}
\label{eq:average-case-distortion}
    \text{avdist}(\mathcal{A}) = \sup_{n,F} \frac{\mathbb{E}_{\bv \sim F}[\max_{\mu \in \mathcal{M}_s(\succ_\bv)}\SW(\mu\mid\bv)]}{\mathbb{E}_{\bv \sim F}[\SW(\mathcal{A}(\succ_{\bv})\mid\bv)]}
\end{equation}
From \cref{thm:any-ordinal-distortion-unbounded}, we know that the worst-case distortion of the Deferred Acceptance algorithm is unbounded. This, however, does not exclude the algorithm from having good performance on ``typical'' valuation profiles. To study this, we will study the average-case distortion of the algorithm. To this end, we first provide the following bound, which follows relatively easily from the symmetry of the valuation profiles, as these are effectively drawn independently from $F$.

\begin{theorem}\label{thm:average-case-DA}
The average-case distortion of the Deferred Acceptance algorithm is at most $2$.
\end{theorem}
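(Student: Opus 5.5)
Fix the men-proposing version of \cref{alg:da}; its output is $\mu_M^*$ (the women-proposing case is symmetric). The plan is to relate $\mathbb{E}_{\bv\sim F}[\SW(\mu_W^*\mid\bv)]$ to $\mathbb{E}_{\bv\sim F}[\SW(\mu_M^*\mid\bv)]$ by a symmetry argument, and then reuse the pointwise bound from the analysis of \RMOWO.

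\textbf{Step 1 (pointwise bound).} Exactly as in the proof of the distortion of \RMOWO, for every profile $\bv$,
\[
\max_{\mu\in\mathcal{M}_s(\succ_\bv)}\SW(\mu\mid\bv) \leq \SW_M(\mu_M^*\mid\bv)+\SW_W(\mu_W^*\mid\bv) \leq \SW(\mu_M^*\mid\bv)+\SW(\mu_W^*\mid\bv).
\]
The first inequality holds because every man weakly prefers his partner in $\mu_M^*$ to his partner in any stable matching, and likewise every woman for $\mu_W^*$. Taking expectations gives $\mathbb{E}[\mathrm{OPT}]\le \mathbb{E}[\SW(\mu_M^*)]+\mathbb{E}[\SW(\mu_W^*)]$.

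\textbf{Step 2 (symmetry).} I will show $\mathbb{E}_{\bv\sim F}[\SW(\mu_W^*\mid\bv)]=\mathbb{E}_{\bv\sim F}[\SW(\mu_M^*\mid\bv)]$. Define the involution $\sigma$ on profiles that swaps the roles of the two sides: man $m_i$ receives the valuation vector of woman $w_i$ (reindexed so that $v'_{m_i}(w_j)=v_{w_i}(m_j)$), and symmetrically for women. Since all $2n^2$ entries are i.i.d.\ from $F$, $\sigma$ preserves the distribution: $\sigma(\bv)\sim F$ whenever $\bv\sim F$. The tie-breaking rule must also be applied symmetrically; for instance, break ties by index, which commutes with $\sigma$. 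Then $\sigma$ maps the ordinal profile $\succ_\bv$ to the side-swapped profile, so the men-proposing outcome on $\sigma(\bv)$ is the image of the women-proposing outcome on $\bv$. Welfare is a sum over all agents, so it is invariant under renaming, and therefore $\SW(\mu_M^*(\sigma(\bv))\mid\sigma(\bv))=\SW(\mu_W^*(\bv)\mid\bv)$. Taking expectations and using distributional invariance gives the claimed equality.

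\textbf{Step 3 (conclusion).} Combining the two steps yields $\mathbb{E}[\mathrm{OPT}]\le 2\,\mathbb{E}[\SW(\mu_M^*\mid\bv)]$, so the ratio in \eqref{eq:average-case-distortion} is at most $2$ for every $n$ and $F$. The only genuine obstacle is Step 2: the ordinal profile, and with it the stable set, must transform correctly under $\sigma$. This needs care when $F$ has atoms and ties occur, and fixing a deterministic side-symmetric tie-breaking rule handles it. Degenerate cases where the denominator is $0$ are harmless, since the numerator is then $0$ as well.
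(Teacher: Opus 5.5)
Your proof is correct and follows essentially the same route as the paper. Both combine the pointwise bound $\max_{\mu\in\mathcal{M}_s(\succ_\bv)}\SW(\mu\mid\bv)\le \SW(\mu_M^*\mid\bv)+\SW(\mu_W^*\mid\bv)$ with a men/women role-swap symmetry giving $\mathbb{E}[\SW(\mu_M^*\mid\bv)]=\mathbb{E}[\SW(\mu_W^*\mid\bv)]$; your explicit involution $v'_{m_i}(w_j)=v_{w_i}(m_j)$ and the side-symmetric tie-breaking rule make precise the relabeling step, which the paper states only loosely, and the treatment of ties, which it leaves implicit.
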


\begin{proof}
We consider the men-proposing variant of the algorithm, as described in \cref{alg:da}; the argument for the women-proposing variant is completely symmetric. Consider any valuation profile $\bv$, which is constructed by independent draws of the values of the men and women from any distribution $F$. Now consider another valuation profile $\tilde{\bv}$ which is constructed by $\bv$ by reversing the roles of men and women and relabeling: Specifically, given any agent $a_i \in A$ and any agent $a_j \in \bar{A}$, we have $\tilde{v}_{a_i}(a_j) = v_{a_j}(a_i)$. Since all of the agents' values are drawn i.i.d. from $F$, it holds that $\bv$ and $\tilde{\bv}$ induce the same distribution over valuation profiles. Therefore, we have that 
$\mathbb{E}_{\bv \sim F}[\SW(\mu_M^*(\succ_\bv))] = \mathbb{E}_{\tilde{\bv} \sim F}[\SW(\mu_M^*(\succ_{\tilde{\bv}}))]$.

Now observe that, by definition of $\tilde{\bv}$, we have that $\mathbb{E}_{\tilde{\bv} \sim F}[\SW(\mu_M^*(\succ_{\tilde{\bv}}))] = \mathbb{E}_{\bv \sim F}[\SW(\mu_W^*(\succ_{\bv}))]$, and hence $\mathbb{E}_{\bv \sim F}[\SW(\mu_M^*(\succ_\bv))]= \mathbb{E}_{\bv \sim F}[\SW(\mu_W^*(\succ_{\bv}))]$, i.e., the man-optimal and woman-optimal matchings have the same expected social welfare. Using this, we can bound the expected social welfare of the algorithm as follows:
\begin{align*}
\mathbb{E}_{\bv \sim F}[\SW(\mu_M^*(\succ_\bv))] &= \frac{1}{2} \cdot \bigg(\mathbb{E}_{\bv \sim F}[\SW(\mu_M^*(\succ_\bv))] + \mathbb{E}_{\bv \sim F}[\SW(\mu_W^*(\succ_{\bv}))] \bigg) \\ &= \frac{1}{2} \cdot \mathrm{E}_{\bv \sim F}[\SW(\mu_M^*(\succ_{\bv}))+\SW(\mu_W^*(\succ_{\bv}))] \\ &\geq \frac{1}{2} \cdot \mathrm{E}_{\bv \sim F} \bigg[\max_{\mu \in \mathcal{M}_s(\succ_\bv)}\SW(\mu \mid \bv)\bigg],
\end{align*}
\noindent where the second equation above follows by the linearity of expectation, and the inequality follows by the fact that on any valuation profile $\bv$, the social welfare of the man-optimal (resp. woman-optimal) matching is an upper bound on the total social welfare of the men (resp. the women) in the optimal stable matching. This proves the desired distortion bound. 
\end{proof}

\subsection{Experiments}

\cref{thm:average-case-DA} provides an upper bound on the average-case distortion of the Deferred Acceptance algorithm. While proving tight theoretical bounds is beyond the scope of our work, we perform a set of experiments to measure the empirical average-case distortion of the algorithm on randomly-generated valuation profiles. Our experiments indicate that the actual average-case distortion might be much closer to $1$ than $2$ on typical inputs.

\paragraph{Data Generation.} Our data generation process consists of two steps: first, we generate (ordinal) preference profiles $\succ$, and then we ``fit'' randomly chosen valuation profiles $\bv$ consistent with $\succ$, by drawing the values of each agent from a distribution. For the generation of $\succ$, we follow the methodology proposed in \citep[Sections 5 and 7]{boehmer2024map}. In particular, we consider the following statistical cultures:
\begin{itemize}[leftmargin=*]
    \item[-] \textbf{Attributes} \citep{bhatnagar2008sampling}\textbf{:} Given some $d \in \mathbb{N}$, for each agent $a \in A$, we draw random samples $\mathbf{p}^a \in [0,1]^d$ and $\mathbf{w}^a \in [0,1]^d$ from the uniform distribution. Then, agent $a$ ranks agents in $\bar{A}$ decreasingly by $\sum_{i \in [d]}\mathbf{w}^a\mathbf{p}^a$. \medskip
    
    \item[-] \textbf{Impartial Culture (IC)} \citep{guilbaud1952theories} \textbf{:} For each agent $a \in A$, we draw the agent's ranking uniformly at random from the set of all possible preference rankings over the agents in $\bar{A}$.   \medskip
    
    \item[-] \textbf{IC2} \citep{boehmer2024map} \textbf{:} Given a $p \in [0,1/2]$, we partition the sets $M$ and $W$ into two sets each, namely $M_1 \cup M_2$ and $W_1 \cup W_2$, with $|M_1| = \lfloor p \cdot |M|\rfloor$ and $|W_1| = \lfloor p \cdot |W|\rfloor$. Each agent samples one preference ordering $\succ_a^1$ from $A_1$ and one preference ordering $\succ_a^2$ from $A_2$, for $A \in \{M,W\}$, depending on whether $a$ is a man or a woman. If If $a \in A_1$, then its preference is its preference $\succ_a^1$ followed by $\succ_a^2$, otherwise its preference is $\succ_a^2$ followed by $\succ_a^1$.\medskip
    
    \item[-] \textbf{Mallows Model} \citep{mallows1957non}{:} Given a \emph{ground truth} preference ranking $\succ_\text{true}$ and a \emph{dispersion parameter} $\phi \in [0,1]$, the Mallows distribution selects a preference ranking $\succ_a$ for each agent $a \in A$ independently, with probability proportional to $\phi^d(\succ_a, \succ_\text{true})$, where $d(\cdot)$ is a distance function between two preference rankings; typically, $d(\cdot)$ is taken to be the \emph{Kendal-Tau distance} \citep{kendall1938new}, which measures the number of pairwise swaps needed to convert one ranking to another. The version of Mallows Model that we employ is one due to \citet{boehmerputting2021}, which uses a normalized dispersion parameter $\text{norm}-\phi$, to avoid datasets that are very skewed, see \citep[Section 5]{boehmer2024map} for more details.  
\end{itemize}
\citet{boehmer2024map} also consider other statistical cultures, related to Euclidean metrics. As we mentioned in the Related Work section, in these settings the stable matching has been proven to be unique \citep{arkin2009geometric}, and hence those would not be meaningful for our purposes. \medskip

\noindent For the generation of the consistent valuation profiles $\bv$, following \citep[Section 5.2]{filos2024revisiting} we use the following distributions:
\begin{itemize}[leftmargin=*]
    \item[-] \textbf{Uniform distribution in $[0,1]$:} The simplest baseline case, where all the values are equally likely. \medskip
    
    \item[-] \textbf{Beta distribution with $\alpha = 1/2$ and $\beta =1/2$:} This distribution has a pdf which is symmetric and convex, centered around a mean of $1/2$. Thus, higher probabilities are assigned to ``extreme'' values, close to $0$ or $1$. \medskip
    
    \item[-] \textbf{Exponential distribution:} This distribution has pdf $f(x)=e^{-1}$ when $x \geq 0$, and $f(x)=0$, otherwise. Thus, values close to $0$ are generated with high probability, and the probability of generation increases exponentially as the values move away from $0$.
\end{itemize}
To test the performance of the Deferred Acceptance algorithm in a more ``adversarial'' but still average-case scenario, we also consider the following somewhat artificial-looking distribution:
\begin{itemize}[leftmargin=*]
    \item[-] \textbf{\emph{Spiked} Uniform distribution:} This distribution resembles the uniform distribution, in the sense that with high probability $0.98$, it draws values uniformly from $[0,2/10]$ and with low probability $0.02$, it outputs $1-0.01\cdot x$, where $x$ is drawn uniformly from $[0,1]$. 
\end{itemize}

\paragraph{Experimental Results.} For each of the statistical models described above for the generation of $\succ$, and each of the first three distributions of the generation of $\bv$, we sample $100$ profiles with $n=5,10,15, 20,40$ men and the same number of women. We run the Deferred Acceptance algorithm for each of those profiles, as well as the algorithm that computes the optimal stable matching. We measure the average welfare of the two algorithms and take the ratio, as in \eqref{eq:average-case-distortion}, to calculate the empirical average-case distortion. \medskip 

\noindent Our results are shown in \cref{tab:distortion-DA-empirical}. It is clear from the table that the distortion of the Deferred Acceptance algorithm is very close to $1$ in all cases. 

\begin{table}[t]
\centering

\begin{subtable}[t]{0.48\textwidth}
\centering
\begin{tabular}{rccc}
\toprule
$n$ & UNIFORM & EXPONENTIAL & BETA \\
\midrule
5  & 1.007 & 1.004 & 1.002 \\
10 & 1.003 & 1.004 & 1.003 \\
15 & 1.003 & 1.001 & 1.003 \\
20 & 1.003 & 1.003 & 1.003 \\
40 & 1.001 & 1.002 & 1.002 \\
\bottomrule
\end{tabular}
\caption{ATTRIBUTES}
\end{subtable}
\hfill
\begin{subtable}[t]{0.48\textwidth}
\centering
\begin{tabular}{rccc}
\toprule
$n$ & UNIFORM & EXPONENTIAL & BETA \\
\midrule
5  & 1.019 & 1.047 & 1.021 \\
10 & 1.027 & 1.031 & 1.023 \\
15 & 1.022 & 1.029 & 1.035 \\
20 & 1.023 & 1.032 & 1.036 \\
40 & 1.027 & 1.026 & 1.035 \\
\bottomrule
\end{tabular}
\caption{IC}
\end{subtable}

\vspace{0.5em}

\begin{subtable}[t]{0.48\textwidth}
\centering
\begin{tabular}{rccc}
\toprule
$n$ & UNIFORM & EXPONENTIAL & BETA \\
\midrule
5  & 1.011 & 1.021 & 1.015 \\
10 & 1.010 & 1.025 & 1.015 \\
15 & 1.014 & 1.025 & 1.017 \\
20 & 1.013 & 1.029 & 1.017 \\
40 & 1.015 & 1.021 & 1.018 \\
\bottomrule
\end{tabular}
\caption{IC2}
\end{subtable}
\hfill
\begin{subtable}[t]{0.48\textwidth}
\centering
\begin{tabular}{rccc}
\toprule
$n$ & UNIFORM & EXPONENTIAL & BETA \\
\midrule
5  & 1.015 & 1.023 & 1.026 \\
10 & 1.020 & 1.020 & 1.013 \\
15 & 1.010 & 1.015 & 1.010 \\
20 & 1.007 & 1.012 & 1.008 \\
40 & 1.003 & 1.003 & 1.004 \\
\bottomrule
\end{tabular}
\caption{MALLOWS}
\end{subtable}

\caption{Empirical average-case distortion bounds for the Deferred Acceptance algorithm, for different statistical models and distributions.}
\label{tab:distortion-DA-empirical}
\end{table}

\begin{figure}[t]
    \centering
    \includegraphics[width=1.0\linewidth]{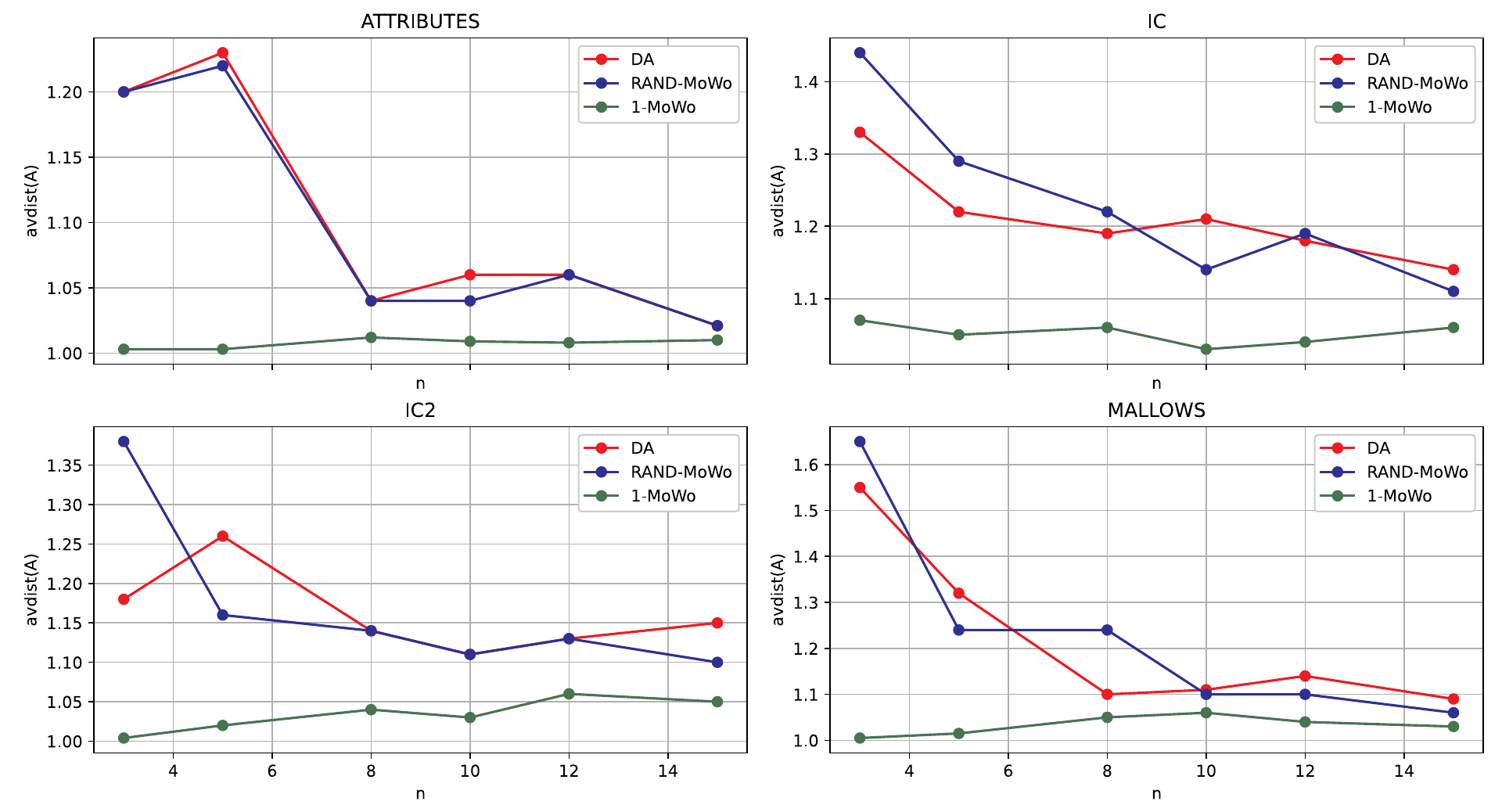}
    \caption{The average-case distortion of Deferred Acceptance, as well as its Randomized and $1$-query versions, on the Spiked Uniform Distribution, for $3,5,8,10,12$ and $15$ men and women.}
    \label{fig:empirical-evaluation}
\end{figure}

\medskip

\noindent Finally, we consider the Spiked Uniform Distribution defined above, for the same four statistical cultures for the generation of $\succ$. We test the performance of the Deferred Acceptance algorithm on inputs with $n=3,5,8,10,12$ and $15$ men and women, and present the results in \cref{fig:empirical-evaluation}. In the same figure, we also show the performance of the \textsc{Rand-MoWo} algorithm (\cref{alg:rand-mowo}) and the query-enchanced \textsc{1-MoWo} (\cref{alg:1-mowo}) algorithm. It turns out that in this case the distortion of  is not as close to $1$ as in the cases of the other three distributions, but it is still no larger than $1.5$ in all cases, with the exception of the Mallows Model and the case of $n=3$. We observe that as $n$ becomes larger, the distortion of the algorithm drops and eventually converges to values quite close to $1$. The figure also shows that the \textsc{1-MoWo} algorithm has performance very close to $1$ for any number of agents, which shows the superiority of being able to identify the best among the man-optimal and the woman-optimal matching. Interestingly, the \textsc{Rand-MoWo} algorithm performs generally worse than the deterministic version of Deferred Acceptance; similar conclusions were also drawn in \citep{filos2024revisiting} for the performance of randomized ordinal algorithms in practice, which in theory should enjoy better performance guarantees.

Our experimental results suggest that the average-case distortion of Deferred Acceptance might be notably better than the upper bound established in \cref{thm:average-case-DA}, and that this warrants further theoretical investigation in the future.

\section{Conclusion and Future Work}\label{sec:conclusion}

In this work, we initiated the study of distortion in the stable matching problem. First, we proved that while the Deferred Acceptance algorithm, as well as any ordinal algorithm for the problem, has unbounded distortion, a randomized version of the algorithm achieves a distortion of $2$; furthermore, this distortion is best possible among all randomized stable algorithms for the problem. In terms of query-enhanced algorithms, we observed that a single-query (deterministic) variant of Deferred Acceptance achieves the same distortion bound of $2$, and proved that this is best possible for any algorithm that uses at most either $O(\log n)$ adaptive queries per agent, or $O(n)$ non-adaptive such queries. Furthermore, we presented a different algorithm, which, given any $\varepsilon >0$, achieves distortion $1+\varepsilon$ using at most $O(\log n /\varepsilon^2)$ queries per agent; this algorithm achieves the asymptotically best possible distortion when $\varepsilon$ is a constant. By imposing a natural restriction on the structure of the rotation poset, we managed to provide improved upper bounds on the number of queries required to achieve an $(1+\varepsilon)$-distortion. 
Finally, we proved that the average-case distortion of Deferred Acceptance is at most $2$, when the values of the agents are drawn i.i.d. from a given distribution, and complemented this result with a set of experiments which establish that the empirical average-case distortion of the algorithm is much closer to $1$ on typical randomly generated inputs.\medskip

\noindent We identify several promising avenues for future work, which are outlined below.

\paragraph{Tight query bounds.} An interesting technical challenge associated with our work is to identify the precise bound on the number of queries required to achieve a distortion of $1+\varepsilon$. While the general bound that we managed to prove is $O(\log n /\varepsilon^2)$, we conjecture that the actual bound will be smaller by a factor of $\varepsilon$.

\begin{conjecture}
There is a query-enhanced stable algorithm which achieves distortion $1+\varepsilon$ with $O(\log n /\varepsilon)$ queries per agent.
\end{conjecture}

\noindent In \cref{app:improved-distortion-with-structure} we managed to prove the conjecture for rotation poset paths. We expect that the key to resolving the conjecture will be the appropriate use of binary search, though not on the agents' values (as in \cref{alg:stsf}), but rather on the branches of the poset graph; our investigation in \cref{app:improved-distortion-with-structure} may prove to be an integral starting point in this quest. 

\paragraph{Other notions of stability.} The literature associated with the stable matching problem has introduced further notions of stability; it might be quite meaningful to study those, especially given the strong negative result for ordinal algorithms that we presented in \cref{thm:any-ordinal-distortion-unbounded}. A well-studied relaxation of stable matchings is that of \emph{popular matchings} \citep{gardenfors1975match,abraham2007popular,mcdermid2011popular,huang2013popular,cseh2017popular}; intuitively, these are matchings that do not ``lose'' in a pairwise comparison against any other matching. A stable matching is always popular, but not vice-versa; hence, it is conceivable that, with this weaker notion, better distortion bounds might be possible. However, here we have to be mindful of the fact that we are working with a stronger benchmark: indeed, one can construct examples where the social welfare of the best popular matching is significantly larger than that of the best stable matching. As a result, we do not longer have the guarantee that the social welfare of the best among the man-optimal and the woman-optimal stable matchings  is a $2$-approximation to the optimal welfare. Additionally, we can no longer use the standard machinery of rotations presented in \cref{app:background-stable-matchings}, and the corresponding machinery for popular matching is seemingly much less developed. 

In a similar vein, one can also consider \emph{stable fractional matchings}, a notion of randomized stable matchings studied notably by \citet{caragiannis2021stable}. In such a matching $\mu$, a blocking pair is a (man,woman) pair such that each agent in the pair has a higher utility for its partner than its expected utility for the partners it is matched with by $\mu$; $\mu$ is stable fractional if there are no blocking pairs. As we suggested in \cref{rem:all-randomized-lower}, \cref{thm:lower-bound-ordinal-randomized} applies to these matchings as well, and hence $2$ is the best approximation one can hope for. Note however, that, similarly to above, the benchmark that we compare against is stronger; in fact, it can be inferred by the work of \citet{caragiannis2021stable} that the gap between the social welfare of the best fractional and the best integral stable matchings can be very large. This implies that the \RMOWO algorithm, although stable fractional, is not guaranteed to have a distortion of $2$ with respect to this stronger benchmark. 

\paragraph{Unreliable Cardinal Information.} It would be interesting to generalize our results to cases where the answers to the value queries may be imperfect, or even potentially completely erroneous. Concretely, first we may assume that the answer to each value query has an associated imprecision $\eta$; our results can be extended to this case as well, and presenting the bounds as functions of $\eta$ is an interesting technical exercise. In a more intriguing investigation, we may assume that the answers to some of the queries could be entirely \emph{unreliable}, and the distortion of query-enhanced algorithms would need to be robust to this possibility. This sort of question is usually studied in the literature of \emph{predictions} \citep{lykouris2021competitive}, in terms of tradeoffs between \emph{consistency} and \emph{robustness}. The distortion of algorithms with cardinal predictions was studied recently by \citet{filos2025utilitarian} for single-winner voting and one-sided matching. Very recently, \citet{mccauley2026stable} considered stable matchings with predictions, but not in the context of distortion, and with the prediction being on the ordinal rankings of the agents, rather than their cardinal values.   

\paragraph{More General Models.} Finally, one could study generalizations of the stable matching problem in the context of distortion, e.g., stable roommates, hospitals/residents matching, many-to-many stable matching, 3D stable matching etc, see \citep{gusfield1989stable,manlove2013algorithmics}.

\bibliographystyle{plainnat}
\bibliography{references}

@article{gale1962college,
  title={College admissions and the stability of marriage},
  author={Gale, David and Shapley, Lloyd S},
  journal={The American mathematical monthly},
  volume={69},
  number={1},
  pages={9--15},
  year={1962},
  publisher={Taylor \& Francis}
}

@article{amanatidis2021peeking,
  title={Peeking behind the ordinal curtain: Improving distortion via cardinal queries},
  author={Amanatidis, Georgios and Birmpas, Georgios and Filos-Ratsikas, Aris and Voudouris, Alexandros A},
  journal={Artificial Intelligence},
  volume={296},
  pages={103488},
  year={2021},
  publisher={Elsevier}
}

@article{irving1986complexity,
  title={The complexity of counting stable marriages},
  author={Irving, Robert W and Leather, Paul},
  journal={SIAM Journal on Computing},
  volume={15},
  number={3},
  pages={655--667},
  year={1986},
  publisher={SIAM}
}

@book{gusfield1989stable,
  title={The stable marriage problem: structure and algorithms},
  author={Gusfield, Dan and Irving, Robert W},
  year={1989},
  publisher={MIT press}
}

@inproceedings{anshelevich2021distortion,
  title={Distortion in Social Choice Problems: The First 15 Years and Beyond},
  author={Anshelevich, Elliot and Filos-Ratsikas, Aris and Shah, Nisarg and Voudouris, Alexandros A},
  booktitle={30th International Joint Conference on Artificial Intelligence},
  pages={4294--4301},
  year={2021},
  organization={International Joint Conferences on Artificial Intelligence Organization}
}

@article{aziz2020justifications,
  title={Justifications of welfare guarantees under normalized utilities},
  author={Aziz, Haris},
  journal={ACM SIGecom Exchanges},
  volume={17},
  number={2},
  pages={71--75},
  year={2020},
  publisher={ACM New York, NY, USA}
}

@article{amanatidis2022few,
  title={A few queries go a long way: Information-distortion tradeoffs in matching},
  author={Amanatidis, Georgios and Birmpas, Georgios and Filos-Ratsikas, Aris and Voudouris, Alexandros A},
  journal={Journal of Artificial Intelligence Research},
  volume={74},
  pages={227--261},
  year={2022}
}

@inproceedings{ebadian2025every,
  title={Every bit helps: Achieving the optimal distortion with a few queries},
  author={Ebadian, Soroush and Shah, Nisarg},
  booktitle={Proceedings of the AAAI Conference on Artificial Intelligence},
  volume={39},
  number={13},
  pages={13788--13795},
  year={2025}
}

@article{amanatidis2024dont,
author = {Amanatidis, Georgios and Birmpas, Georgios and Filos-Ratsikas, Aris and Voudouris, Alexandros A.},
title = {Don’t Roll the Dice, Ask Twice: The Two-Query Distortion of Matching Problems and Beyond},
journal = {SIAM Journal on Discrete Mathematics},
volume = {38},
number = {1},
pages = {1007-1029},
year = {2024},
doi = {10.1137/23M1545677},

URL = { 
    
        https://doi.org/10.1137/23M1545677
    
    

},
eprint = { 
    
        https://doi.org/10.1137/23M1545677
    
    

}
}

@article{gusfield1987three,
  title={Three fast algorithms for four problems in stable marriage},
  author={Gusfield, Dan},
  journal={SIAM Journal on Computing},
  volume={16},
  number={1},
  pages={111--128},
  year={1987},
  publisher={SIAM}
}

@article{birkhoff1946tres,
  title={Tres observaciones sobre el algebra lineal},
  author={Birkhoff, Garrett},
  journal={Univ. Nac. Tucuman, Ser. A},
  volume={5},
  pages={147--154},
  year={1946}
}

@book{trotter2002combinatorics,
  title={Combinatorics and partially ordered sets},
  author={Trotter, William T},
  year={2002},
  publisher={Johns Hopkins University Press}
}

@article{roth1993stable,
  title={Stable matchings, optimal assignments, and linear programming},
  author={Roth, Alvin E and Rothblum, Uriel G and Vande Vate, John H},
  journal={Mathematics of operations research},
  volume={18},
  number={4},
  pages={803--828},
  year={1993},
  publisher={INFORMS}
}

@article{caragiannis2021stable,
  title={Stable fractional matchings},
  author={Caragiannis, Ioannis and Filos-Ratsikas, Aris and Kanellopoulos, Panagiotis and Vaish, Rohit},
  journal={Artificial Intelligence},
  volume={295},
  number={103416},
  pages={103416},
  year={2021}
}

@article{aziz2019random,
  title={Random matching under priorities: stability and no envy concepts},
  author={Aziz, Haris and Klaus, Bettina},
  journal={Social Choice and Welfare},
  volume={53},
  number={2},
  pages={213--259},
  year={2019},
  publisher={Springer}
}

@inproceedings{procaccia2006distortion,
  title={The distortion of cardinal preferences in voting},
  author={Procaccia, Ariel D. and Rosenschein, Jeffrey S.},
  booktitle={Proceedings of the 10th International Workshop on Cooperative Information Agents ({CIA})},
  pages={317--331},
  year={2006},
}

@article{boutilier2015optimal,
  title={Optimal social choice functions: A utilitarian view},
  author={Boutilier, Craig and Caragiannis, Ioannis and Haber, Simi and Lu, Tyler and Procaccia, Ariel D. and Sheffet, Or},
  journal={Artificial Intelligence},
  volume={227},
  pages={190--213},
  year={2015},
}

@article{caragiannis2017subset,
	title={Subset selection via implicit utilitarian voting},
	author={Caragiannis, Ioannis and Nath, Swaprava and Procaccia, Ariel D. and Shah, Nisarg},
	journal={Journal of Artificial Intelligence Research},
	volume={58},
	pages={123--152},
	year={2017}
}

@inproceedings{filos2025utilitarian,
  title={Utilitarian Distortion with Predictions},
  author={Filos-Ratsikas, Aris and Kalantzis, Georgios and Voudouris, Alexandros A},
  booktitle={Proceedings of the 26th ACM Conference on Economics and Computation},
  pages={254--271},
  year={2025}
}

@article{roth1992two,
  title={Two-sided matching},
  author={Roth, Alvin E and Sotomayor, Marilda},
  journal={Handbook of game theory with economic applications},
  volume={1},
  pages={485--541},
  year={1992},
  publisher={Elsevier}
}

@book{manlove2013algorithmics,
  title={Algorithmics of matching under preferences},
  author={Manlove, David},
  volume={2},
  year={2013},
  publisher={World Scientific}
}

@article{roth1984evolution,
  title={The evolution of the labor market for medical interns and residents: a case study in game theory},
  author={Roth, Alvin E},
  journal={Journal of political Economy},
  volume={92},
  number={6},
  pages={991--1016},
  year={1984},
  publisher={The University of Chicago Press}
}

@article{roth1999redesign,
  title={The redesign of the matching market for American physicians: Some engineering aspects of economic design},
  author={Roth, Alvin E and Peranson, Elliott},
  journal={American economic review},
  volume={89},
  number={4},
  pages={748--780},
  year={1999},
  publisher={American Economic Association}
}

@article{abdulkadirouglu2005new,
  title={The new york city high school match},
  author={Abdulkadiro{\u{g}}lu, Atila and Pathak, Parag A and Roth, Alvin E},
  journal={American Economic Review},
  volume={95},
  number={2},
  pages={364--367},
  year={2005},
  publisher={American Economic Association}
}

@article{irving1987efficient,
  title={An efficient algorithm for the “optimal” stable marriage},
  author={Irving, Robert W and Leather, Paul and Gusfield, Dan},
  journal={Journal of the ACM (JACM)},
  volume={34},
  number={3},
  pages={532--543},
  year={1987},
  publisher={ACM New York, NY, USA}
}

@inproceedings{feder1989new,
  title={A new fixed point approach for stable networks stable marriages},
  author={Feder, Tom{\'a}s},
  booktitle={Proceedings of the twenty-first annual ACM symposium on Theory of computing},
  pages={513--522},
  year={1989}
}

@article{anshelevich2013anarchy,
  title={Anarchy, stability, and utopia: creating better matchings},
  author={Anshelevich, Elliot and Das, Sanmay and Naamad, Yonatan},
  journal={Autonomous Agents and Multi-Agent Systems},
  volume={26},
  number={1},
  pages={120--140},
  year={2013},
  publisher={Springer}
}

@article{ebadian2023explainable,
  title={Explainable and efficient randomized voting rules},
  author={Ebadian, Soroush and Filos-Ratsikas, Aris and Latifian, Mohamad and Shah, Nisarg},
  journal={Advances in Neural Information Processing Systems},
  volume={36},
  pages={23034--23046},
  year={2023}
}

@article{ebadian2024optimized,
  title={Optimized distortion and proportional fairness in voting},
  author={Ebadian, Soroush and Kahng, Anson and Peters, Dominik and Shah, Nisarg},
  journal={ACM Transactions on Economics and Computation},
  volume={12},
  number={1},
  pages={1--39},
  year={2024},
  publisher={ACM New York, NY}
}

@inproceedings{filos2014truthful,
  title={Truthful approximations to range voting},
  author={Filos-Ratsikas, Aris and Miltersen, Peter Bro},
  booktitle={International Conference on Web and Internet Economics},
  pages={175--188},
  year={2014},
  organization={Springer}
}

@book{knuth1997stable,
  title={Stable marriage and its relation to other combinatorial problems: An introduction to the mathematical analysis of algorithms},
  author={Knuth, Donald Ervin},
  volume={10},
  year={1997},
  publisher={American Mathematical Soc.}
}

@article{cheng2023stable,
  title={Stable matchings with restricted preferences: structure and complexity},
  author={Cheng, Christine T and Rosenbaum, Will},
  journal={ACM Transactions on Economics and Computation},
  volume={10},
  number={3},
  pages={1--45},
  year={2023},
  publisher={ACM New York, NY}
}

@article{chebolu2012complexity,
  title={The complexity of approximately counting stable matchings},
  author={Chebolu, Prasad and Goldberg, Leslie Ann and Martin, Russell},
  journal={Theoretical Computer Science},
  volume={437},
  pages={35--68},
  year={2012},
  publisher={Elsevier}
}

@article{aho1972transitive,
  title={The transitive reduction of a directed graph},
  author={Aho, Alfred V. and Garey, Michael R and Ullman, Jeffrey D.},
  journal={SIAM Journal on Computing},
  volume={1},
  number={2},
  pages={131--137},
  year={1972},
  publisher={SIAM}
}

@inproceedings{bhatnagar2008sampling,
  title={Sampling stable marriages: why spouse-swapping won't work},
  author={Bhatnagar, Nayantara and Greenberg, Sam and Randall, Dana},
  booktitle={Proceedings of the nineteenth annual ACM-SIAM symposium on Discrete algorithms},
  pages={1223--1232},
  year={2008}
}

@inproceedings{caragiannis2024beyond,
  title={Beyond the worst case: Distortion in impartial culture electorates},
  author={Caragiannis, Ioannis and Fehrs, Karl},
  booktitle={International Conference on Web and Internet Economics},
  pages={420--437},
  year={2024},
  organization={Springer}
}

@inproceedings{ashlagi2025stable,
  title={Stable Matching with Interviews},
  author={Ashlagi, Itai and Chen, Jiale and Roghani, Mohammad and Saberi, Amin},
  booktitle={16th Innovations in Theoretical Computer Science Conference (ITCS 2025)},
  pages={12--1},
  year={2025},
  organization={Schloss Dagstuhl--Leibniz-Zentrum f{\"u}r Informatik}
}

@inproceedings{rastegari2016preference,
  title={Preference elicitation in matching markets via interviews: a study of offline benchmarks},
  author={Rastegari, B and Goldberg, P and Manlove, D},
  booktitle={Proceedings of the International Joint Conference on Autonomous Agents and Multiagent Systems, AAMAS},
  year={2016},
  organization={International Foundation for Autonomous Agents and Multiagent Systems}
}

@inproceedings{emek2015price,
  title={The price of matching with metric preferences},
  author={Emek, Yuval and Langner, Tobias and Wattenhofer, Roger},
  booktitle={Algorithms-ESA 2015: 23rd Annual European Symposium, Patras, Greece, September 14-16, 2015, Proceedings},
  pages={459--470},
  year={2015},
  organization={Springer}
}

@article{arkin2009geometric,
  title={Geometric stable roommates},
  author={Arkin, Esther M and Bae, Sang Won and Efrat, Alon and Okamoto, Kazuya and Mitchell, Joseph SB and Polishchuk, Valentin},
  journal={Information Processing Letters},
  volume={109},
  number={4},
  pages={219--224},
  year={2009},
  publisher={Elsevier}
}

@article{caragiannis2011voting,
  title={Voting almost maximizes social welfare despite limited communication},
  author={Caragiannis, Ioannis and Procaccia, Ariel D},
  journal={Artificial Intelligence},
  volume={175},
  number={9-10},
  pages={1655--1671},
  year={2011},
  publisher={Elsevier}
}

@inproceedings{rastegari2013two,
  title={Two-sided matching with partial information},
  author={Rastegari, Baharak and Condon, Anne and Immorlica, Nicole and Leyton-Brown, Kevin},
  booktitle={Proceedings of the fourteenth ACM conference on Electronic Commerce},
  pages={733--750},
  year={2013}
}

@inproceedings{drummond2014preference,
  title={Preference elicitation and interview minimization in stable matchings},
  author={Drummond, Joanna and Boutilier, Craig},
  booktitle={Proceedings of the AAAI Conference on Artificial Intelligence},
  volume={28},
  number={1},
  year={2014}
}

@article{gonczarowski2019stable,
  title={A stable marriage requires communication},
  author={Gonczarowski, Yannai A and Nisan, Noam and Ostrovsky, Rafail and Rosenbaum, Will},
  journal={Games and Economic Behavior},
  volume={118},
  pages={626--647},
  year={2019},
  publisher={Elsevier}
}

@article{ng1990lower,
  title={Lower bounds for the stable marriage problem and its variants},
  author={Ng, Cheng and Hirschberg, Daniel S},
  journal={SIAM Journal on Computing},
  volume={19},
  number={1},
  pages={71--77},
  year={1990},
  publisher={SIAM}
}

@article{segal2007communication,
  title={The communication requirements of social choice rules and supporting budget sets},
  author={Segal, Ilya},
  journal={Journal of Economic Theory},
  volume={136},
  number={1},
  pages={341--378},
  year={2007},
  publisher={Elsevier}
}

@inproceedings{koutsoupias1999worst,
  title={Worst-case equilibria},
  author={Koutsoupias, Elias and Papadimitriou, Christos},
  booktitle={Annual symposium on theoretical aspects of computer science},
  pages={404--413},
  year={1999},
  organization={Springer}
}

@article{gardenfors1975match,
  title={Match making: assignments based on bilateral preferences},
  author={G{\"a}rdenfors, Peter},
  journal={Behavioral Science},
  volume={20},
  number={3},
  pages={166--173},
  year={1975},
  publisher={Wiley Online Library}
}

@article{abraham2007popular,
  title={Popular matchings},
  author={Abraham, David J and Irving, Robert W and Kavitha, Telikepalli and Mehlhorn, Kurt},
  journal={SIAM Journal on Computing},
  volume={37},
  number={4},
  pages={1030--1045},
  year={2007},
  publisher={SIAM}
}

@article{cseh2017popular,
  title={Popular matchings},
  author={Cseh, {\'A}gnes},
  journal={Trends in computational social choice},
  volume={105},
  number={3},
  year={2017},
  publisher={AI Access}
}

@article{mcdermid2011popular,
  title={Popular matchings: structure and algorithms},
  author={McDermid, Eric and Irving, Robert W},
  journal={Journal of combinatorial optimization},
  volume={22},
  number={3},
  pages={339--358},
  year={2011},
  publisher={Springer}
}

@article{huang2013popular,
  title={Popular matchings in the stable marriage problem},
  author={Huang, Chien-Chung and Kavitha, Telikepalli},
  journal={Information and Computation},
  volume={222},
  pages={180--194},
  year={2013},
  publisher={Elsevier}
}

@article{mccauley2026stable,
  title={Stable Matching with Predictions: Robustness and Efficiency under Pruned Preferences},
  author={McCauley, Samuel and Moseley, Benjamin and Niaparast, Helia and Singh, Shikha},
  journal={arXiv preprint arXiv:2602.02254},
  year={2026}
}

@article{lykouris2021competitive,
  title={Competitive caching with machine learned advice},
  author={Lykouris, Thodoris and Vassilvitskii, Sergei},
  journal={Journal of the ACM (JACM)},
  volume={68},
  number={4},
  pages={1--25},
  year={2021},
  publisher={ACM New York, NY}
}

@article{boehmer2024map,
  title={A Map of Diverse Synthetic Stable Matching Instances},
  author={Boehmer, Niclas and Heeger, Klaus and Szufa, Stanis{\l}aw},
  journal={Journal of Artificial Intelligence Research},
  volume={79},
  pages={1113--1166},
  year={2024}
}

@article{guilbaud1952theories,
  title={Les th{\'e}ories de l’int{\'e}r{\^e}t g{\'e}n{\'e}ral et le probl{\`e}me logique de l’agr{\'e}gation},
  author={Guilbaud, Georges Th{\'e}odule},
  journal={Economie appliqu{\'e}e},
  volume={5},
  number={4},
  pages={501--584},
  year={1952},
  publisher={Pers{\'e}e-Portail des revues scientifiques en SHS}
}

@article{mallows1957non,
  title={Non-null ranking models. I},
  author={Mallows, Colin L},
  journal={Biometrika},
  volume={44},
  number={1/2},
  pages={114--130},
  year={1957},
  publisher={JSTOR}
}

@article{kendall1938new,
  title={A new measure of rank correlation},
  author={Kendall, Maurice G},
  journal={Biometrika},
  volume={30},
  number={1-2},
  pages={81--93},
  year={1938},
  publisher={Oxford University Press}
}

@inproceedings{boehmerputting2021,
  title     = {Putting a Compass on the Map of Elections},
  author    = {Boehmer, Niclas and Bredereck, Robert and Faliszewski, Piotr and Niedermeier, Rolf and Szufa, Stanisław},
  booktitle = {Proceedings of the Thirtieth International Joint Conference on
               Artificial Intelligence, {IJCAI-21}},
  publisher = {International Joint Conferences on Artificial Intelligence Organization},
  editor    = {Zhi-Hua Zhou},
  pages     = {59--65},
  year      = {2021},
  month     = {8},
  note      = {Main Track},
  doi       = {10.24963/ijcai.2021/9},
  url       = {https://doi.org/10.24963/ijcai.2021/9},
}

@article{filos2024revisiting,
  title={Revisiting the distortion of distributed voting},
  author={Filos-Ratsikas, Aris and Voudouris, Alexandros A},
  journal={Theory of Computing Systems},
  volume={68},
  number={5},
  pages={1138--1159},
  year={2024},
  publisher={Springer}
}

\newpage 
\appendix
\crefalias{section}{appendix}

\section*{Appendix}

\section{Useful Background: The Structure of Stable Matchings}\label{app:background-stable-matchings}
In this section, we review the well-established theory of partially ordered sets and explain how it relates to the stable matching problem. We first provide a series of key definitions:

\begin{definition}[Poset]\label{def:poset}
A {\em partially ordered set} ( or {\em poset} ) $\mathcal{P} = (V,\preceq)$ is a set $V$ equipped with a binary relation $\preceq$ on $V$ satisfying:
\begin{itemize}
  \item[-] Antisymmetry: for all distinct $u,v\in V$, if $u\prec v$ then $v\not\prec u$.
  \item[-] Transitivity: for all $u,v,w\in V$, if $u\preceq v$ and $v\preceq w$ then $u\preceq w$.
\end{itemize}
\end{definition}

\begin{definition}[Incomparable Elements]\label{def: incomp-elems}
    Given a poset $\mathcal{P} = (V,\preceq)$, two elements $x$ and $y$ in $V$ are called \emph{incomparable} based on the binary relation $\prec$ if neither $x \preceq y$ nor $y \preceq x$ holds.
\end{definition}

\begin{definition}[Antichain]\label{def: antichain}
    Given a poset $\mathcal{P} = (V,\preceq)$. An antichain $A \subseteq V$ is a subset of $V$ such that every two
    elements of $A$ are incomparable as in \cref{def: incomp-elems}.
\end{definition}

\begin{definition}[Chain]\label{def: chain}
    Given a poset $\mathcal{P} = (V,\preceq)$. A chain $C \subseteq V$ is a subset of $V$ such that every two elements of $C$ are comparable.
\end{definition}

\begin{definition}[Closed Subset]\label{def: closed-subset}
    Given a poset $\mathcal{P} = (V,\preceq)$. A subset $X \subseteq V$ is a {\em closed subset} of $\mathcal{P}$ if for every $x \in X$ and $y \in V$ such that $y \preceq x$ then $y \in X$. That is, if $x \in X$ then every predecessor $y$ of $x$ is also in $X$.
\end{definition}

\begin{definition}[Hasse Diagram]\label{def: hasse-diagram}
    Given a poset $\mathcal{P} = (V,\preceq)$ the {\em Hasse Diagram} of $\mathcal{P}$ is the directed acyclic graph (DAG) which we denote as $H(\mathcal{P}) = (V,E)$ with
    \[
    E= \{(v,w) \in V \times V \hspace{1mm} \mid \hspace{1mm} v \prec w \text{ and there is no } z \in V \text{ such that } v \prec z \prec w\}
    \]
    We will also denote the relation as $v \prec: w $ if there is no $z$ in $V$ such that $v \prec z \prec w$
\end{definition}

\begin{comment}
\begin{definition}[Saturated Chain] \label{def: saturated-chain}
    Given a poset $\mathcal{P} = (V,\preceq)$. Let positive integer $k$ and $z_i \in V$ for $i \in [k]$. A chain
    \[
    z_0 \prec z_1 \prec z_2 \prec \dots \prec z_k
    \]
    is \emph{saturated} for all $i \in [k-1]$ if it holds that $z_i \prec: z_{i+1}$.
\end{definition}

\noindent We refer the interested reader to the textbook of \cite{trotter2002combinatorics} for a more detailed exposition of partially ordered sets and their properties. \medskip 
\end{comment}

\noindent We now proceed to define a poset which has a one-to-one correspondence between its closed subsets and the stable matchings on any preference profile $\succ$. First, we define the very central notion of a \emph{rotation}, which is the basic element for constructing the binary relation of the poset.

\begin{definition}[Rotation]\label{def:rotation}
Let $k \geq 2$. A \emph{rotation} $\rho$ is an ordered list of pairs
\[
\rho = \left( (m_0, w_0), (m_1, w_1), \ldots, (m_{k-1}, w_{k-1}) \right)
\]
that are matched in some stable matching $\mu$ with the property that for every $i$ such that $0 \leq i \leq k-1$, woman $w_{i+1\Mod k}$ is the \textit{highest} ranked woman on $m_i$’s preference list satisfying:
\begin{itemize}
    \item[-] man $m_i$ prefers $w_i$ to $w_{i+1\Mod k}$, and
    \item[-] woman $w_{i+1\Mod k}$ prefers $m_i$ to $m_{i+1}$.
\end{itemize}

In this case, we say $\rho$ is {\em exposed} in $\mu$. We will also refer to this woman as $\text{succ}_\mu(m_i)$.
\end{definition}

\noindent Next, we present two known properties of rotations, via two lemmas proved in \citep{irving1986complexity} and \citep{gusfield1989stable}.  

\begin{lemma}[\cite{irving1986complexity}]\label{lem: pair-in-one-rotation}
    A pair $(m, w)$ can appear in at most one rotation.
\end{lemma}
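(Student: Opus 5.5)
The plan is to use the standard lattice description of stable matchings through rotation \emph{eliminations}. Everything happens inside the man-oriented distributive lattice of $\mathcal{M}_s(\succ)$, and the key quantity is how a fixed man's partner changes along a sequence of eliminations.

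\textbf{Step 1 (elimination and monotonicity).} Let $\rho=((m_0,w_0),\ldots,(m_{k-1},w_{k-1}))$ be exposed in a stable matching $\mu$. Eliminating $\rho$ means matching each $m_i$ to $w_{i+1 \Mod k}$ and leaving all other pairs unchanged. I will recall, from \citep{gusfield1989stable}, that the result $\mu/\rho$ is again stable. By \cref{def:rotation}, every $m_i$ strictly prefers $w_i$ to $w_{i+1\Mod k}$. Hence in one elimination each man of $\rho$ moves strictly down his list and every other man stays put. Consequently, along any sequence $\mu_M^*=\mu^0,\mu^1,\ldots,\mu^r=\mu_W^*$ in which each $\mu^{j+1}$ is obtained from $\mu^j$ by eliminating an exposed rotation, each man's partner weakly worsens at every step. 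In particular, once a man leaves a woman, he is never matched to her again later in the sequence.

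\textbf{Step 2 (every rotation occurs in every maximal chain).} I would invoke the classical fact from \citep{irving1986complexity,gusfield1989stable} that such sequences from $\mu_M^*$ to $\mu_W^*$ exist. Moreover, every rotation of the instance is eliminated exactly once in every such sequence. This holds because the set of rotations eliminated in reaching a stable matching $\mu$ from $\mu_M^*$ is a closed subset determined by $\mu$ alone, and reaching $\mu_W^*$ requires the full set of rotations. This is the step I regard as the main obstacle. If I cannot simply cite it, I would prove it as follows. Take a rotation $\rho$ exposed in some stable matching $\mu$. Connect $\mu_M^*$ to $\mu$ by eliminations, which is possible because $\mu_M^*$ dominates $\mu$ for the men. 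Eliminate $\rho$. Then continue down to $\mu_W^*$. Finally, argue that all maximal chains have the same multiset of eliminated rotations, using the fact that rotations exposed simultaneously are disjoint and their eliminations commute (a diamond-lemma argument).

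\textbf{Step 3 (conclusion).} Suppose $(m,w)$ belonged to two distinct rotations $\rho\neq\rho'$. Fix one maximal elimination sequence from $\mu_M^*$ to $\mu_W^*$. By Step 2 both $\rho$ and $\rho'$ are eliminated in it, say $\rho$ before $\rho'$. When $\rho$ is eliminated, $m$ is matched to $w$ and moves to a woman strictly worse than $w$. When $\rho'$ is eliminated later, $m$ must again be matched to $w$, since $\rho'$ is exposed in the current matching and contains $(m,w)$. This contradicts the monotonicity of Step 1. Hence $(m,w)$ lies in at most one rotation.
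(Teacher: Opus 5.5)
The paper does not prove this lemma; it only cites \citet{irving1986complexity}. Your argument is the classical one from that literature (see also \citet{gusfield1989stable}), and it is correct. The argument runs as follows: every rotation is eliminated exactly once along every maximal elimination chain from $\mu_M^*$ to $\mu_W^*$, and each man only moves down his list along such a chain, so a pair can be broken at most once.

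The weight sits entirely on Step 2, and your first justification there essentially restates it. The paper's \cref{lemma:unique-rotation-subset} only says that the set of eliminated rotations determines $\mu$. It does not say that every elimination path to $\mu$ uses the same set. Moreover, the claim that $\mu_W^*$ corresponds to all of $R(\succ)$ is exactly the maximal-chain theorem you are trying to establish.

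Your diamond-lemma fallback does close this gap, once you check its two ingredients.
\begin{itemize}
\item[(i)] Let $M\neq M'$ be stable with $M$ dominating $M'$ for the men. Take any man $m$ with $M(m)\neq M'(m)$. Since women weakly prefer $M'$ to $M$, the woman $\text{succ}_M(m)$ exists and lies weakly above $M'(m)$ on his list. Her $M$-partner also differs between $M$ and $M'$, since otherwise $(m,\text{succ}_M(m))$ blocks $M'$. So the successor map has a cycle inside this set of men. That cycle is a rotation exposed in $M$ whose elimination still dominates $M'$.
\item[(ii)] Two distinct rotations exposed in $\mu$ are disjoint cycles of the successor map. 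Moreover, $\rho_2$ stays exposed in $\mu/\rho_1$. Eliminating $\rho_1$ only improves women's partners, so no woman strictly between $w_i$ and $w_{i+1}$ on $m_i$'s list becomes available. Meanwhile, $w_{i+1}$'s partner is untouched.
\end{itemize}
Every elimination strictly increases the total rank $\sum_{m}\mathrm{rank}_m(\mu(m))$ of the men's partners, so you can induct on it. This induction shows that all maximal chains end at $\mu_W^*$ and eliminate the same multiset of rotations. The rest of your argument then goes through as written.
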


\begin{lemma}[\citep{gusfield1989stable}]
     If $\rho$ is a rotation with consecutive pairs $(m_i, w_i)$, and $(m_{i+1}, w_{i+1})$, and $w$ is a woman between $w_i$ and $w_{i+1}$ in $m_i$’s preference list, then there is no stable matching containing the pair $(m_i, w)$.
\end{lemma}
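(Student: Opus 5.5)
The plan is a short contradiction argument. It combines the maximality built into \cref{def:rotation} with the classical lattice structure of stable matchings (Conway's theorem, see \citep{gusfield1989stable}). I assume the following standard fact. For any two stable matchings $\mu,\mu'$, giving every man the \emph{less} preferred of his two partners yields a stable matching $\nu$. Moreover, in $\nu$ every woman receives the \emph{more} preferred of her two partners in $\mu,\mu'$.

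\textbf{Step 1: $w$ prefers her own partner to $m_i$.} Let $\mu$ be a stable matching in which $\rho$ is exposed, so $\mu(m_i)=w_i$ and $\mu(m_{i+1})=w_{i+1}$. Since $w$ lies strictly between $w_i$ and $w_{i+1}$ on $m_i$'s list, we have $w_i \succ_{m_i} w \succ_{m_i} w_{i+1}$. By \cref{def:rotation}, $w_{i+1}$ is the \emph{highest} woman on $m_i$'s list who satisfies both listed conditions. Now $w$ is ranked above $w_{i+1}$ and does satisfy the first condition ($m_i$ prefers $w_i$ to $w$). Hence she must fail the second one, which in the standard form refers to $w$'s partner in $\mu$. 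So $w$ does not prefer $m_i$ to $\mu(w)$. Because $w\neq w_i$, we have $\mu(w)\neq m_i$, and strict preferences then give $\mu(w)\succ_w m_i$.

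\textbf{Step 2: contradiction via the lattice.} Suppose some stable matching $\mu'$ contains $(m_i,w)$. Form $\nu$ from $\mu$ and $\mu'$ as in the lattice fact, giving each man his worse partner. Man $m_i$ compares $w_i$ with $w$, so $\nu(m_i)=w$. On the other hand, $\nu(w)$ is $w$'s better partner among $\mu(w)$ and $\mu'(w)=m_i$, which by Step 1 is $\mu(w)\neq m_i$. This contradicts $\nu$ being a matching with $\nu(m_i)=w$. Hence no stable matching contains $(m_i,w)$.

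\textbf{Main obstacle.} The delicate point is Step 1, namely applying the second condition of \cref{def:rotation} correctly. As literally written, that condition compares $m_i$ with $m_{i+1}$. In the intended (standard, Irving--Leather) definition it compares $m_i$ with the woman's own current partner in $\mu$. I would state explicitly that I use this standard reading, since it is what makes Step 1 go through. The lattice fact is classical and I would simply cite it. If a self-contained argument is preferred, it can instead be derived by the usual blocking-pair argument showing that $\nu$ is a stable matching.
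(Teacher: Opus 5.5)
Your proof is correct. The paper does not prove this lemma itself; it only quotes it from Gusfield and Irving. Your argument is the standard one.

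Your reading of the second condition in the rotation definition, which compares $m_i$ with the woman's own partner in $\mu$, is the intended one. It is exactly how the paper applies the definition in \cref{app:early-lemma-proof-with-rotations} when computing $\text{succ}_{\mu_0}(m_4)$, where $w_1$ is skipped because she prefers her own partner $m_1$ to $m_4$.

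One small refinement for a self-contained version. Step 2 never uses that $\nu$ is stable. It only uses that ``each man takes his worse partner'' and ``each woman takes her better partner'' describe the same bijection. That fact follows from the stability of $\mu$ and $\mu'$ together with a counting argument. So this is the part of Conway's theorem you would need to reprove, not the blocking-pair argument showing that $\nu$ is stable.
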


\noindent Next we define the \emph{elimination} of a rotation.
\begin{definition}[Elimination of a Rotation]
Let $\rho = \left((m_0, w_0), \ldots, (m_{k-1}, w_{k-1})\right)$ be a rotation exposed in stable matching $M$. The rotation $\rho$ is \emph{eliminated from} $M$ by matching $m_i$ to $w_{(i+1) \bmod k}$ for all $0 \leq i \leq k - 1$, leaving all other pairs in $M$ unchanged, i.e., matching $M$ is replaced with matching $M'$, where
\[
M' := M \setminus \rho \, \cup \, p_s(\rho).
\]
and $p_s(\rho) = \{(m_0, w_1), (m_1, w_2), \ldots, (m_{k-1}, w_0)\}$. Note that when we eliminate a rotation from $M$, the resulting matching $M'$ is stable and in the following sections we simply denote the elimination as $M' := M \setminus \rho$. 
\end{definition}
\noindent To informally see why $M'$ is stable, notice that when switching from $M$ to $M'$, all the partners of the women in $\rho$ are better off, and all the partners of the men in $\rho$ are worse off. It is easy to check that this switch cannot create a blocking pair in the set $\rho$. The only other possibility is for a blocking pair to involve a man in $\rho$ and a woman outside $\rho$. For $(m_i \in \rho, w \notin \rho)$ to become a blocking pair, $m_i$ would have to prefer $w$ to $w_{i+1}$, but by the definition of rotation, $w_{i+1}$ was the first woman on $m$’s list who would prefer to be matched to him, so he cannot prefer $w$ to $w_{i+1}$. See also \citep[Lemma 2.5.2]{gusfield1989stable}. \medskip

\noindent In fact, every stable matching $\mu$ can be obtained by starting from the man-optimal stable matching $\mu_M^*=\mu_0$, and then eliminating a sequence of rotations. 

\begin{lemma}[\cite{irving1986complexity}]\label{lemma:unique-rotation-subset}
    Let $\mu$ be a stable matching. Then there exists a sequence of rotations $\rho_0, \rho_1,\ldots, \rho_{k-1}$ and stable matchings $\mu_0, \mu_1,\ldots, \mu_k = \mu$ such that $\mu_0$ is the man-optimal stable matching and, for each $i$, $\rho_i$ is a rotation exposed in $\mu_i$, and $\mu_{i+1} = \mu_i \setminus \rho_i$. Moreover, the set $\{\rho_0, \rho_1, \ldots, \rho_{k-1}\}$ uniquely specifies $\mu$.    
\end{lemma}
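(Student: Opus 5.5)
We would prove \cref{lemma:unique-rotation-subset} in two parts. First we show by induction that a suitable sequence of rotation eliminations exists. Then we argue that the set of eliminated rotations is forced by $\mu$ itself. Throughout we compare stable matchings by the men's preferences: write $\mu' \succeq \mu$ if every man weakly prefers $\mu'(m)$ to $\mu(m)$. Since $\mu_0 = \mu_M^*$ is man-optimal, $\mu_0 \succeq \mu$ for every stable $\mu$.

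\textbf{Existence.} We maintain a stable matching $\mu_i \succeq \mu$, starting from $\mu_0$, and let $D_i$ be the set of men with $\mu_i(m) \neq \mu(m)$. If $D_i=\emptyset$ we stop. Otherwise we build a rotation inside $D_i$ as follows.

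1. For each $m \in D_i$, consider the women strictly below $\mu_i(m)$ on $m$'s list who prefer $m$ to their partner in $\mu_i$. Since $m$ strictly prefers $\mu_i(m)$ to $\mu(m)$, the standard ``opposite interests'' property gives that the woman $w=\mu(m)$ prefers $m$ to $\mu_i(w)$. So this set is nonempty, and $\text{succ}_{\mu_i}(m)$ exists and is ranked by $m$ at or above $\mu(m)$.

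2. We claim $\text{succ}_{\mu_i}(m)=w'$ satisfies $\mu_i(w') \in D_i$. If instead $\mu_i(w')=\mu(w')$, then $w'$ prefers $m$ to $\mu(w')$. Also $m$ weakly prefers $w'$ to $\mu(m)$, and $w' \neq \mu(m)$ because $\mu(m)$ is not matched to $\mu_i(w')=\mu(w')$. So $m$ strictly prefers $w'$ to $\mu(m)$, and $(m,w')$ would block $\mu$, a contradiction.

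3. Hence $m \mapsto \mu_i(\text{succ}_{\mu_i}(m))$ maps $D_i$ into $D_i$, and iterating it from any man produces a cycle. By \cref{def:rotation} this cycle is a rotation $\rho_i$ exposed in $\mu_i$.

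Eliminating $\rho_i$ gives a stable $\mu_{i+1}$ in which each man of $\rho_i$ moves to $\text{succ}_{\mu_i}(m)$, which by step 1 is still weakly above $\mu(m)$. Thus $\mu_{i+1} \succeq \mu$, and the set of men differing from $\mu$ does not grow. Each man in $\rho_i$ strictly moves down his list and never below $\mu(m)$, so the potential $\sum_m \text{rank}_m(\mu_i(m))$ strictly increases and is bounded above by $\sum_m \text{rank}_m(\mu(m))$. The process therefore terminates, and it can only terminate with $\mu_k=\mu$.

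\textbf{Uniqueness.} By the lemma of \citet{gusfield1989stable} stated above, no woman strictly between $w_j$ and $w_{j+1}$ on $m_j$'s list is a stable partner of $m_j$. So each elimination moves every involved man to his \emph{next} stable partner on $\succ^s_m$. Consequently, along any elimination sequence from $\mu_0$ to $\mu$, man $m$ passes through exactly the stable partners ranked from $\mu_0(m)$ down to $\mu(m)$, and leaves each of them except the last exactly once. The set of pairs $(m,w)$ that get removed is therefore determined by $\mu$. By \cref{lem: pair-in-one-rotation} each such pair belongs to exactly one rotation. Hence the set of eliminated rotations equals the set of rotations containing some removed pair, which depends only on $\mu$.

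\textbf{Main obstacle.} The step I expect to need the most care is step 2, showing that the successor chain never leaves $D_i$. The argument relies on $\mu_i \succeq \mu$ together with the opposite-interests property of stable matchings: if a man strictly prefers his partner in one stable matching to his partner in another, then that partner prefers him less in the first. I would prove or cite that property first, as a short lemma.
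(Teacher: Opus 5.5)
Your proof is correct. The paper does not prove this lemma; it simply cites Irving and Leather, so there is no in-paper argument to compare against. What you give is essentially the standard textbook proof. The existence part is the Gusfield--Irving dominance argument: the successor map $m\mapsto\mu_i(\text{succ}_{\mu_i}(m))$ never leaves the set $D_i$, so some rotation exposed in $\mu_i$ can be eliminated without moving past $\mu$. The uniqueness part combines the two lemmas already listed in the appendix: there is no stable partner strictly between consecutive rotation partners, and each pair lies in at most one rotation. What your route buys is that the appendix becomes self-contained, modulo the elimination lemma and the opposite-interests property.

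There are two small points to tighten.

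First, in step 1, the opposite-interests property in the form you state it speaks about $\mu_i(m)$, not about $w=\mu(m)$. To get your claim, apply it instead to $m'=\mu_i(w)$. Since $\mu_i(m)\neq w$, we have $m'\neq m=\mu(w)$, so $\mu(m')\neq w=\mu_i(m')$ and $m'\in D_i$. Then $\mu_i\succeq\mu$ says $m'$ strictly prefers $w$ to $\mu(m')$, and the property yields that $w$ prefers $\mu(w)=m$ to $\mu_i(w)$.

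Second, your uniqueness paragraph shows that $\mu$ determines the set of eliminated rotations, i.e.\ that the set is independent of the chosen sequence. This is what the paper needs in order to define $R_\mu$. Read literally, however, the ``Moreover'' clause asserts the converse: the set determines $\mu$. This follows in one sentence from your own observation. The pairs removed along the sequence are exactly the pairs of the rotations in the set, and for each man $m$ they form an initial segment of $\succ^s_m$ starting at $\mu_0(m)=f^s(m)$. Hence $\mu(m)$ is the first element of $\succ^s_m$ that is not paired with $m$ in any rotation of the set. Add that sentence and both readings are covered.
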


\noindent We denote by $R(\succ)$ the set of all rotations exposed in any stable matching on the preference profile $\succ$. \cref{lemma:unique-rotation-subset} shows that every stable matching $\mu$ on $\succ$ can be associated with a unique subset $R_\mu \subseteq R(\succ)$ consisting of rotations that must be eliminated to obtain $\mu$ from $\mu_0$ (the man-optimal matching $\mu_M^*$). For example, $\mu_0$ corresponds to the empty set, while the woman-optimal stable matching, $\mu_z=\mu_W^*$, corresponds to $R(\succ)$. Not all subsets of $R(\succ)$, however, correspond to some stable matching on $\succ$. In particular, a rotation $\pi$ is said to be an explicit predecessor of a rotation $\rho = (m_0, w_0), \ldots, (m_{k-1}, w_{k-1})$ if there exists an index $i$ with $0 \leq i \leq k-1$ and a woman $x \neq w_i$ such that $\pi$ is the eliminating rotation for the pair $(m_i, x)$ and $m_i$ prefers $x$ to $w_{i+1}$. In particular, a rotation cannot become exposed until all of its explicit predecessors have been eliminated. In particular, \citet{irving1986complexity} defined a poset structure on $R(\succ)$ to characterize the subsets of $R(\succ)$ that correspond to a stable matching as follows:

\begin{definition}\label{def:rotation-precedes}
Suppose $\rho, \rho' \in R(\succ)$. We say $\rho$ \emph{precedes} $\rho'$ and write $\rho \prec \rho'$ if for every stable matching $\mu$ in which $\rho'$ is exposed, we have $\rho \in R_\mu$. That is, $\rho \prec \rho'$ if $\rho$ was eliminated in every stable matching in which $\rho'$ is exposed.
\end{definition}
Having established the binary relation from \cref{def:rotation-precedes}, \citet{irving1986complexity} proved the existence of a partially ordered set, whose closed subsets are in bijection with the stable matchings induced in a preference profile. Namely,
\begin{theorem}[\cite{irving1986complexity}]\label{thm:rotation-poset}
Let $\succ$ be a preference profile. Then $\mathcal{R} = (R(\succ), \preceq)$ is a poset. Moreover, there is a one-to-one correspondence between the stable matchings on $\succ$ and the closed subsets of $\mathcal{R}$.
\end{theorem}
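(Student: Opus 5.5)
The plan is to verify the two poset axioms directly from \cref{def:rotation-precedes} together with \cref{lemma:unique-rotation-subset}, and then to show that the map $\mu \mapsto R_\mu$ is a bijection onto the closed subsets. As a preliminary monotonicity fact, I will use the following. Along any elimination sequence $\mu_0,\mu_1,\ldots$ the sets $R_{\mu_i}$ are nested and increasing. Moreover, each elimination moves every man in the rotation strictly down his list. Hence once $\rho$ has been eliminated, its pairs $(m_j,w_j)$ never reappear, so $\rho$ is never exposed again. In particular, $\rho$ exposed in $\mu$ implies $\rho \notin R_\mu$.

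For the poset axioms, I set $\rho \preceq \rho'$ iff $\rho=\rho'$ or $\rho\prec\rho'$. Transitivity is straightforward. Suppose $\rho\prec\rho'\prec\rho''$ and $\rho''$ is exposed in $\mu$. Then $\rho' \in R_\mu$, so $\rho'$ was exposed at some intermediate $\mu_i$ of an elimination sequence reaching $\mu$. Hence $\rho \in R_{\mu_i} \subseteq R_\mu$. For antisymmetry, suppose $\rho\prec\rho'$ and $\rho'\prec\rho$. Take any $\mu$ in which $\rho'$ is exposed; such a $\mu$ exists since $\rho' \in R(\succ)$. Then $\rho\in R_\mu$, so $\rho$ was exposed at some $\mu_i$ earlier in the sequence, and therefore $\rho'\in R_{\mu_i}\subseteq R_\mu$. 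This contradicts the preliminary fact that $\rho'$ exposed in $\mu$ forces $\rho'\notin R_\mu$.

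For the correspondence, injectivity of $\mu\mapsto R_\mu$ is exactly the uniqueness clause of \cref{lemma:unique-rotation-subset}. To see that $R_\mu$ is closed, let $\rho'\in R_\mu$ and $\rho\prec\rho'$. Then $\rho'$ was exposed in some $\mu_i$ of the sequence leading to $\mu$, so $\rho\in R_{\mu_i}\subseteq R_\mu$. The substantive part is surjectivity. Given a closed set $X$, I will build $\mu$ greedily. Start at $\mu_0$ and, while the current matching $\mu$ satisfies $R_\mu\subsetneq X$, pick a $\preceq$-minimal $\rho\in X\setminus R_\mu$ and show that it is exposed in $\mu$. Eliminating it keeps $R_\mu \subseteq X$, so the process ends with $R_\mu=X$.

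I expect the exposure claim to be the main obstacle. My first attempt would go by contradiction, using the fact that $\rho$ is exposed in some stable matching. Suppose $\rho$ is not exposed in $\mu$. Consider any elimination sequence from $\mu$ down to $\mu_W^*$; one exists because $R_{\mu_W^*}=R(\succ)$ and, I expect, elimination sequences can be extended from any stable matching. Every matching $\nu$ along this sequence with $\rho$ exposed must contain, by definition of $\prec$, all predecessors of $\rho$. Those predecessors lie in $X$ by closedness, and hence in $R_\mu$ by minimality of $\rho$. The idea is to commute $\rho$ backwards past the rotations eliminated between $\mu$ and $\nu$, none of which precede $\rho$. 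By \cref{lem: pair-in-one-rotation}, these rotations share no pairs with $\rho$, so the men of $\rho$ are still matched to their $w_j$ in $\mu$. It then remains to check that $\mathrm{succ}_\mu(m_j)=w_{j+1}$. This needs the explicit-predecessor characterization: any woman strictly between $w_j$ and $w_{j+1}$ who would accept $m_j$ in $\mu$ is controlled by an explicit predecessor of $\rho$, which has already been eliminated. Making this step rigorous is where I expect most of the effort to go.
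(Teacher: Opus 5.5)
The paper does not prove this statement; it quotes it from Irving and Leather. Your verification of the poset axioms, of the closedness of $R_\mu$, and of injectivity is correct. It follows directly from \cref{def:rotation-precedes} and the well-definedness of $R_\mu$ in \cref{lemma:unique-rotation-subset}.

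The real content of the theorem is surjectivity, and there the proposal stops at the decisive claim: that a $\preceq$-minimal $\rho\in X\setminus R_\mu$ is exposed in $\mu$. Your sketch for it has three problems.
\begin{itemize}
\item It contains a non sequitur. \cref{lem: pair-in-one-rotation} only says that no intermediate rotation moves $m_j$ \emph{off} $w_j$. It does not prevent an intermediate $\sigma$ containing $(m_j,x)$, with $x\succ_{m_j}w_j$, from moving $m_j$ \emph{onto} $w_j$, because $(m_j,w_j)$ is then produced by $\sigma$, not listed in it. Excluding this requires showing that such a $\sigma$ precedes $\rho$, i.e., that every elimination path to every matching exposing $\rho$ uses $\sigma$.
\item The successor condition is deferred to an ``explicit-predecessor characterization''. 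The paper states this only informally, and you would have to prove that it implies $\prec$.
\item The extension of elimination sequences from $\mu$ to $\mu_W^*$ (every stable matching other than $\mu_W^*$ exposes a rotation) is assumed rather than proved.
\end{itemize}

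One way to close the gap avoids both the detour through $\mu_W^*$ and any commuting of rotations. Compare $\mu$ directly with an arbitrary stable $\nu'$ in which $\rho$ is exposed, since that is exactly what \cref{def:rotation-precedes} quantifies over.

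First, a path-independence fact. By the quoted lemma of Gusfield and Irving (no stable partner of $m_i$ lies strictly between $w_i$ and $w_{i+1}$ on his list), and since eliminations yield stable matchings, each man moves through his stable partners one at a time along every elimination path, starting from $\mu_0(m)$. Together with \cref{lem: pair-in-one-rotation}, the rotation that moves $m$ off a stable partner $y$ is therefore the unique rotation containing $(m,y)$, independently of the path.

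Second, the men of $\rho$ are already in place in $\mu$.
\begin{itemize}
\item If $w_j\succ_{m_j}\mu(m_j)$, then $m_j$ already left $w_j$, so $\rho\in R_\mu$, a contradiction.
\item If $\mu(m_j)\succ_{m_j}w_j$, let $x$ be the stable partner of $m_j$ immediately above $w_j$. The rotation containing $(m_j,x)$ lies in $R_{\nu'}$ for every such $\nu'$, hence precedes $\rho$ and lies in $R_\mu$. This is impossible, because $m_j$ has not yet left $x$ in $\mu$.
\end{itemize}
Thus $\mu(m_j)=w_j$ for all $j$, and $w_{j+1}$ still prefers $m_j$ to $\mu(w_{j+1})=m_{j+1}$.

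Third, the successor condition. Suppose some $w$ strictly between $w_j$ and $w_{j+1}$ on $m_j$'s list had $m_j\succ_w\mu(w)=:m'$. Since $\mathrm{succ}_{\nu'}(m_j)=w_{j+1}$, we have $\nu'(w)\succ_w m_j\succ_w m'$ for every such $\nu'$. Now use the elementary opposition-of-interests lemma for two stable matchings: if a woman strictly prefers her partner in $\nu'$ to her partner in $\mu$, then her partner in $\mu$ strictly prefers $\mu$ to $\nu'$. You would need to prove this, as it is not in the paper. It gives $w\succ_{m'}\nu'(m')$. So the rotation containing $(m',w)$ lies in every such $R_{\nu'}$, precedes $\rho$, and lies in $R_\mu$, contradicting $\mu(m')=w$.

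Hence $\mathrm{succ}_\mu(m_j)=w_{j+1}$ for all $j$, and $\rho$ is exposed in $\mu$.
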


\noindent We refer to $\mathcal{R} =(R(\succ), \preceq)$ as the \emph{rotation poset} of the preference profile $\succ$. 
\begin{comment}
    
Furthermore, we have the following characterization related with the antichains induced in $\mathcal{R}$. Namely, consider an antichain $A \subseteq R(\succ)$ and define the set of rotations $A^{\star}=\{\pi \in R(\succ): \pi \preceq \rho \text{ for } \rho \in A\}$. As in \cite{irving1986complexity}, for any closed subset $X \subseteq R(\succ)$ there is a unique antichain $A\subseteq R(\succ)$ such that $A^{\star}=X$. Namely, it holds that $A = \max(C)$, where $\max(C)$ denotes the maximal elements of $C$, i.e. elements $\rho \in \max(C)$ for which there are no elements $\pi \in R(\succ)$ such that $\pi \succ \rho$. Therefore, we have the following bijection between the stable matchings of $\succ$ and the antichains of $\mathcal{R}$.
\begin{theorem}[\cite{irving1986complexity}]\label{thm:rotation-poset-antichains}
Let $\succ$ be a preference profile. There is a one-to-one correspondence between the stable matchings of the profile $\succ$ and the antichains of its rotation poset $\mathcal{R}$.
\end{theorem}
\end{comment}
\noindent Lastly, consider $P=(V,E)$ the directed acyclic graph (DAG) whose node set $V$ consists of one node for every rotation in $R(\succ)$ and a directed edge $(\pi, \rho) \in E$ from rotation $\pi$ to rotation $\rho$ if and only if $\pi \prec \rho$ as in \cref{def:rotation-precedes}. \cite{irving1987efficient} showed that a DAG whose transitive closure\footnote{The transitive closure of a DAG $G=(V,E)$ is another DAG $H=(V,F)$ such that $(u,v) \in F$ if and only if there is a path from $u$ to $v$ in G.} is $P$ can be computed in time $O(n^3)$. This runtime was improved to $O(n^2)$ by \citet{gusfield1987three}. Namely,
\begin{theorem}[\citet{gusfield1987three}]\label{thm:rotation-digraph}
    Let $\succ$ be a preference profile. A directed acyclic graph (DAG) $G(\succ)$ whose transitive closure is $P$ can be computed in $O(n^2)$ time.
\end{theorem}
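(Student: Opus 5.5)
The statement is a cited result of \citet{gusfield1987three}, so the plan is to reconstruct its proof from the rotation machinery above. The target is a sparse DAG on the rotations whose transitive closure equals the precedence relation of \cref{def:rotation-precedes}, built in $O(n^2)$ time. First compute $\mu_M^*$ and $\mu_W^*$ by \cref{alg:da} ($O(n^2)$). Then enumerate all rotations with the standard ``minimal differences'' procedure: start from $\mu_0=\mu_M^*$, repeatedly find an exposed rotation by following the pointers $\text{succ}_\mu(m)$, eliminate it, and continue until $\mu_W^*$ is reached. Each man's pointer only moves down his preference list, so the total work is $O(n^2)$. By \cref{lem: pair-in-one-rotation}, every pair appears in at most one rotation, so the total size of all rotations is $O(n^2)$. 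During this pass, record for each pair $(m,w)$ the rotation in which it appears (if any) and the rotation that eliminates it, i.e.\ the rotation that moves $m$ away from $w$.

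Next, add two types of edges. \emph{Type 1:} if $(m,w)$ appears in $\rho$, add an edge from $\pi$ to $\rho$, where $\pi$ is the unique rotation that moves $m$ to $w$. This is the rotation in which $m$ leaves his previous stable partner. \emph{Type 2:} for a pair $(m,w)$ with $w$ strictly between $m$'s partners $w_i$ and $w_{i+1}$ in $\rho$, let $\pi$ be the rotation in which $w$ receives a man she ranks below $m$ (i.e.\ the last rotation after which $w$'s partner is worse than $m$ in her list). Add an edge from $\rho$ to $\pi$. Scanning each man's list once and each woman's list once yields $O(n^2)$ edges overall, and the scan takes $O(n^2)$ time.

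For correctness I would prove two directions. \emph{Soundness:} each edge reflects a genuine precedence. For a Type 1 edge, $\rho$ cannot be exposed until $m$ is matched to $w$, and that requires $\pi$ to have been eliminated. For a Type 2 edge, if $\pi$ were eliminated before $\rho$, then $(m,w)$ would become a blocking pair. \emph{Completeness:} if $\pi\prec\rho$, then there is a directed path from $\pi$ to $\rho$. This direction is the main obstacle. I would argue by induction using \cref{lemma:unique-rotation-subset} and \cref{thm:rotation-poset}. Take any closed set $X$ that contains all graph-ancestors of $\rho$ but not $\pi$. Using the lemma on women between consecutive rotation partners, show that the stable matching corresponding to $X$ exposes $\rho$, and therefore $\rho$ is exposed without $\pi$ having been eliminated, so $\pi\not\prec\rho$. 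Establishing that exposure is the delicate step: one must show that the Type 1 and Type 2 edges capture exactly the conditions under which $\text{succ}_\mu(m_i)=w_{i+1}$ for every $i$.
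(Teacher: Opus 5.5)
The paper only cites \citet{gusfield1987three} for this result, so the comparison is with the standard construction there (see also \citep[Section~3.3]{gusfield1989stable}). Your enumeration step and your Type~1 edges match it. Your Type~2 rule, however, is wrong in which rotation it selects and in the direction of the edge, and your soundness argument for it is reversed.

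Let $\rho$ move $m=m_i$ from $w_i$ past $w$ to $w_{i+1}$. If $\mu$ is a stable matching exposing $\rho$, then $w$ must prefer $\mu(w)$ to $m$. Otherwise $\text{succ}_\mu(m)$ would be $w$ or a woman between $w_i$ and $w$, not $w_{i+1}$. Equivalently, eliminating $\rho$ while $w$ still ranks her partner below $m$ makes $(m,w)$ a blocking pair. Since women only move up their lists, the rotation $\pi^\ast$ that moves $w$ from her last stable partner $m'$ below $m$ to a partner above $m$ (the rotation containing $(m',w)$) must precede $\rho$. So the correct edge is $\pi^\ast\to\rho$. This is exactly the ``explicit predecessor'' relation recalled in \cref{app:background-stable-matchings}.

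Your $\pi$ is instead the rotation that moves $w$ \emph{to} $m'$. Eliminating it before $\rho$ creates no blocking pair: $w$ is then with $m'$, but $m$ is still at $w_i$ or higher, which he prefers to $w$. Worse, whenever your $\pi$ exists, your own Type~1 rule applied to $m'$ gives $\pi\prec\pi^\ast$. Together with $\pi^\ast\prec\rho$, this yields $\pi\prec\rho$ in $P$. Your edge $\rho\to\pi$ therefore puts the reverse relation into the transitive closure, so the closure cannot equal $P$.

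The same error sinks the completeness step you flag as delicate. For $\rho$ to be exposed once its graph ancestors are eliminated, every woman strictly between $w_i$ and $w_{i+1}$ must already rank her partner above $m_i$. That holds precisely because $\pi^\ast$ is an ancestor of $\rho$ through the corrected Type~2 edge. Type~1 edges then place $m_i$ at $w_i$ and $w_{i+1}$ at $m_{i+1}$, and by \cref{lem: pair-in-one-rotation} only $\rho$ itself moves them away, so $\text{succ}(m_i)=w_{i+1}$.

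Also, do not pick ``any closed set $X$''. Closedness in $P$ is what you are trying to prove, so that step is circular. Instead, show by induction that eliminating rotations along \emph{any} topological order of $G$ is valid, each rotation being exposed when reached by the argument just given. Then, if $\pi$ is not a $G$-ancestor of $\rho$, a topological order placing $\rho$ before $\pi$ gives a stable matching exposing $\rho$ with $\pi$ uneliminated, so $\pi\not\prec\rho$.

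With the corrected rule the $O(n^2)$ bound survives. Each pair is a Type~2 pair for at most one rotation. For each woman $w$, $\pi^\ast$ is found for all of her Type~2 pairs in one bottom-to-top scan of her list, remembering the most recent Type~1 label.
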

\noindent We refer to $G(\succ)$ as the \emph{rotation digraph}. Since the transitive closure of $G(\succ)$ is $P$, we note that the Hasse diagram as in \cref{def: hasse-diagram} of the rotation poset $\mathcal{R}$ is in fact a subgraph of $G(\succ)$. In particular, the Hasse diagram is a transitive reduction\footnote{The transitive reduction of a DAG $G=(V,E)$ is another DAG $H=(V,F)$ such that $(u,v) \in F$ if and only if $(u,v) \in E$ and there is no directed path from $u$ to $v$ which does not include the edge $(u,v)$.} of $G(\succ)$. In the following sections we make use of the Hasse diagram as input to our algorithm which we know that it can be constructed from a DAG in polynomial time, see \citet{aho1972transitive}\footnote{In particular, the transitive reduction of a directed graph is the same as the time to compute the transitive closure of a graph or to perform Boolean matrix multiplication.}. 

\section{Uniqueness of the Stable Matchings in the Proof of \cref{thm:any-ordinal-distortion-unbounded}}\label{app:early-lemma-proof-with-rotations}

In this section of the appendix, we argue formally that the two matchings identified in the proof of \cref{thm:any-ordinal-distortion-unbounded}, namely the man-optimal matching $\mu_W^*$ and the woman-optimal matching $\mu_M^*$, are indeed the only two stable matchings on the preference profile $\succ$ of \cref{fig: lower-bound}. To do that, we use the concept of rotations from \cref{def:rotation}, and generally the background presented in \cref{app:background-stable-matchings}. \medskip

\noindent Let $\mu_0 = \mu_{M}^*$ be the man-optimal matching and let $\mu_{z} = \mu_{W}^*$ be the woman-optimal matching on $\succ$, which are computed by the men-proposing and women-proposing variants of the \textsc{Deferred Acceptance} algorithm, respectively. Consider the rotation (\cref{def:rotation}) $\rho = ((m_2,w_4),(m_4,w_2))$. To see why this is a rotation, observe that among the women ranked below $w_4$ in $m_2$'s preference ranking, the most preferred one is $w_2$. Additionally, we have $\mu_0(w_2)=m_4$, and since $m_2 \succ_{w_2} m_4$, it follows that $\text{succ}_{\mu_0}(m_2)=w_2$. Similarly, the women ranked below $w_2$ in $m_4$'s preference ranking are $w_1$ and then $w_4$. We have $\mu_0(w_1)=m_1$ and $m_1 \succ_{w_1} m_4$, as well as $\mu_0(w_4) = m_2$ and $m_4 \succ_{w_2} m_2$. Therefore, we have that $\text{succ}_{\mu_0}(m_4)=w_4$. \medskip

\noindent Now notice that the elimination of $\rho$ corresponds to $m_2$ and $m_4$ switching partners, which results in matching $\mu_z$. In other words, it is possible to move from the man-optimal to the woman-optimal matching by eliminating a single rotation $\rho$. Since the pairs $(m_1,w_1)$ and $(m_3,w_3)$ are in both $\mu_0$ and $\mu_z$, it follows that $\rho$ is in fact the only rotation in the rotation poset $\mathcal{R}(\succ)$, whose closed subsets are thus $\emptyset$ and $\{\rho\}$. By \cref{thm:rotation-poset}, these are in one-to-one correspondence with the stable matchings on $\succ$, specifically the man-optimal and the woman-optimal stable matching, respectively. Hence, there does not exist any other stable matching on $\succ$.  

\section{The Proof of \cref{lem:cyclic-shift-profile-stable-matchings}.} \label{app:queries-missing}



\begin{proof}[Proof of \cref{lem:cyclic-shift-profile-stable-matchings}]
For ease of notation throughout the proof, we will define the operator $\oplus$, such that for $k\in\{0,1,\dots,n-1\}$,
\[
i\oplus k := ((i-1+k)\bmod n)+1.
\]
Using this notation for the cyclic shift profile $\succ$ specifically, we have that for $i = 1, \dots,n$,
\begin{itemize}
    \item[-] $m_i$ has preference ranking $w_i \succ_{m_i} w_{i\oplus 1} \succ_{m_i} w_{i\oplus 2} \succ_{m_i} \cdots \succ_{m_i} w_{i\oplus (n-1)}$ 
    \item[-] $w_i$ has preference ranking $m_{i\oplus 1} \succ_{w_i} m_{i\oplus 2} \succ_{w_i} \cdots \succ_{w_i} m_{i\oplus (n-1)} \succ_{w_i} m_i$
\end{itemize}
Thus, what we need to prove is that the set of stable matchings is exactly
\[
\mathcal M_s=\{\mu_0,\mu_1,\dots,\mu_{n-1}\},
\qquad
\mu_k := \{(m_i, w_{i\oplus k}) : i\in[n]\}.
\]
To this end, we leverage the structure of the rotations, see \cref{def:rotation}. Notice that, given the definition of $\mathcal{M}_s$ above, it can be verified that
\[
\mu_0 = \mu_M^* \ \text{ (man-optimal) } \ \text{ and } \ \mu_{n-1} = \mu_W^* \  \text{ (woman-optimal),}
\]
e.g., by applying the men-proposing and women-proposing versions of the \textsc{Deferred Acceptance} algorithm to $\succ$. Hence, specifically for $\mu_0$, we have that $\mu_0=\{(m_i,w_i): i\in[n]\}$.

\noindent For $t\in\{1,2,\dots,n-1\}$, define the rotation
\[
\rho_t := \bigl( (m_1,w_{1\oplus (t-1)}), (m_2,w_{2\oplus (t-1)}), \dots, (m_n,w_{n\oplus (t-1)}) \bigr).
\]
\noindent Note that $\rho_t$ is exactly the list of pairs of $\mu_{t-1}$. In order to prove the statement, we first show that rotation $\rho_t$ is exposed in $\mu_{t-1}$ and its elimination yields $\mu_t$ for each $t\in\{1,2,\dots,n-1\}$; we prove this by induction on $t$.

\medskip
\noindent\textit{(Base case) : Rotation $\rho_1$ is exposed in $\mu_0$ and eliminating it yields $\mu_1$, i.e $\mu_1 := \mu_0 \setminus \rho_1$}

\noindent Let $i\in[n]$. In $\mu_0$, man $m_i$ is matched to $w_{i\oplus 0}=w_i$. The next woman on $m_i$'s preference profile is $w_{i\oplus 1}$.
In $\mu_0$, woman $w_{i\oplus 1}$ is matched to $m_{i\oplus 1}$.
By the preference profile for women, $w_{i\oplus 1}$ ranks men in the following order
\[
m_{(i\oplus 1)\oplus 1} \succ m_{(i\oplus 1)\oplus 2} \succ \cdots \succ m_i \succ m_{i\oplus 1},
\]
so in particular $w_{i\oplus 1}$ prefers $m_i$ to $m_{i\oplus 1}$, i.e.
\[
m_i \succ_{w_{i\oplus 1}} m_{i\oplus 1}.
\]
Moreover, since $w_{i\oplus 1}$ is the immediate successor of $w_i$ on $m_i$'s list, it is the highest-ranked woman after $w_i$ in $m_i$'s list satisfying this property. \medskip

\noindent Thus, for every $i \in [n]$, the successor of $(m_i,w_i)$ considering \cref{def:rotation} is $(m_{i\oplus 1},w_{i\oplus 1})$, and therefore the rotation
\[
\rho_1=\bigl((m_1,w_1),(m_2,w_2),\dots,(m_n,w_n)\bigr)
\]
is exposed in $\mu_0$. Eliminating $\rho_1$ matches each $m_i$ to the successor woman $w_{i\oplus 1}$, hence produces exactly the matching
\[
\mu_1=\{(m_i,w_{i\oplus 1}): i\in[n]\}
\]
Finally, $\rho_1$ is the only rotation exposed in $\mu_0$ because $\rho_1$ contains all pairs of $\mu_0$ and by \cref{lem: pair-in-one-rotation} a pair can appear in at most one rotation.

\medskip
\noindent\textit{(Inductive step) : For each $t\in \{1,\dots,n-1\}$, rotation $\rho_t$ is exposed in $\mu_{t-1}$ and eliminating it yields $\mu_t$.}

\noindent Assume for some $t\in\{1,\dots,n-1\}$ that $\mu_{t-1}$ is a stable matching and equals
\[
\mu_{t-1}=\{(m_i,w_{i\oplus (t-1)}): i\in[n]\}.
\]

\noindent Let $i\in[n]$. Similarly as before, in $\mu_{t-1}$, man $m_i$ is matched to $w_{i\oplus (t-1)}$. On $m_i$'s preference profile, the woman immediately after $w_{i\oplus (t-1)}$ is $w_{i\oplus t}$.
In $\mu_{t-1}$, man $m_{i\oplus 1}$ is matched to woman 
\[
w_{(i\oplus 1)\oplus (t-1)} = w_{i\oplus t},
\]
And therefore, woman $w_{i\oplus t}$ is matched to man
\[
\mu_{t-1}(w_{i\oplus t}) = m_{i\oplus 1}.
\]
Now compare $m_i$ and $m_{i\oplus 1}$ in $w_{i\oplus t}$'s preferences. By the women preference rule, $w_{i\oplus t}$ ranks men in the following order
\[
m_{(i\oplus t)\oplus 1} \succ m_{(i\oplus t)\oplus 2} \succ \cdots \succ m_{i} \succ m_{i\oplus 1} \succ \cdots \succ m_{i\oplus t},
\]
so in particular
\[
m_i \succ_{w_{i\oplus t}} m_{i\oplus 1}.
\]
Thus $w_{i\oplus t}$ prefers $m_i$ to her current partner in $\mu_{t-1}$. \medskip

\noindent Since $w_{i\oplus t}$ is the immediate successor of $w_{i\oplus (t-1)}$ on $m_i$'s list, it is the highest-ranked woman after $\mu_{t-1}(m_i)$ with this property.
Therefore, by \cref{def:rotation}, the rotation
\[
\rho_t=\bigl((m_1,w_{1\oplus (t-1)}),\dots,(m_n,w_{n\oplus (t-1)})\bigr)
\]
is exposed in $\mu_{t-1}$ and eliminating $\rho_t$ matches each $m_i$ to the successor $w_{i\oplus t}$, yielding exactly
\[
\mu_t=\{(m_i,w_{i\oplus t}): i\in[n]\}.
\]
As in the base case, $\rho_t$ is the only exposed rotation in $\mu_{t-1}$ because it contains all pairs of $\mu_{t-1}$.

\medskip
\noindent Next we show that the rotation poset is a chain, i.e. $\rho_1 \prec \rho_2 \prec \cdots \prec \rho_{n-1}$, see \cref{def: chain}. Namely, from the induction above, eliminating $\rho_t$ produces all the pairs of $\mu_t$, i.e.\ it creates each pair $(m_i,w_{i\oplus t})$.
But $(m_i,w_{i\oplus t})$ is precisely the $i$th pair of $\rho_{t+1}$ (since $\rho_{t+1}$ consists of $(m_i,w_{i\oplus t})$ for all $i$).
Hence $\rho_t$ produces a pair that is eliminated by $\rho_{t+1}$, so $\rho_t$ must precede $\rho_{t+1}$ in the rotation poset.

\medskip
\noindent Finally, by \cref{thm:rotation-poset} the stable matchings of an instance are in bijection with the closed subsets of the rotation poset, i.e. starting from the man-optimal matching, eliminating precisely the rotations in a closed subset yields a stable matching, which is uniquely identified by this subset, see \cref{lemma:unique-rotation-subset}. In this instance, the rotation poset is a chain $\rho_1 \prec \cdots \prec \rho_{n-1}$, so the closed subsets are the following
\[
\varnothing,\ \{\rho_1\},\ \{\rho_1,\rho_2\},\ \dots,\ \{\rho_1,\dots,\rho_{n-1}\}.
\]
Eliminating the first $k$ rotations yields $\mu_k$ by the induction above. Therefore the set of stable matchings is exactly
\[
\mathcal M_s = \{\mu_0,\mu_1,\dots,\mu_{n-1}\},
\qquad
\mu_k=\{(m_i,w_{i\oplus k}): i\in[n]\}.
\]
This completes the proof.
\end{proof}

\section{Improved Distortion Bounds for Structured Preference Profiles.}\label{app:improved-distortion-with-structure}

In this section we provide improved upper bounds on the number of queries required to achieve a distortion of $1+\varepsilon$, for a given $\varepsilon >0$, when we impose a certain restriction to the preference profile $\succ$ under consideration. These restrictions will not be applied directly to $\succ$ itself, but rather to the rotation poset $\mathcal{R}$ of $\succ$, as defined in \cref{app:improved-distortion-with-structure}. To be more precise, we will consider rotation posets for which the Hasse diagram (\cref{def: hasse-diagram}) is a path. For these cases, we will prove a crisper bound of $O(\log n/ \varepsilon)$ on the number of queries. We remark that the study of (Hasse diagrams of) rotation posets of specific structure is not new in the literature: For paths in particular, \citet{chebolu2012complexity} showed that these rotation posets correspond to a natural restriction on the agents' preferences called \emph{$1$-attribute models}. Similarly, \emph{$2$-attribute models} were shown to correspond to stars \citep{bhatnagar2008sampling} and \emph{$k$-range models} were shown to correspond to Hasse diagrams of \emph{pathwidth} $O(k^2)$ \citep{cheng2023stable}, a notion that measures how ``far'' a graph is from being a path. We refer the reader to \citep[Section 2.4]{cheng2023stable} for a definition of these preference restrictions. 

In contrast to our approach in the \textsc{Stable-TSF} algorithm (\cref{alg:stsf}), our approach in this section is different: instead of performing binary search on the values of the agents directly, we perform a search process on the stable matchings which are induced by the closed subsets of the rotation poset. In particular, our queries are performed on a "matching level" and not on a "pair level", i.e. the query oracle responds immediately with either the social welfare $\SW_M(\mu | \bv)$ or $\SW_W(\mu | \bv)$ of a matching $\mu$ in the valuation profile $\bv$. Nevertheless, the query complexity of our algorithms is still deduced on a per agent fashion. Given the fact that the input preference profile induces a rotation poset whose Hasse diagram is a path, it allows our algorithm to search for certain matchings faster. In the following, we present the description of the algorithm (see \cref{alg:rotation-poset-path}) and the proof of its approximation guarantees (see \cref{thm:rotation-poset-paths}). Our algorithm uses a subroutine, coined \textsc{PosetSearch}, which we present first in \cref{alg:poset-search}. 

\begin{theorem}\label{thm:rotation-poset-paths}
    When the Hasse diagram of the rotation poset $\mathcal{R}$ is a path, \textsc{HassePath} (\cref{alg:rotation-poset-path}) returns a stable matching that achieves distortion $1+\varepsilon$ using at most $\frac{8\log_2(n)}{\varepsilon}$ queries per agent, for any $0 <\varepsilon \leq 1$.
\end{theorem}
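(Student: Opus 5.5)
The plan is to exploit the fact that when the Hasse diagram of $\mathcal{R}$ is a path $\rho_1 \prec \rho_2 \prec \cdots \prec \rho_z$, \cref{thm:rotation-poset} gives a totally ordered family of stable matchings $\mu_0=\mu_M^*, \mu_1, \ldots, \mu_z=\mu_W^*$, where $\mu_k$ eliminates the first $k$ rotations. Eliminating a rotation weakly worsens every man's partner and weakly improves every woman's partner. Hence $f(k):=\SW_M(\mu_k\mid\bv)$ is non-increasing in $k$ and $g(k):=\SW_W(\mu_k\mid\bv)$ is non-decreasing. Evaluating $f(k)$ costs exactly one query per man (his partner in $\mu_k$), and evaluating $g(k)$ costs one query per woman. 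Each rotation eliminates at least one stable pair and there are at most $n^2$ stable pairs, so $z \le n^2$. A binary search over $\{0,\ldots,z\}$ therefore takes at most $2\log_2 n$ evaluations, i.e., at most $2\log_2 n$ queries per agent on the relevant side. This is the \textsc{PosetSearch} primitive I would use.

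\textbf{Step 1 (scale).} Query $A:=f(0)$ and $B:=g(z)$. As in the proof of \cref{thm:1-MOWO-2-distortion}, we have $\max\{A,B\} \le \mathrm{OPT} \le A+B$, where $\mathrm{OPT}=\SW(\mu^*\mid\bv)$. Set $\delta := \varepsilon \max\{A,B\}/2 \le \varepsilon\,\mathrm{OPT}/2$.

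\textbf{Step 2 (breakpoints).} For $j=1,\ldots,\lceil A/\delta\rceil \le 2/\varepsilon$, binary search for the first index $k$ with $f(k) < A - j\delta$. Symmetrically, for $j=1,\ldots,\lceil B/\delta\rceil\le 2/\varepsilon$, binary search for the last index $k$ with $g(k) < B - j\delta$. Monotonicity of $f$ and $g$ makes these searches valid. Let $K$ be the set of all breakpoints found, together with $0$ and $z$. For every consecutive pair $a<b$ in $K$, both $f$ and $g$ vary by at most $\delta$ on $[a,b]$.

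\textbf{Step 3 (selection).} For each $k\in K$ we need both $f(k)$ and $g(k)$; we evaluate the missing one, or fold this into the searches, since every search already queries one side at the breakpoint. We then return $\mu_a$ for the $a\in K$ maximizing $f(a)+g(a)$; this is a fully revealed welfare. For the analysis, let $\mu^*=\mu_{k^*}$ with $k^*\in[a,b]$ for consecutive $a<b$ in $K$. Then $f(a)\ge f(k^*)$, and $g(a)\ge g(b)-\delta\ge g(k^*)-\delta$. Hence $\SW(\mu_a\mid\bv)\ge \mathrm{OPT}-\delta\ge (1-\varepsilon/2)\,\mathrm{OPT}$, and the distortion is at most $1/(1-\varepsilon/2)\le 1+\varepsilon$ for $\varepsilon\le 1$.

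\textbf{Query count.} There are at most $4/\varepsilon$ binary searches, each costing at most $2\log_2 n$ queries per agent, for a total of $8\log_2 n/\varepsilon$. The main obstacle is the bookkeeping needed to keep this within the stated constant. The extra queries for the scale values $A,B$ and for the cross-side values $g$ at $f$-breakpoints (and $f$ at $g$-breakpoints) must be absorbed. I would try to do this by noting that a man-side search only queries men, and a woman-side search only queries women. So each agent participates in only about $2/\varepsilon$ searches plus $O(1/\varepsilon)$ single cross evaluations, giving roughly $4\log_2 n/\varepsilon + O(1/\varepsilon)\le 8\log_2 n/\varepsilon$ per agent. A second point to verify carefully is the claim that $f$ and $g$ are monotone along the path; this follows from the definition of rotation elimination, where the women in $\rho$ improve and the men in $\rho$ worsen.
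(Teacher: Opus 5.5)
\paragraph{The gap: Step~2's closing claim and Step~3 fail.} At the end of Step~2 you claim that $f$ and $g$ vary by at most $\delta$ on every \emph{closed} interval $[a,b]$ between consecutive points of $K$. Step~3 relies on this to get $g(a)\ge g(b)-\delta$, and the claim is false.

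An $f$-breakpoint is the \emph{first} index below a threshold. So when $b$ is an $f$-breakpoint, $f$ can drop by an arbitrary amount from $b-1$ to $b$. A $g$-breakpoint is the \emph{last} index below a threshold. So when $a$ is a $g$-breakpoint, $g$ can jump by an arbitrary amount from $a$ to $a+1$. If both happen and $a<k^*<b$, neither candidate approximates $\mu_{k^*}$: $\mu_a$ loses on the women's side and $\mu_b$ loses on the men's side. The algorithm itself fails, not only the analysis.

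Concretely, take the cyclic shift profile with $n=3$. By \cref{lem:cyclic-shift-profile-stable-matchings} its stable matchings are $\mu_0,\mu_1,\mu_2$, with rotation poset the path $\rho_1\prec\rho_2$. Let every agent value its top two choices at $1/3$ and its last choice at $0$. Then $f=(1,1,0)$ and $g=(0,1,1)$, so $A=B=1$. Every $f$-breakpoint equals $2$ and every $g$-breakpoint equals $0$, giving $K=\{0,2\}$. You return welfare $1$ while $\SW(\mu_1\mid\bv)=2$, for every $\varepsilon\in(0,1]$.

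\paragraph{How to repair it, and how this compares with the paper.} What is missing is the direction of the search, which the paper's \textsc{PosetSearch} gets right. It returns the \emph{largest} index whose men's welfare is still at least $\SW_M(\mu[i]\mid\bv)/(1+\varepsilon)$. The optimum therefore precedes the returned matching in the chain, so that matching is at least as good for the women and within a factor $1+\varepsilon$ for the men (the paper's first case).

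In your additive setting the fix is as follows. For $j=0,1,\ldots$ with $A-j\delta>0$, find the last index $p_j$ with $f(p_j)\ge A-j\delta$, evaluate $\SW(\mu_{p_j}\mid\bv)$ in full, and also include $\mu_z$. Take the threshold $A-j\delta$ lying in $(f(k^*)-\delta,f(k^*)]$. Then $p_j\ge k^*$, so $g(p_j)\ge g(k^*)$, and $f(p_j)\ge A-j\delta>f(k^*)-\delta$. The matching $\mu_z$ covers the case $f(k^*)<\delta$. This gives $\SW\ge \mathrm{OPT}-\delta$ using men-side searches only, and your query budget then fits easily.

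Even repaired, however, you are analysing your own algorithm, while the statement is about \textsc{HassePath}. That algorithm chains multiplicative targets from both $\mu_0$ and $\mu_z$ for $k=2/\log_2(1+\varepsilon)-2$ rounds. It needs a separate geometric-decay argument (the paper's third case) for an optimum lying below all targets on both sides. Your additive grid at scale $\varepsilon\max\{A,B\}/2$ avoids that case, but it proves that some algorithm achieves the bound, not the stated guarantee for \textsc{HassePath}.
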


\begin{proof}
    First consider that the rotation poset $\mathcal{R}$ is a chain and therefore the closed subsets corresponding to a stable matching $\mu_i$ is $\{\rho_1, \rho_2, \dots, \rho_i\}$ and $\mu_{i+1} := \mu_i \setminus \rho_{i+1}$ for $i = 1, \dots, z$. Let $\mathcal{M}=\{\mu_0, \dots, \mu_z\}$ be the stable matching sequence and $\mathcal{M}^r$ its reverse. Before proceeding with the rest of the proof we point out that the vectors $\SW_M(\mathcal{M}) = [\SW_M(\mu_0), \dots, \SW_M(\mu_s)]$ and $\SW_W(\mathcal{M}) = [\SW_W(\mu_s), \dots, \SW_W(\mu_0)]$ are non-increasing. To see this consider the fact by \cref{def:rotation} that whenever we eliminate an exposed rotation each man gets paired with a less preferred woman and each woman gets paired with a more preferred man and $\mu_{i+1} := \mu_i \setminus \rho_{i+1}$ holds for $i = 1, \dots, z$. \medskip
    
    \noindent Next, let $\bv$ be the valuation profile. Let $\mu_A$ be the stable matching returned by \textsc{HassePath} (\cref{alg:rotation-poset-path}) and $\mu_O$ the stable matching with optimal social welfare in $\bv$. We consider three cases between the welfare of the optimal stable matching $\mu_O$ and the matchings set $\mu_R^M,\mu_R^W$ constructed by \textsc{HassePath} (\cref{alg:rotation-poset-path}). Let $\mu^M[i]$ for $i \in [k]$ denote the target matching found and added to $\mu_R^M$ in the $i^{\text{th}}$ iteration of the main loop of \textsc{HassePath} (\cref{alg:rotation-poset-path}). Respectively, define $\mu^W[i]$ for $i \in [k]$ for the matching added to $\mu_R^W$. \medskip
    
    \noindent Then, the first case is that there exists $i \in [k-1]$ and $\mu[i], \mu[i+1] \in \mu_R^M$, such that 
    \[
    \SW_M(\mu[i+1] \mid \bv) \leq \SW_M(\mu_O \mid \bv) \leq \SW_M(\mu[i] \mid \bv)
    \]
    The second case is that there exists $i \in [k-1]$ and $\mu[i], \mu[i+1] \in \mu_R^W$, such that 
    \[
    \SW_W(\mu[i+1] \mid \bv) \leq \SW_W(\mu_O \mid \bv) \leq \SW_W(\mu[i] \mid \bv)
    \]
    In the last case, the following hold: 
    $$\SW_M(\mu_O \mid \bv) \leq \SW_M(\mu[k] \mid \bv) \ \text{ with } \ \mu[k] \in \mu_R^M, \ \text{ and}$$ 
    $$\SW_W(\mu_O \mid \bv) \leq \SW_W(\mu[k] \mid \bv) \ \text{ with } \ \mu[k] \in \mu_R^W.$$ 
    
    In any of the three cases, we will show that $\SW(\mu_O \mid \bv) \leq (1 + \varepsilon) \cdot \SW(\mu_A \mid \bv)$.

\begin{algorithm}[h]
\caption{\textsc{PosetSearch}($\mathcal{M}$, $\mu_t$, $B$)}
\label{alg:poset-search}

\KwIn{Matchings $\mathcal{M}$, Current target matching $\mu_t$}
\KwOut{Next target matching $\mu_t'$ or \texttt{None}}

\BlankLine
\tcp{Query oracle \textsc{QuerySW}$(i)$, with input index $i$, returns $\SW_M(\mu_i | \bv)$ or $\SW_W(\mu_i | \bv)$ depending on the type of the query.}
\Fn{\QuerySW{$i$}}{
    \If{$B = \texttt{Man}$}{\Return $\SW_M(\mu_i| \bv)$}
    \If{$B = \texttt{Woman}$}{\Return $\SW_W(\mu_i| \bv)$}
}
\tcp{Social welfare $\SW(\mu_t | \bv)$ of the target matching $\mu_t$.}
$T \gets \QuerySW{t}$, \ \ $\theta \gets \dfrac{T}{1+\varepsilon}$ 

\tcp{Binary search the social welfare of the matchings after the target one.}
$L \gets t$,\ \  $R \gets |\mathcal{M}|$, \ \ $\ell^\star \gets \varnothing$ \\

\While{$L \le R$}{
    $m \gets \left\lfloor \dfrac{L+R}{2}\right\rfloor$ \\
    $S \gets \QuerySW{m}$ \\
    
    \tcp{Find the matching $\mu_{\ell^\star}$ with the largest index such that $\SW(\mu_t | \bv) \leq (1 + \varepsilon) \cdot \SW(\mu_{\ell^\star} | \bv)$ holds.} 
    \If{$S \ge \theta$}{
        $\ell^\star \gets m$ \\
        $L \gets m+1$ \\
    }
    \Else{
        $R \gets m-1$ \\
    }
}
\tcp{If $\mu_{\ell^\star}$ is found return it, else return as target matching the next one in $\mathcal{M}$.}
\If{$\ell^\star \neq \varnothing$}{
    \Return $\mu_{\ell^\star}$ \\
}
\Else{
    \Return $\mu_{t+1}$ \\
}

\end{algorithm}

    \begin{itemize}[leftmargin=*]
        \item[-] \textit{(First case) :} We further distinguish two cases based on the execution of \textsc{PosetSearch} (\cref{alg:poset-search}). For the first subcase, assume that the target matching $\mu[i + 1]$ was found by the binary search process of \textsc{PosetSearch} (\cref{alg:poset-search}). In that case it holds by definition of \textsc{PosetSearch} (\cref{alg:poset-search}) that, 
        \[
        \SW_M(\mu_O \mid \bv) \leq \SW_M(\mu[i] \mid \bv) \leq (1 + \varepsilon) \cdot \SW_M(\mu[i+1] \mid \bv)
        \]
        On the other hand, since $\SW_M(\mu_O \mid \bv) \geq \SW_M(\mu[i+1] \mid \bv)$ holds, we have that $\SW_W(\mu_O \mid \bv) \leq \SW_W(\mu[i+1] \mid \bv)$ by the first observation. Adding these two inequalities we get
        \begin{align*}
            \SW(\mu_O \mid \bv) &= \SW_M(\mu_O \mid \bv) + \SW_W(\mu_O \mid \bv) \\
                             &\leq (1 + \varepsilon) \cdot \SW_M(\mu[i+1] \mid \bv) + \SW_W(\mu[i + 1] \mid \bv) \\
                             & \leq (1 + \varepsilon) \cdot \bigg(\SW_M(\mu[i+1] \mid \bv) + \SW_W(\mu[i+1] \mid \bv)\bigg) \\
                             &= (1 + \varepsilon) \cdot \SW(\mu[i+1] \mid \bv) \\
                             &\leq (1 + \varepsilon) \cdot \SW(\mu_A \mid \bv)
        \end{align*}
        In the last inequality we used the fact that \textsc{HassePath} (\cref{alg:rotation-poset-path}) returns the target matching with the highest total welfare. \medskip
        
        \noindent In the second subcase, the target matching $\mu[i + 1]$ was not found by the binary search of \textsc{PosetSearch} (\cref{alg:poset-search}), but was returned as the immediate next matching of $\mu[i]$ in $\mathcal{M}$. Since $\SW_M(\mu[z+1] \mid \bv) \leq \SW_M(\mu_O \mid \bv) \leq \SW_M(\mu[z] \mid \bv)$ holds and there is no matching $\mu$ in $\mathcal{M}$ such that $\SW_M(\mu[z+1] \mid \bv) < \SW_M(\mu \mid \bv) < \SW_M(\mu[z] \mid \bv)$ is true, it must be the case that $\mu_O = \mu[z]$ or $\mu_O = \mu[z+1]$ and therefore $\SW(\mu_O \mid \bv) \leq \SW(\mu_A \mid \bv)$, since \textsc{HassePath} (\cref{alg:rotation-poset-path}) returns the target matching with the highest total welfare. \medskip 
        
        \item[-] \textit{(Second case) :} The case for the women is symmetrical, by applying the same arguments on the reversed array $\SW_W(\mathcal{M}^r)$.
        \item[-] \textit{(Third case) :} We have established that for the last target matching $\mu[k] \in \mu_R^M$ that  \textsc{HassePath} (\cref{alg:rotation-poset-path}) returns it either holds that $\SW_M(\mu[k-1] \mid \bv) \leq (1 + \varepsilon) \cdot \SW_M(\mu[k] \mid \bv)$ for $\mu[k-1] \in \mu_R^M$ and $\mu[k]$ is the matching with the largest index in $\mathcal{M}$ satisfying this. Therefore $\SW_M(\mu[k-2] \mid \bv) > (1 + \varepsilon) \cdot \SW_M(\mu[k] \mid \bv)$ is also true. Or on the other hand $\mu[k]$ was not found by the binary search process of \textsc{HassePath} (\cref{alg:rotation-poset-path}) and was returned as the next matching of $\mu[k-1]$ in $\mathcal{M}$, therefore it holds that $\SW_M(\mu[k-1] \mid \bv) > (1 + \varepsilon) \cdot \SW_M(\mu[k] \mid \bv)$. \medskip 

        \noindent In either case, applying this property iteratively, we get that for the first target matching $\mu[1] \in \mu_R^M$ the following bound holds
        $$\SW_M(\mu[1] \mid \bv) > (1 + \varepsilon)^{k/2} \cdot \SW_M(\mu[k] \mid \bv)$$
        Similarly, we derive the same bound for the first target matching $\mu[1] \in \mu_R^W$ and we have that
        \begin{align*}
            \SW_M(\mu[k] \mid \bv) + \SW_W(\mu[k] \mid \bv) &< \frac{1}{(1 + \varepsilon)^{k/2}} \cdot \bigg(\SW_M(\mu[1] \mid \bv) + \SW_W(\mu[1] \mid \bv)\bigg) \\
            &\leq \frac{2}{(1 + \varepsilon)^{k/2}} \cdot \max\bigg\{\SW_M(\mu[1] \mid \bv), \SW_W(\mu[1] \mid \bv)\bigg\} \\
            &\leq \frac{2}{(1 + \varepsilon)^{k/2}} \cdot \SW(\mu_A \mid \bv)
        \end{align*}
        Since $k = \frac{2}{\log_2(1 + \varepsilon)} - 2$, we get that $\SW(\mu_O \mid \bv) \leq \SW(\mu[k]\mid \bv)\leq (1 + \varepsilon) \cdot \SW(\mu_A \mid \bv)$.
    \end{itemize}
    
    \noindent Lastly, we know from \cref{lem: pair-in-one-rotation} that a pair can appear in at most one rotation and since any rotation contains at least two pairs by definition, there are at most $n^2$ rotations in any given rotation poset $\mathcal{R}$. Therefore the size of the stable matching sequence $\mathcal{M}$ is at most $n^2$. \textsc{HassePath} (\cref{alg:rotation-poset-path}) performs $k$ binary searches in an array of size at most $n^2$ an each query on a matching in $\mathcal{M}$ contributes one query per agent. Furthermore in the last step the algorithm performs an extra query on the opposite side of the target matchings in $\mu_R^W$ and $\mu_R^M$. Therefore for $0 < \varepsilon \leq 1$, the query complexity $\mathcal{Q_A}$ of \textsc{HassePath} (\cref{alg:rotation-poset-path}) per agent (man or woman) is given as
    \[
    \mathcal{Q_A} = 2k\cdot \log_2(n^2) \leq 8\frac{\log_2(n)}{\log_2(1+\varepsilon)} \leq 8\frac{\log_2(n)}{\varepsilon}
    \]
    In the first inequality we substitute with an upper bound on $k$ and the second one stems from the fact that $\log_2(1+ \varepsilon) \geq \varepsilon$ for $0 < \varepsilon \leq 1$.
\end{proof}

\begin{algorithm}[h]
\caption{\textsc{HassePath}($G$, $\mu_0$, $\varepsilon$)}
\label{alg:rotation-poset-path}

\KwIn{
Hasse Diagram $G = (V,E)$, First matching $\mu_0$, Accuracy $\varepsilon > 0$
}
\KwOut{Matching $\mu$ with distortion at most $1+\varepsilon$}
\BlankLine
\tcp{Query oracle \textsc{QuerySW}$(\mu)$, with input matching $\mu$, returns $\SW_M(\mu| \bv)$ or $\SW_W(\mu | \bv)$ depending on the type of the query.}
\Fn{\QuerySW{$\mu,B$}}{
    \If{$B = \texttt{Man}$}{\Return $\SW_M(\mu | \bv)$}
    \If{$B = \texttt{Woman}$}{\Return $\SW_W(\mu | \bv)$}
}
\BlankLine
Let $\mathcal{M} = (\mu_0,\mu_1,\ldots,\mu_z)$ be the stable matchings induced by sequentially eliminating rotations:
\[
\mu_i := \mu_{i-1} \setminus \rho_i \quad \text{for } i=1,\ldots,z\;
\]

\tcp{Initialize first target matchings for the men and women respectively.}
Initialize $\mu_t^M \leftarrow \mu_0$, \quad $\mu_t^W \leftarrow \mu_z$\\

\tcp{Final target matchings array.}
Initialize $\mu_R^M \leftarrow [\mu_0]$, \quad $\mu_R^W \leftarrow [\mu_z]$\\

\tcp{Social welfare array of the final target matchings.}
Initialize $\SW_R \leftarrow [\,]$\\

Initialize $\ell \leftarrow 1$\\

\tcp{Number of target matchings to find.}
Let $k \gets \frac{2}{\log_2(1 + \varepsilon)} - 2$

\BlankLine
\tcp{Main loop for finding target matchings}
\While{$\ell \le k$}{
    $\mu_t^M \leftarrow \textsc{PosetSearch}(\mathcal{M},\mu_t^M, \texttt{Man})$\\
    \tcp{For performing binary search on the queries from the women preferences use the reverse array $\mathcal{M}^r$ of the matchings.}
    $\mu_t^W \leftarrow \textsc{PosetSearch}(\mathcal{M}^r,\mu_t^W, \texttt{Woman})$\\

    $\mu_R^M.\text{append}(\mu_t^M)$\\
    $\mu_R^W.\text{append}(\mu_t^W)$\\

    $\ell \leftarrow \ell + 1$\\
}
\BlankLine

\tcp{Query the opposite unknown side of the final matchings to calculate their total social welfare.}
\ForEach{$\mu \in \mu_R^M$}{
    $\SW_W(\mu | \bv)$ = \QuerySW{$\mu$,\texttt{Woman}} \\
    $\SW(\mu | \bv) \gets \SW_M(\mu | \bv) + \SW_W(\mu | \bv)$ \;
    $\SW_R \gets \SW_R \cup \{\SW(\mu | \bv)\}$\;
}

\ForEach{$\mu \in \mu_R^W$}{
    $\SW_M(\mu)$ = \QuerySW{$\mu$,\texttt{Man}} \;
    $\SW(\mu | \bv) \gets \SW_M(\mu | \bv) + \SW_W(\mu | \bv)$ \;
    $\SW_R \gets \SW_R \cup \{\SW(\mu | \bv)\}$\;
}

\tcp{Return the matching with the highest social welfare.}
\Return $\arg\max_{\mu}\SW_R$\;

\end{algorithm}

\end{document}